\newcommand{\und}[1]{\underline{#1}}
\newcommand{\mb}[1]{\boldsymbol{#1}}
\newcommand{\mbb}[1]{\mathbb{#1}}
\newcommand{\mcal}[1]{\mathcal{#1}}
\newcommand{\mscr}[1]{\mathscr{#1}}
\newcommand{\mfrak}[1]{\mathfrak{#1}}
\newcommand{\ip}[2]{\left\langle #1 , #2 \right\rangle}
\newtheorem{definition}{Definition}[section]
\newtheorem{theorem}{Theorem}[section]
\newtheorem{lemma}{Lemma}[section]
\title{Functional Factor Modeling of Brain Connectivity}
\author[1]{Kyle Stanley}
\author[2]{Nicole Lazar}
\author[3]{Matthew Reimherr}
\affil[1]{Department of Statistics, Pennsylvania State University\\
\texttt{kms8227@psu.edu}}
\affil[2]{Department of Statistics and Huck Institutes of the Life Sciences, Pennsylvania State University\\
\texttt{nfl5182@psu.edu}}
\affil[3]{Department of Statistics, Pennsylvania State University, and Amazon Science\\
\texttt{mlr36@psu.edu}}
\date{}
\begin{document}

\maketitle

\begin{abstract}
Many fMRI analyses examine functional connectivity, or statistical dependencies among remote brain regions. Yet popular methods for studying whole-brain functional connectivity often yield results that are difficult to interpret. Factor analysis offers a natural framework in which to study such dependencies, particularly given its emphasis on interpretability. However, multivariate factor models break down when applied to functional and spatiotemporal data, like fMRI. We present a factor model for discretely-observed multidimensional functional data that is well-suited to the study of functional connectivity. Unlike classical factor models which decompose a multivariate observation into a ``common'' term that captures covariance between observed variables and an uncorrelated ``idiosyncratic'' term that captures variance unique to each observed variable, our model decomposes a functional observation into two uncorrelated components: a ``global'' term that captures long-range dependencies and a ``local'' term that captures short-range dependencies. We show that if the global covariance is smooth with finite rank and the local covariance is banded with potentially infinite rank, then this decomposition is identifiable. Under these conditions, recovery of the global covariance amounts to rank-constrained matrix completion, which we exploit to formulate consistent loading estimators. We study these estimators, and their more interpretable post-processed counterparts, through simulations, then use our approach to uncover a rich covariance structure in a collection of resting-state fMRI scans.
\end{abstract}

\noindent \textbf{Keywords:} functional data analysis, factor analysis, fMRI, functional connectivity, matrix completion

\section{Introduction}

There exist a number of imaging modalities that allow researchers to measure the physiological changes that accompany neuronal activation. Among the most widely used techniques is functional magnetic resonance imaging (fMRI). FMRI is based on the haemodynamic response wherein blood delivers oxygen to active brain regions at a greater rate than to inactive regions. Blood-oxygen-level-dependent (BOLD) imaging uses magnetic fields to measure relative oxygenation levels across the brain, allowing researchers to use BOLD signal as a proxy for neuronal activation. During an fMRI experiment, a subject is placed in a scanner that collects a temporal sequence of three-dimensional brain images. Each image is partitioned into a grid of three-dimensional volume elements, called voxels, and contains BOLD measurements at each voxel \citep{lindquist-2008}. The resulting data are both large and complex. The data from a single session typically contains hundreds of brain images, each with BOLD observations at more than 100,000 voxels; large fMRI datasets include hundreds of such scans. Moreover, these data exhibit spatiotemporal dependencies that complicate analysis. 

A common goal of fMRI analyses is to characterize \textit{functional connectivity}, or the statistical dependencies among remote neurophysiological events \citep{friston-2011}. This is of particular interest in resting-state fMRI, wherein subjects do not perform an explicit task while in the scanner. In this work, we study functional connectivity in resting-state data from the PIOP1 dataset of the Amsterdam Open MRI Collection (AOMIC) \citep*{snoek-et-al-2021}, which contain six-minute scans collected from 210 subjects at rest. To do so, we develop a new approach that integrates two fields of statistics: factor analysis and functional data analysis.  

Techniques used to study resting-state functional connectivity fall in two categories: seed methods and whole-brain methods. To conduct a seed-based connectivity analysis, researchers first select a seed voxel (or region), then correlate the time series of that seed with the time series of all other voxels (or regions), resulting in a functional connectivity map \citep*{biswal-etal-1995}. Although readily interpretable, these functional connectivity maps only describe dependencies of the selected seed region \citep{heuvel-pol-2010}. Principal component analysis (PCA) is a rudimentary whole-brain approach to functional connectivity that finds the collection of orthogonal spatial maps (a.k.a., principal components) that maximally explains the variation -- both between- and within-voxel -- in a set of brain images \citep{friston-etal-1993}. The orthogonality constraint, however, means that the low-order subspace is expressed by dense spatial maps containing many high-magnitude weights that complicate interpretation. Independent component analysis (ICA; \citealp{beckmann-smith-2004}), perhaps the most popular whole-brain approach, offers a more parsimonious representation of this subspace. After computing the principal spatial components and associated temporal components from an fMRI scan, ICA rotates these components until the spatial set is maximally independent. Although the resulting spatial maps are typically more parsimonious than those of PCA, ICA may fragment broad areas of activation into multiple maps with highly correlated time courses in its pursuit to maximize spatial independence \citep{friston-1998}. This means ICA findings are not robust to dimension overspecification. 

To address the shortcomings of the above approaches, we develop a method based on exploratory factor analysis. The goal of factor analysis is to describe the covariance relationships between many observed variables using a few unobserved random quantities. Classical factor analyses use the \textit{orthogonal factor model (OFM)}, which posits that $M$ zero-mean observed quantities $\und{X} \in \mbb{R}^M$ having covariance $\mb{C} \in \mbb{R}^{M\times M}$ are the sum of a \textit{common component}, given by linear combinations of $K < M$ unobserved uncorrelated factors $\und{F} \in \mbb{R}^{K}$ with zero-mean and unit variance, and an \textit{idiosyncratic component}, given by a vector of uncorrelated errors $\und{\epsilon} \in \mbb{R}^M$ with zero-mean. The loading matrix $\mb{L} \in \mbb{R}^{M\times K}$ specifies the coefficients of these linear combinations in the common component:
\begin{align*}
    \und{X} = \mb{L}\und{F} + \und{\epsilon}.
\end{align*}
If $\mb{D} \in \mbb{R}^{M\times M}$ denotes the diagonal covariance matrix of $\und{\epsilon}$, then $\mb{C}$ decomposes as
\begin{align*}
    \mb{C} = \mb{L}\mb{L}^T + \mb{D}.
\end{align*}
The OFM is identifiable only up to an orthogonal rotation since, for $\und{F}^* = \mb{R}^T\und{F}$ and $\mb{L}^* = \mb{L}\mb{R}$, where $\mb{R} \in \mbb{R}^{K \times K}$ is a rotation matrix, we have the equivalences
\begin{align*}
    \und{X} = \mb{L}\und{F} + \und{\epsilon} = \mb{L}^*\und{F}^* + \und{\epsilon} \hspace{1em} \text{and} \hspace{1em} \mb{C} = \mb{L}\mb{L}^T + \mb{D} = \mb{L}^*(\mb{L}^*)^T + \mb{D}.
\end{align*}
Practitioners address this rotation problem by choosing from a suite of factor rotation algorithms, like quartimax \citep{neuhaus-wrigley-1954} and varimax \citep{kaiser-1958} rotation, that bring the loading matrix closer to a so-called \textit{simple structure}, where in (i) each variable has a high loading on a single factor but near-zero loadings on other factors, and (ii) each factor is described by only a few variables with high loadings on that factor while other variables have near-zero loadings on that factor. This structure provides a parsimonious representation of the common component and one can opt for a rotation procedure that is robust to dimension overspecification. 

To grasp the value of factor analysis in functional connectivity, consider a naive application wherein an OFM is fit to a vector of voxel-wise BOLD observations. Used in conjunction with factor rotation, the OFM would provide $K$ simple spatial maps amenable to interpretation. This sparse and low-dimensional representation of the entire brain makes factor analysis an attractive alternative to seed and ICA methods which, respectively, are not whole-brain and lack such interpretable results. However, there is a problem; the OFM crucially assumes that its errors are uncorrelated, a condition violated in this application given that errors for nearby voxels will be correlated. Such autocorrelation may dissipate as $K$ grows, but this has two undesirable effects. First, the parsimony of the low-order model is lost. Second, a high-order common component will capture variation at small scales -- possibly even between adjacent voxels -- an outcome at odds with the purpose of functional connectivity studies: to study correlations between distant brain regions. In a more appropriate factor model, the common component would be low-rank and capture only large-scale variation while its error term would permit variation at a smaller scale. 

To develop this better suited factor model, it helps to view fMRI data in the \textit{functional data analysis (FDA)} framework. Data are said to be functional if we can naturally assume that they arise from some smooth curve. FDA is then the statistical analysis of sample curves in such data \citep{ramsay-silverman-2005, wang-et-al-2016, kokoszka-reimherr-2017}. Since it is reasonable to view BOLD observations of a brain image as noisy realizations from some smooth function over a three-dimensional domain, fMRI data is an example of multidimensional functional data. Though not widely adopted by the neuroimaging community, fMRI has been studied using FDA tools such as functional principal component analysis (FPCA; \citealp{viviani-etal-2005}), functional regression \citep{reiss-etal-2015, reiss-etal-2017}, and functional graphical models \citep{li-solea-2018}. To adapt factor analytic techniques to fMRI data, we can extend classical factor methods to this functional setting. 

Early extensions of the OFM come from studies of large-dimensional econometric panel data, in which $M$ time series (e.g., one time series per asset) are observed at the same $n$ points in time. When $M$ is large, traditionally reliable time series tools (e.g., vector autoregressive models) require estimation of too many parameters. Factor models are a practical alternative as they provide more parsimonious parameterizations. These factor models should, however, possess two characteristics to accommodate properties inherent to economic time series. First, they ought to be dynamic as problems involving time series are dynamic in nature. Second, they should permit correlated errors as uncorrelatedness is often unrealistic with large $M$. \cite{sargent-sims-1977} and \cite{geweke-1977} innovated on the OFM by making latent factors dynamic in their \textit{dynamic factor model (DFM)}. However, these early DFMs were \textit{exact} in that they assumed uncorrelated errors. Factor models did not permit such error correlation until \cite{chamberlain-rothschild-1983} introduced their \textit{approximate factor model (AFM)}, which required only that the largest eigenvalue of the error covariance be bounded. Although this AFM was static, much subsequent study focused on factor models that were both approximate and dynamic \citep{forni-et-al-2000, stock-watson-2002, bai-2003}.  

Factor models are well-studied in large dimensions, but the literature on infinite-dimensional factor analysis is scarce. \cite{hays-et-al-2012} were the first to merge ideas from FDA with factor analysis in their \textit{dynamic functional factor model (DFFM)} which they used to forecast yield curves. \cite{liebl-2013} proposed a different DFFM which, unlike the model of \cite{hays-et-al-2012}, placed no a priori constraints on the latent process of factor scores. \cite{kowal-et-al-2017} presented a Bayesian DFFM for modeling multivariate dependent functional data. None of these early functional factor models permit any degree of error dependence, a structure that certainly arises when representing potentially infinite-rank observations with low-rank objects. \cite{otto-salish-2022} were the first to address this issue using their approximate DFFM wherein the error covariance is left unrestricted. However, as this is a model for \textit{completely}-observed functional data, one must either observe these data at infinite resolution (a practical impossibility) or obtain functional representations of discretely-observed sample curves  via spline smoothing \citep{ramsay-silverman-2005} prior to model estimation.

Though spline smoothing (and other forms of pre-smoothing) may be an effective first step in many analyses of functional data, it is not in factor analysis. Analyses that begin by smoothing functional data assume that a functional observation $X(s)$ is the sum of two uncorrelated components: a smooth signal $Y(s)$ and a white noise term $\epsilon(s)$ whose covariance kernel is zero off the diagonal (or some infinitesimally narrow band). When observed at discrete points, 
\begin{align*}
    X(s_m) = Y(s_m) + \epsilon(s_m), \ m = 1, \dots, M,
\end{align*}
the smoothed curve $\Tilde{X}(s)$ is defined as
\begin{equation}
\label{eqn:sm-obj}
    \Tilde{X}(s) = \arg \min_{f \in C^2[0,1]} \left\{ \sum_{m=1}^M (f(s_m) - X(s_m))^2 + \alpha \norm{\partial_s^2 f}_{L^2} \right\}
\end{equation}
When $\epsilon(s)$ is white noise, $\Tilde{X}(s)$ is a good approximation to $Y(s)$. If, instead, the signal $Y(s)$ and the error $\epsilon(s)$ are defined as in our theorized factor model -- the former capturing only large-scale variation and the latter permitting small-scale variation -- then minimizing the ``uncorrelated'' objective of (\ref{eqn:sm-obj}) yields a smoothed curve $\Tilde{X}(s)$ contaminated by short-range variation of the error term, making it a poor approximation to $Y(s)$. Using $\Tilde{X}(s)$ as a stand-in for $Y(s)$ in subsequent analyses is ill-advised. Thus, a factor model for functional data should not presume completely-observed sample curves, but rather deal directly with discretely-observed data. 

As it turns out, an orthogonal factor structure indeed underlies discretely-observed functional data. \cite{hormann-jammoul-2020} formalized this observation, noting, however, that the assumption of uncorrelated errors is rarely justified. They suggest that a more realistic factor model would permit error correlations that ``taper to zero with increasing lag.'' But this poses an identifiability question: is it possible to distinguish such error dependence from that residing in the common component? \cite{descary-panaretos-2019} show that this is, indeed, possible if one assumes the covariance of the common term is smooth with low rank and that of the error term is banded with potentially infinite rank. These assumptions give rise to a distinctly functional re-characterization of the common and idiosyncratic components of the OFM: the former becomes ``global'', capturing dependencies between distant regions of the domain, while the latter becomes ``local'', describing dependencies between nearby regions. As this is the precise perspective put forth by our theorized factor model for functional connectivity, we let the ideas of \cite{descary-panaretos-2019} guide the development of this model.   

In this paper, we develop a factor analytic approach for discretely-observed multidimensional functional data that is well-suited to the study of functional connectivity in fMRI. Our methodology is inspired by the framework of \cite{descary-panaretos-2019}, which we build upon in three key respects. First, we extend their conditions for identifying the global and local components of a full covariance to multidimensional functional data, and show that these conditions become less restrictive as the dimensionality grows. Second, we improve upon their global covariance estimator by incorporating a roughness penalty. Third, we develop a post-processing procedure -- inspired by methods from classical factor analysis -- that facilitates interpretation of global estimates. Using this new approach, we uncover a rich covariance structure in the 210 resting-state scans from the AOMIC-PIOP1 dataset.  

The rest of this paper proceeds as follows. In Section \ref{sec:mod}, we introduce functional factor models for both completely- and discretely-observed functional data, and present identifiability conditions for both. In Section \ref{sec:est}, we develop estimation methods for the discretely-observed functional factor model. We then take a brief detour, in Section \ref{sec:ica}, to review an ICA-based approach used throughout the neuroimaging community. In Section \ref{sec:sim}, we use simulations to compare our estimator to that of this ICA-based approach and several other comparators. Section \ref{sec:da} contains an application of our approach to the AOMIC resting-state data which we contrast with an application of the ICA-based estimator. We conclude, in Section \ref{sec:discussion}, with some final remarks and future work.

\section{The Functional Factor Model and its Identification}
\label{sec:mod}

This section considers two data observation paradigms: (i) data observed completely (e.g., a brain image with infinite spatial resolution, making it a function), and (ii) data observed discretely on a grid (e.g., a brain image with finite spatial resolution, making it a tensor). Given that our methodology forgos the usual data smoothing step in many analyses of functional data, the completely-observed paradigm is not practically possible. We instead use this first paradigm as an intermediate step in the development of our theory for the second. Unless otherwise specified, let $a$ or $A$ denote scalars, $\und{a}$ or $\und{A}$ denote vectors, $\mb{A}$ denote matrices, $\mcal{A}$ denote tensors, and $\mscr{A}$ denote operators.

\subsection{Completely-Observed Data}
\label{sec:mod-co}

In the completely-observed paradigm, we assume $X$ is a mean-zero random function in $L^2([0,1]^D)$, which is endowed with the usual inner product and norm: 
\begin{align*}
    \ip{f}{g}_{L^2} = \int_{[0,1]^D} f(\und{s})g(\und{s})d\und{s} \hspace{1em} \text{and} \hspace{1em} \norm{f}_{L^2} = \ip{f}{f}_{L^2}.
\end{align*}
We may view $X$, in the fMRI setting, as a 3-dimensional ($D = 3$) brain image at infinite spatial resolution, an object conceivable only in theory. The covariance operator $\mscr{C}$ of $X$ is the integral operator with kernel $$c(\und{s}_1, \und{s}_2) = \mbb{E}[X(\und{s}_1) X(\und{s}_2)].$$ We say $X$ follows a \textit{functional factor model (FFM)} with $K$ factors if
\begin{equation*}
    X(\und{s}) = \underbrace{\sum_{k=1}^K l_k(\und{s})f_k}_{Y(\und{s})} + \epsilon(\und{s}),
\end{equation*}
where $l_k$ and $f_k$ denote the $k$th \textit{loading function} and \textit{factor}, respectively, and $\epsilon$ is the \textit{error function}. In the spirit of classical factor models, this model assumes the $f_k$ are uncorrelated, mean-zero with unit variance, and independent of the mean-zero $\epsilon$, and that the $l_k$ are linearly independent. In this functional setting, we want the components $Y$ and $\epsilon$ to capture global and local variation in $X$, respectively. If we assume the $l_k$ are sufficiently smooth (in a sense to be defined in Section \ref{sec:mod-co-id}), then $Y$, being a finite sum of these smooth functions, will contain the global variation in $X$ but fail to capture all of the function's local variation. If we further assume $\epsilon$ has $\und{\delta}$-banded covariance, i.e., $\text{Cov}(\epsilon(\und{s}_1), \epsilon(\und{s}_2)) = 0$ when $\abs{s_{1,d} - s_{2,d}} \geq \delta_d$ for some $\delta_d > 0$, $d = 1, \dots, D$, then the error function will contain only that local variation not captured by $Y$. When $X$ is an infinite-resolution brain image, this model decomposes $X$ into the sum of two images: one defined by a finite linear combination whose components (i.e., loading functions) describe dependencies between distant brain regions, and another that captures dependencies between nearby regions. 

If the global component $Y$ has covariance $\mscr{G}$ with kernel $g$, and the local component $\epsilon$ has covariance $\mscr{B}$ with kernel $b$, then
\begin{equation*}
\begin{split}
    \mscr{C} & = \mscr{G} + \mscr{B}, \\
    c(\und{s}_1, \und{s}_2) & = g(\und{s}_1, \und{s}_2) + b(\und{s}_1, \und{s}_2).
\end{split}
\end{equation*}
Moreover, being covariances, $g$ and $b$ admit the Mercer decompositions
\begin{align*}
    g(\und{s}_1, \und{s}_2) & = \sum_{k=1}^K \lambda_k \eta_k (\und{s}_1) \eta_k (\und{s}_2), \\
    b(\und{s}_1, \und{s}_2) & = \sum_{j=1}^{\infty} \beta_j \psi_j (\und{s}_1) \psi_j (\und{s}_2) = \mathds{1}\{ \abs{s_{1,d} - s_{2,d}} < \delta_d, \ \forall d \} \sum_{j=1}^{\infty} \beta_j \psi_j (\und{s}_1) \psi_j (\und{s}_2).
\end{align*}
Since the assumptions of the model also imply
\begin{align*}
    g(\und{s}_1, \und{s}_2) & = \sum_{k=1}^K l_k(\und{s}_1)l_k(\und{s}_2)
\end{align*}
the loading functions $l_k$ are equal to the scaled eigenfunctions of $\mscr{G}$, $\lambda_k^{1/2}\eta_k$, up to an orthogonal rotation. This makes it possible to impose smoothness on the loading functions through the eigenfunctions of $\mscr{G}$, as we do in Section \ref{sec:mod-co-id}. 

Like many latent models, the FFM is rotationally indeterminate. That is, if loading functions $l_1, \dots, l_K$ satisfy model constraints, then so do any orthogonal rotation of these functions (see Section 2 of Supplement A for a detailed explanation). In the fMRI context, this means there are infinitely many ways to express the global component via spatial map sets of size $K$, and each set is a rotation away from the rest. Classical factor analysis addresses this rotation problem by bringing the loadings closer to a simple structure. We explore such approaches in Section \ref{sec:est-pp}.

\subsubsection{Identification of $(\mscr{G}, \mscr{B})$ from $\mscr{C}$}
\label{sec:mod-co-id}

Despite rotational indeterminacy of the loading functions, we may still hope to identify the global and local covariance components $(\mscr{G}, \mscr{B})$ from $\mscr{C}$. Thus far, we have imposed two critical conditions on this covariance decomposition: (i) $\mscr{G}$ has finite rank, and (ii) $\mscr{B}$ has banded covariance. These restrictions, however, do not sufficiently constrain the problem. Given $\mscr{C}$, we know $\mscr{G}$ off the band, but it is not clear how to uniquely extend $\mscr{G}$ onto the band. Now, if $\mscr{C}$ were contaminated by $\mscr{B}$ only on its diagonal (i.e., $\mscr{B}$ is $\und{0}$-banded), then we could simply assume $\mscr{G}$ to be continuous and smooth over this diagonal. However, this strategy does not suffice in the presence of a nontrivial band. Recall that in Section \ref{sec:mod-co}, we alluded to an additional assumption on the smoothness of the eigenfunctions of $\mscr{G}$. In Theorem 1 from Section 3.1 of Supplement A, which extends Theorem 1 from \cite{descary-panaretos-2019}, we formalize this condition through the notion of \textit{analyticity}, showing that real analytic loading functions provide the formulation of smoothness needed to identify the covariance decomposition. To prove this result, we exploit the so-called analytic continuation property, which states that if a function is analytic over some domain, but is known only on an open subset of that domain, then the function extends uniquely to the rest of the domain. This property allows us to uniquely extend $\mscr{G}$ onto the band, thereby establishing a unique decomposition of the full covariance. When $X$ is an infinite-resolution brain image, this condition implies that the loading functions, which capture correlations between distant brain regions, are analytically smooth over the domain of the image.

\subsection{Discretely-Observed Data}
\label{sec:mod-do}

In practice, we can measure a function $X$ at only a finite number of locations. In fMRI, these locations are an evenly-spaced 3-dimensional grid of voxels. This manuscript assumes such a grid, although Section 3.2 of Supplement A presents theory for a more general discrete-observation framework. Let $\{ \und{s}_{\und{m}} \}_{\und{m} \in \mbb{N}_{\und{M}}}$ be an evenly-spaced grid where $\und{M} = (M_1, \dots, M_D)$ is the grid resolution and $\mbb{N}_{\und{M}} = \mbb{N}_{M_1} \times \dots \times \mbb{N}_{M_D}$ is the set of all $D$-dimensional multi-indices with $d$th component no greater than $M_d$ (i.e., $\mbb{N}_{M_d} = \{1, \dots, M_d\}$). In the case of the AOMIC dataset which contains brain images of dimension $65 \times 77 \times 60$, we have resolution $\und{M} = (65, 77, 60)$ with dimension $D = 3$, and $\und{s}_{\und{m}}$ denotes the location of the voxel at grid position $\und{m} = (m_1, m_2, m_3)$. Suppose we observe each of the $n$ samples of $X$ at the $M = M_1 \dots M_D$ grid points,
\begin{equation*}
    X_i(\und{s}_{\und{m}}) = \sum_{k=1}^K l_k(\und{s}_{\und{m}})f_{i,k} + \epsilon_i(\und{s}_{\und{m}}), \ i = 1, \dots, n, \ \und{m} \in \mbb{N}_{\und{M}}.
\end{equation*}
If we summarize the functional terms in the model with the tensors 
\begin{align*}
    \mcal{X}^{\und{M}}(\und{m}) = X(\und{s}_{\und{m}}),  \hspace{1em} \mcal{L}^{\und{M}}_k(\und{m}) = l_k(\und{s}_{\und{m}}), \hspace{1em} \mcal{E}^{\und{M}}(\und{m}) = \epsilon(\und{s}_{\und{m}}),
\end{align*} 
then we can compactly write the \textit{\und{M}-resolution functional factor model} with $K$ factors as
\begin{align*}
    \mcal{X}^{\und{M}} = \sum_{k=1}^K \mcal{L}^{\und{M}}_k f_k + \mcal{E}^{\und{M}}.
\end{align*}
Defining the $M_1 \times \dots \times M_D \times M_1 \times \dots \times M_D$ covariance tensors 
\begin{align*}
    \mcal{C}^{\und{M}} (\und{i},\und{j})  = c(\und{s}_{\und{i}},  \und{s}_{\und{j}}), \hspace{1em}
    \mcal{G}^{\und{M}} (\und{i},\und{j}) = g(\und{s}_{\und{i}}, \und{s}_{\und{j}}), \hspace{1em}
    \mcal{B}^{\und{M}} (\und{i},\und{j}) = b(\und{s}_{\und{i}}, \und{s}_{\und{j}}),
\end{align*}
we can invoke the model's assumptions, which carry over from the completely-observed setting, to decompose the covariance as 
\begin{equation*}
    \mcal{C}^{\und{M}} = \underbrace{\sum_{k=1}^K \mcal{L}^{\und{M}}_k \otimes \mcal{L}^{\und{M}}_k}_{\mcal{G}^{\und{M}}} + \mcal{B}^{\und{M}},
\end{equation*}
where $\otimes$ denotes the usual tensor product, and the loading tensor $\mcal{L}_k^{\und{M}}$ is equal to the $k$th scaled eigentensor of $\mcal{G}^{\und{M}}$, $\lambda_k^{1/2} \mcal{H}_k^{\und{M}}$, up to a rotation. For notational simplicity, we will often suppress the $\und{M}$ in the superscript when writing tensors. We will also say that tensors with dimension $M_1 \times \dots \times M_D$ and $M_1 \times \dots \times M_D \times M_1 \times \dots \times M_D$ simply have dimensions $\mfrak{M}$ and $\mfrak{M}\times\mfrak{M}$, respectively. This allows us abbreviate $\mbb{R}^{M_1 \times \dots \times M_D}$ and $\mbb{R}^{M_1 \times \dots \times M_D \times M_1 \times \dots \times M_D}$ with $\mbb{R}^{\mfrak{M}}$ and $\mbb{R}^{\mfrak{M}\times\mfrak{M}}$, respectively.

\subsubsection{Identification of $(\mcal{G}, \mcal{B})$ from $\mcal{C}$}
\label{sec:mod-do-id}

The rotation problem discussed in Section \ref{sec:mod-co} extends to this finite resolution framework. That is, given $\mcal{G}$, we can only identify the loading tensors $\mcal{L}_k$ up to a rotation. Yet we may still hope to identify ($\mcal{G}$, $\mcal{B}$) from $\mcal{C}$. This, however, is not obviously possible as analyticity, which was used to establish uniqueness in the completely-observed setting, is a property of functions, not of tensors. \cite{descary-panaretos-2019} showed that when $D=1$ this covariance decomposition is indeed unique at finite resolution and that analyticity plays a central role. In Theorem 4 from Section 3.2 of Supplement A, we extend their result to general $D$, and show that the central identification condition becomes less restrictive as $D$ grows:
\begin{align}
\label{eqn:max-K}
    K \leq K^* = \prod_{d=1}^D  \left\lfloor \left( \frac{1}{2} - \delta_d \right) M_d - 1\right\rfloor.
\end{align}
Condition (\ref{eqn:max-K}) reveals that at finite resolution, identification depends crucially on the relationship between the bandwidth $\und{\delta}$, resolution $\und{M}$, and rank of the smooth operator $K$. From a factor analytic perspective, for fixed $\und{\delta}$ and $\und{M}$, this condition gives the maximum number of identifiable factors $K^*$ (up to a rotation). Figure \ref{fig:max-ranks} shows that these parameter constraints are not very restrictive. Of particular note is the order-$D$ polynomial growth of the maximal rank in $M$. This means that in fMRI, where $M$ is typically large and $D>1$, the functional $K$-factor model will be identifiable up to a rotation for any practically reasonable $K$.  

\begin{figure}[!h]
    \centering
    \includegraphics[width=0.95\linewidth]{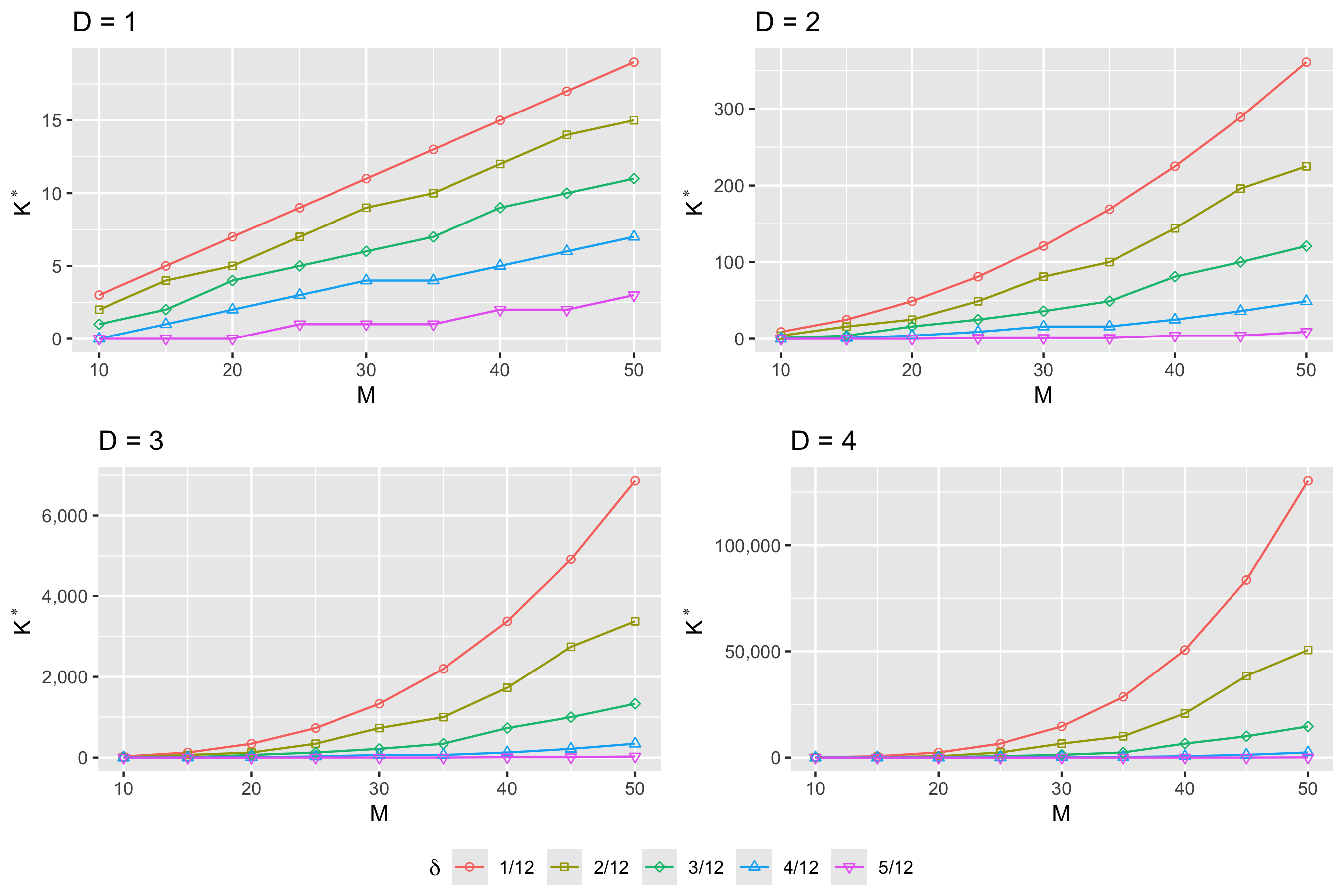}
    \caption{Plots of the maximal rank $K^*$ as a function of $M = M_1 = \dots = M_D$ for different values of $D$ and $\delta$. Note that each plot has a different vertical axis.}
    \label{fig:max-ranks}
\end{figure}

\section{Estimation}
\label{sec:est}

In this section, we formulate estimation in the discrete observation paradigm. The methodology is split in two phases: initial loading estimation and post-processing. In the first phase, we define an estimator for $\mcal{G}$, from which we derive estimates for the number of factors $K$ and the loading tensors $\mcal{L}_k$. To estimate the $\mcal{L}_k$ through $\mcal{G}$, we must assume that these $\mcal{L}_k$ are orthogonal. This, of course, is merely a mathematical convenience; there is no reason to suspect that a collection of brain maps describing connectivity structures is orthogonal. We depart from this assumption during post-processing, wherein the initial loading estimates are brought into a more interpretable form via rotation and shrinkage. This section emphasizes practical implementation of our method; Section 3.3 of Supplement A develops theory for these estimators and provides derivations of their asymptotic properties. 

\subsection{Initial Loading Estimation}
\label{sec:est-ile}

Recall that when the data are observed at resolution $\und{M}$, the covariance decomposes as
\begin{equation*}
    \mcal{C} = \underbrace{\sum_{k=1}^K \mcal{L}_k \otimes \mcal{L}_k}_{\mcal{G}} + \mcal{B}.
\end{equation*}
Two central goals of factor analysis are to estimate the number of factors $K$ and the loading tensors $\mcal{L}_k$, $k = 1, \dots, K$, from the empirical covariance $\hat{\mcal{C}}$. Since loading tensors are only identifiable up to a rotation, we temporarily assume they are orthogonal so that the $k$th loading tensor may be identified with the $k$th scaled eigentensor of $\mcal{G}$ (i.e., $\mcal{L}_k = \lambda_k^{1/2} \mcal{H}_k$, where $(\lambda_k^{1/2}, \mcal{H}_k)$ is the $k$th eigen-pair of $\mcal{G}$). In this section, we propose an estimator of $\mcal{G}$ from which we derive estimators for $K$ and the $\mcal{L}_k$.

Given a sample of discretely-observed functions summarized by the tensors $\mcal{X}_1, \dots, \mcal{X}_n$, our goal is to estimate the smooth covariance component $\mcal{G}$ from the empirical covariance $\hat{\mcal{C}}$ defined by
\begin{align}
\label{eqn:emp-cov}
    \hat{\mcal{C}} = \frac{1}{n - 1} \sum_{i=1}^n (\mcal{X}_i - \bar{\mcal{X}}) \otimes (\mcal{X}_i - \bar{\mcal{X}}),
\end{align}
where $\bar{\mcal{X}} = n^{-1} \sum_{i=1}^{n} \mcal{X}_i$. We do so by finding a smooth low-rank covariance tensor that is a good approximation to $\hat{\mcal{C}}$ off the band: 
\begin{equation}
\label{eqn:g-def}
    \hat{\mcal{G}} = \arg \min_{\theta \in \Theta_{\und{M}^*}} \left\{ \norm{\mcal{A} \circ (\hat{\mcal{C}} - \theta)}_F^2 + \alpha \mscr{P}(\theta) + \tau \text{rank}(\theta) \right\}.
\end{equation}
In the estimator of (\ref{eqn:g-def}), $\Theta_{\und{M}}^*$ is the space of $\mfrak{M} \times \mfrak{M}$ covariance tensors satisfying constraints specified in Section 3.3 of Supplement A, $\mcal{A}$ is the $\mfrak{M} \times \mfrak{M}$ ``band-deleting'' tensor defined by $\mcal{A}^{\und{M}}(\und{i}, \und{j}) = \mathds{1} \{ \abs{i_d - j_d} > \lceil M_d / 4 \rceil$ for $d = 1, \dots, D\}$, $\mscr{P}: \mbb{R}^{\mfrak{M} \times \mfrak{M}} \to \mbb{R}$ is some roughness-penalizing functional, $\alpha > 0$ is a roughness-penalizing parameter, and $\tau > 0$ is a rank-penalizing parameter. Using $\hat{\mcal{G}}$, we define estimators for $K$ and the $\mcal{L}_k$,
\begin{equation}
\label{eqn:k-l-def}
    \hat{K} = \text{rank}(\hat{\mcal{G}}) \hspace{1em} \text{and} \hspace{1em} \hat{\mcal{L}}_k = \hat{\lambda}_k \hat{\mcal{H}}_k,
\end{equation}
$k = 1, \dots, \hat{K}$, where $\hat{\lambda}_k$ and $\hat{\mcal{H}}_k$ are the $k$th scaled eigenvalue and eigentensor of $\hat{\mcal{G}}$, respectively. 

To describe how one performs the optimization in (\ref{eqn:g-def}), let us first assume we have already chosen a rank parameter $\tau^*$ and a smoothing parameter $\alpha^*$. Define the functional $f: \mbb{R}^{\mfrak{M}\times \mfrak{M}} \to \mbb{R}$ by $f(\theta) = \norm{ \mcal{A} \circ \left(\hat{\mcal{C}} - \theta \right) }_F^2 + \alpha^* \mscr{P}(\theta)$. We then estimate $\mcal{G}$ as follows: 

\begin{enumerate}
    \item[(a)] Solve the optimization problem 
    \begin{align*}
        \min_{0 \preceq \theta \in \mbb{R}^{\mfrak{M}\times\mfrak{M}}} f(\theta) \hspace{1em} \text{subject to} \hspace{1em} \text{rank}(\theta) \leq j,
    \end{align*}
    for $j \in \{ 1, \dots, K^* \}$, obtaining minimizers $\hat{\theta}_1, \dots, \hat{\theta}_{K^*}$.
    \item[(b)] Compute the quantities $\{ f(\hat{\theta}_j) + \tau^* j: j = 1, \dots , K^* \}$ and determine the $j$ furnishing the minimum quantity. Declare the corresponding $\hat{\theta}_j$ to be the estimator $\hat{\mcal{G}}$.
\end{enumerate}

Now consider selection of the penalty parameters, $\alpha$ and $\tau$ (see Table \ref{tab:params} for a summary of information related to these parameters and others). Choice of the smoothing parameter $\alpha$ depends on the roughness penalty $\mscr{P}$. Although many choices for $\mscr{P}$ are possible, we use one that promotes smoothness via the eigentensors. In particular, we define $\mscr{P}(\theta) = M^{-1} \sum_{k=1}^K \Tilde{\mscr{P}} \mcal{E}_k (\theta)$ where $\mcal{E}_k: \Theta_{\und{M}} \to \mbb{R}^{\mfrak{M}}$ maps a tensor to its $k$th scaled eigentensor, and $\Tilde{\mscr{P}}: \mbb{R}^{\mfrak{M}} \to \mbb{R}$ is a roughness-penalizing functional defined by $\Tilde{\mscr{P}}(\mcal{V}) = \mcal{V}_{\text{vec}}^T \mcal{R}_{\text{sq}} \mcal{V}_{\text{vec}}$ where $\mcal{R} \in \mbb{R}^{\mfrak{M}\times\mfrak{M}}$ is a generalization of the second difference matrix given by $\mcal{R}_{\und{m}_1,\und{m}_2} = 3^D - 1$ when $\norm{\und{m}_1 - \und{m}_2}_{\infty} = 0$, $\mcal{R}_{\und{m}_1,\und{m}_2} = -1$ when $\norm{\und{m}_1 - \und{m}_2}_{\infty} = 1$, and $\mcal{R}_{\und{m}_1,\und{m}_2} = 0$ otherwise. Here, $\mcal{V}_{\text{vec}}$ denotes the vectorization of $\mcal{V}$ and $\mcal{R}_{\text{sq}}$ is the square matricization of $\mcal{R}$ (see Section 1 of Supplement A for formal definitions of these tensor reshapings). As in smoothed functional principal component analysis, it may be satisfactory (or even preferable) to make the choice of $\alpha$ subjectively based on visual inspection (\citealp[Section 9.3.3]{ramsay-silverman-2005}). For a discussion on automatic selection via cross-validation, see Section 3 of Supplement A.

In practice, we do not explicitly choose the rank parameter $\tau$ invoked in Step (b). Since each $\tau$ implies a choice of rank $j_\tau$, and thus some $f(\hat{\theta}_{j_\tau})$, we use the nonincreasing function $j \mapsto f(\hat{\theta}_j)$ to implicitly choose $\tau$. Specifically, we identify the value of $j$ to be the smallest one such that $f(\hat{\theta}_j) < c$ for some threshold $c$, then set $\hat{\mcal{G}}$ to the corresponding minimizer (see Figure \ref{fig:scree-illust}). We would like the threshold $c$ to be small enough that $f$ evaluated at the estimator is low, but not so small that the estimator has large rank. We thus select the rank of the estimator by identifying the elbow of the scree-type plot generated by the function $j \mapsto f(\hat{\theta}_j)$.

\begin{figure}[!h]
\centering
\subfloat[Small $\tau$]{
    \label{fig:scree-illust-1}\includegraphics[width=0.8\linewidth]{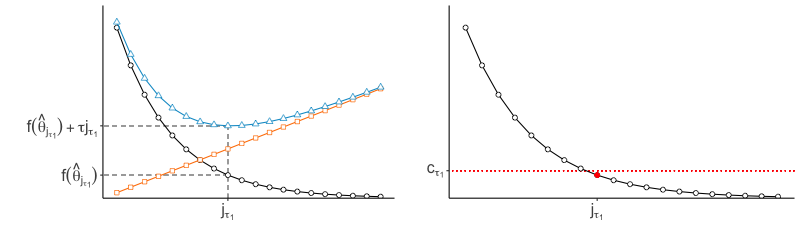}
}\\
\subfloat[Large $\tau$]{
    \label{fig:scree-illust-2}\includegraphics[width=0.8\linewidth]{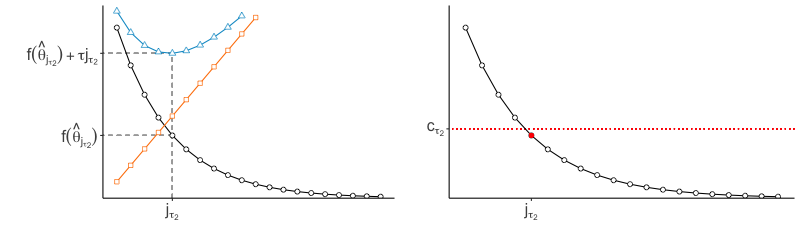}
}
\caption{An illustration of the scree plot approach for selecting the number of factors using (a) a small rank penalty $\tau_1$ and (b) a large rank penalty $\tau_2$. Blue dots represent the non-increasing function $j \mapsto f(\hat{\theta_j})$, orange squares represent the rank penalty $\tau j$, and blue triangle represents their sum (the function to be minimized). The number $c_{\tau}$ is one example of a threshold furnishing the rank $j_{\tau}$.}
\label{fig:scree-illust}
\end{figure}

As shown in Section 3.3 of Supplement A, we can theoretically estimate $\mcal{G}$ using any $\und{\delta}$ for which each $\delta_d$ is less than 0.25. The choice of $\und{\delta}$ thus depends on how much small-scale variation one want to admit into the global component. In this sense, $\und{\delta}$ does not need to be tuned; rather, one needs to consider the purpose of the application in its selection. For instance, in the analysis of Section \ref{sec:da}, we set $\delta_d = 0.1$ for each $d$, which demotes variation isolated to a radius of a few voxels to the local component. However, the optimization described in the next paragraph may be unstable when components of $\und{\delta}$ are near 0.25. For applications requiring wide bands, one may want to test a grid of bandwidths that increments to the desired bandwidth. If estimates change abruptly before reaching the final bandwidth, then algorithmic instability may be to blame.

\begin{table}
    \centering
    \begin{tabular}{c | l | l}
        Parameter & \multicolumn{1}{|c|}{Influence} & \multicolumn{1}{|c}{Selection} \\
        \hline
        \hline
        $\und{\delta}$ & increasing $\delta_d$ removes rough variation from $\hat{\mcal{G}}$ & application-informed, testing for stability \\
        \hline
        $\alpha$ & increasing $\alpha$ yields smoother $\hat{\mcal{L}}_k$ & subjective selection or cross-validation \\
        \hline
        $\tau$ & increasing $\tau$ lowers $\hat{K}$ & implicitly chose via scree-type plots \\
        \hline
        $\und{\kappa}$ & increasing $\kappa_k$ produces sparser $\hat{\mcal{L}}_k$ & cross-validation \\
    \end{tabular}
    \caption{Estimation parameters, along with their influences on the estimators and selection procedures.}
    \label{tab:params}
\end{table}

To solve the optimization problems in (a) we resort to approximate methods. In doing so, we note that any symmetric non-negative definite tensor $\theta \in \mbb{R}^{\mfrak{M}\times\mfrak{M}}$ with rank-$j$ has square matricization $\theta_{\text{sq}} \in \mbb{R}^{M\times M}$, $M = M_1 \dots M_D$, that may be written as $\mb{V}\mb{V}^T$, where $\mb{V} \in \mbb{R}^{M\times j}$. Note also that the columns of $\mb{V}$ are equal to the vectorized scaled eigentensors of $\theta$ (up to a rotation), allowing us to impose smoothness on these eigentensors through the columns of $\mb{V}$. The optimization problems in (a) thus reduce to the factorized matrix completion problems
\begin{align}
\label{eqn:fmp}
    \min_{\mb{V}\in \mbb{R}^{M \times j}} \norm{ \mcal{A}_{\text{sq}} \circ \left( (\hat{\mcal{C}})_{\text{sq}} - \mb{V}\mb{V}^T \right) }_F^2 + \alpha^* \mscr{P}_{\text{sq}}(\mb{V}),
\end{align}
for $j = 1, \dots, K^*$, where $\mscr{P}_{\text{sq}}(\mb{V}) = \mscr{P}(\theta)$ when $\theta = \mb{V}\mb{V}^T$. For our choice of $\mscr{P}$, we have $\mscr{P}_{\text{sq}}(\mb{V}) = \ip{\mb{P}}{\mcal{R}_{\text{sq}}\mb{P}}_F$ where $\ip{\cdot}{\cdot}_F$ denotes the Frobenius inner product. As these problems are not convex in $\mb{V}$, no gradient descent-type algorithm is guaranteed to converge to the problems' infinitely many global optima. In search of ``good'' local optima, we use MATLAB's \texttt{fminunc} function, which implements the Broyden–Fletcher–Goldfarb–Shanno (BFGS; \citealp{broyden-1970,fletcher-1970,goldfarb-1970,shanno-1970}) algorithm or, for larger problems, the limited-memory BFGS (L-BFGS; \citealp{liu-nocedal-1989}) algorithm. Indeed, when given a reasonable initialization, the simulations of Section \ref{sec:sim} show that these routines are stable and quickly converge to suitable local optima. We use the initial value $\mb{V}_0 = \mb{U}_j \mb{\Sigma}_j^{1/2}$, where $\mb{U} \mb{\Sigma} \mb{U}^T$ is the singular value decomposition of $(\hat{\mcal{C}})_{\text{sq}}$, $\mb{U}_j$ is the $M\times j$ matrix obtained by keeping the first $j$ columns of $\mb{U}$, and $\mb{\Sigma}_j$ is the $j\times j$ matrix obtained by keeping the $j$ rows and columns of $\mb{\Sigma}$.

\subsection{Post-Processing: Rotation and Shrinkage}
\label{sec:est-pp}

In Section \ref{sec:est-ile}, we assumed the loading tensors $\mcal{L}_k$ were equal to the scaled eigentensors of $\mcal{G}$. Although mathematically convenient, we did not impose this constraint in the original model as there is no practical justification for orthogonal loadings. Unfortunately, without such an assumption, it is not possible to identify the $\mcal{L}_k$ from $\mcal{C}$. We may still, however, hope to understand the latent factors that drive variation captured in the identifiable global component $\mcal{G}$. For instance, interesting neurological phenomena, like visual processing, may contribute variation to the global component, and an effective factor analytic approach should help us discover these latent processes. However, such insights can be difficult to glean via the scaled eigentensors, which are often dense with many high-magnitude weights. In our two-step post-processing procedure, we transform the illegible eigentensors into loadings possessing simple structure. The resulting parsimony greatly facilitates interpretation of the underlying factors. 

In the first post-processing step, we exploit the model's rotational indeterminacy by rotating the scaled eigentensors of $\hat{\mcal{G}}$ to a maximally-simple structure. We do so using varimax rotation \citep{kaiser-1958} which solves the optimization problem
\begin{align*}
    \hat{\mfrak{L}}^* := \underset{\mcal{M}_{\mb{R}}\hat{\mfrak{L}} \ :\  \mb{R} \in \text{SO}(K)}{\text{arg max}} \sum_{k=1}^K \norm{\left( \mcal{M}_{\mb{R}} \hat{\mfrak{L}} \right)_{\cdot, k}^2}_F^2 - \norm{\left( \mcal{M}_{\mb{R}} \hat{\mfrak{L}} \right)_{\cdot, k}}_F^4,
\end{align*}
where $\hat{\mfrak{L}} \in \mbb{R}^{\mfrak{M}} \times \mbb{N}_K$ is defined by $\hat{\mfrak{L}}_{\und{m},k} = (\hat{\mcal{L}}_k)_{\und{m}}$, $\mcal{M}_{\mb{R}} : \mbb{R}^{\mfrak{M}} \times \mbb{N}_K \to \mbb{R}^{\mfrak{M}} \times \mbb{N}_K$ is an operator given by $(\mcal{M}_{\mb{R}}\mfrak{L})_{\und{m},k} = (\mb{R}\mfrak{L}_{\und{m},\cdot})_k$, and $\text{SO}(K)$ is the set of $K\times K$ rotation matrices. 

Although the rotated loading estimate $\hat{\mfrak{L}}^*$ may be sufficiently interpretable, it sometimes helps to shrink near-zero regions to zero. The second post-processing step accomplishes this systematically by adaptively soft-thresholding $\hat{\mfrak{L}}^*$, 
\begin{equation}
\label{eqn:ast}
    \Tilde{\mfrak{L}}_{\und{m},k} := \text{sgn}(\hat{\mfrak{L}}^*_{\und{m},k}) \max \left\{\abs{\hat{\mfrak{L}}^*_{\und{m},k}} - \kappa_k w \left( \hat{\mfrak{L}}^*_{\und{m},k} \right), 0 \right\},
\end{equation}
where $w(\cdot)$ is a weight function with positive support (e.g., $w(x) = \abs{x}^{-2}$), and the $\kappa_k > 0$ are shrinkage parameters, which may be tuned using the cross-validation procedure described in Section 3 of Supplement A. Beyond enhancing interpretability, we show in Section \ref{sec:sim-int} that shrinkage can also correct an overspecified model by zeroing out ``extra'' loading estimates.

\section{MELODIC Background}
\label{sec:ica}

In Sections \ref{sec:sim} and \ref{sec:da}, we compare our approach to, among others, one based on ICA, the most popular whole-brain connectivity method. In doing so, we rely on the MELODIC function (\citealp{beckmann-smith-2004}) of the FMRIB Software Library (FSL; \citealp{jenkinson-etal-2012}), arguably ICA's most widely used implementation for connectivity analyses. This section details the MELODIC model and its estimation.

FSL's MELODIC function estimates the probabilistic ICA model which assumes that observations for each voxel are generated from a set of statistically independent non-Gaussian sources via a linear mixing process corrupted by additive Gaussian noise: 
\begin{equation*}
    \und{x}_m = \mb{A}\und{s}_m + \und{\mu} + \und{\eta}_m, \ m = 1, \dots, M.
\end{equation*}
Here, $\und{x}_m$ denotes the $n$-dimensional vector of observations at voxel $m$, $\und{s}_m$ denotes the $K$-dimensional vector of non-Gaussian sources, $\mb{A}$ denotes the ($n\times K$)-dimensional \textit{mixing matrix}, $\und{\mu}$ denotes the mean of the observations $\und{x}_m$, and $\und{\eta}_m$ denotes Gaussian noise. The model assumes there are fewer sources than observations (i.e., $K < n$), and that the noise covariance is voxel-dependent (i.e., $\und{\eta}_m \sim N(\und{0}, \sigma^2\mb{\Sigma}_m)$). 

The goal of estimation is to find the $(K\times n)$-dimensional \textit{unmixing matrix} $\mb{W}$ such that $\hat{\und{s}}_m = \mb{W}\und{x}_m$ is a good approximation to $\und{s}_m$ for each voxel. Prior to invoking MELODIC, it is common to spatially smooth the data. We do so using the Gaussian filter of FSL's FSLMATHS utility (\citealp{jenkinson-etal-2012}), tuning the kernel width via cross-validation. Before estimation, MELODIC prepares the multi-session data by (i) temporally concatenating the preprocessed slices to form observation vectors $\und{x}_m$ for each voxel, (ii) temporally whitening each $\und{x}_m$ so that the noise covariance is isotropic at each voxel location, then (iii) normalizing each $\und{x}_m$ to zero mean and unit variance. By default, the utility also reduces the dimension of the concatenated data via MELODIC Incremental Group PCA (MIGP; \citealp{smith-etal-2014}); however, we switch off this feature as Sections \ref{sec:sim} and \ref{sec:da} consider datasets of sufficiently reduced sizes. After this preprocessing, one can show that the maximum likelihood (ML) mixing matrix estimate is 
\begin{equation*}
    \hat{\mb{A}} = \mb{U}_K (\mb{\Lambda}_K - \sigma^2 \mb{I}_K)^{1/2} \mb{Q}^T,
\end{equation*}
where $\mb{U}_K$ and $\mb{\Lambda}_K$ contain the first $K$ left singluar vectors and singular values of $\mb{X}$, the $n\times M$ matrix of preprocessed data, and $\mb{Q}$ is some $K\times K$ rotation matrix. From $\hat{\mb{A}}$, the ML source estimates are obtained via generalized least squares, 
\begin{equation*}
    \hat{\und{s}}_m = (\hat{\mb{A}}^T \hat{\mb{A}})^{-1} \hat{\mb{A}}^T \Tilde{\und{x}}_m,
\end{equation*}
where $\Tilde{\und{x}}_m = (\mb{\Lambda}_K - \sigma^2 \mb{I}_K)^{-1/2} \mb{U}_K^T \und{x}_m$. Thus, to fix $\hat{\mb{A}}$, one chooses the $\mb{Q}$ that maximizes the non-Gaussianity of the source estimates $\hat{\und{s}}_m$.

For a random variable $s$, one measure of non-Gaussianity is \textit{negentropy}, 
\begin{equation*}
    J(s) = H(s_{\text{gauss}}) - H(s),
\end{equation*}
where $H(y) = -\mbb{E}_y[\log y]$ is the entropy of a random variable $y$ and $s_{\text{gauss}}$ is a Gaussian having the same variance (or covariance if $s$ is a random vector) as $s$. Since a Gaussian random variable has the largest entropy among all random variables of equal variance, negentropy is a non-negative function that equals zero if and only if $s$ is Gaussian. As estimation of negentropy is difficult, it is often approximated using the contrast function  
\begin{equation*}
    J(s) \approx \left( \mbb{E}[G(s)]  - \mbb{E}[G(\nu)]\right)^2,
\end{equation*}
where $\nu$ is a standard normal and $G$ is a general non-quadratic function. To estimate the $k$th source $s_{mk}$ of the $m$th voxel MELODIC uses a fixed-point iteration scheme \citep{hyvarinen-oja-2000} to choose the $k$th row $\und{q}_k^T$ of $\mb{Q}$ so that $J(\hat{s}_{mk}) = J(\und{q}_k^T\Tilde{\und{x}}_m)$ is maximized for some domain-specific choice of $G$. By default, MELODIC sets $G$ equal to the \textit{pow3} function. To estimate $K$ sources, the procedure is repeated $K$ times under the constraint that the vectors $\und{q}_k$ are mutually orthogonal.

\section{Simulation Studies}
\label{sec:sim}

To assess the efficacy of our methodology, we conduct three simulation studies. In the first, we compare the accuracy of the post-processed estimator of (\ref{eqn:g-def}) (denoted by FFA) to that of four other estimators. In the second, we assess the relative interpretability of these estimators. In the third, which may be found in Section 4 of Supplement A, we explore how the scree plot approach for selecting the number of factors behaves in different settings. We begin, in Section \ref{sec:sim-data}, by describing the data generating procedure for these studies. 

\subsection{Data Simulation}
\label{sec:sim-data}

To simulate data from an FFM (when $D=2$), we first generate $n$ i.i.d. mean-zero functions $Y_i$ and $n$ i.i.d. mean-zero functions $\epsilon_i$ on an equally spaced $M\times M$ lattice in $[0,1]^2$, summarizing these discretely-observed functions with the order-$2$ tensors (i.e., matrices) $\mcal{Y}_i$ and $\mcal{E}_i$, respectively. We then sum these components to get the samples $\mcal{X}_i = \mcal{Y}_i + \mcal{E}_i$, and compute the empirical covariance tensor, $\hat{\mcal{C}} = n^{-1} \sum_{i=1}^n \mcal{X}_i \otimes \mcal{X}_i$. For simplicity, in all studies, we set $M = M_1 = M_2 = 30$ and $\delta = \delta_1 = \delta_2$. Though these simulated ``brain slices'' are lower resolution than those of the AOMIC data whose axial slices have resolution $\und{M} = (65,77)$, this simplification serves only to reduce compute time and has little impact on estimator performance. 

To simulate the functions $Y_i$, we set $Y_i(\und{s}) = \sum_{k=1}^K f_{i,k} c_k v_k(\und{s})$, where the $v_k$ are (possibly non-orthogonal) functions scaled to have unit norm, the $c_k$ are positive constants, and the factors $f_{i,k}$ are drawn $i.i.d.$ from $N(0,1)$. We decompose the $k$th loading function into its standardization $v_k$ and a scaling parameter $c_k$ to control the difficulty of the estimation problem. Also note that the $(c_k, v_k)$ are not necessarily equal to the eigen-pairs $(\lambda_k, \eta_k)$ of $\mscr{G}$, as we do not require the $v_k$ to be orthogonal. Across the three studies, we make use of three loading schemes, depicted in Figure \ref{fig:load-schemes}. In the first loading scheme (denoted by BL1), each $v_k$ is defined piecewise with a bump function in two regions of its domain, and zero elsewhere. Each $v_k$ in the second loading scheme (denoted by NET) is also defined with piecewise bump functions, but these functions are arranged in the types of complex patterns one might expect to observe in a brain network. In the third loading scheme (denoted by BL2), each $v_k$ is defined by three bump functions, one of which is wider than the other two. 

\begin{figure}[!h]
\centering
\subfloat[BL1]{\label{fig:load-schemes-bl1}\includegraphics[width=0.35\linewidth]{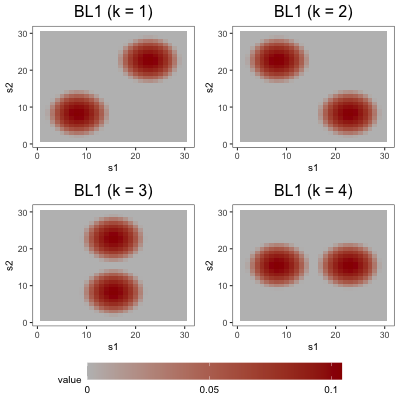}}\qquad
\subfloat[NET]{\label{fig:load-schemes-net}\includegraphics[width=0.35\linewidth]{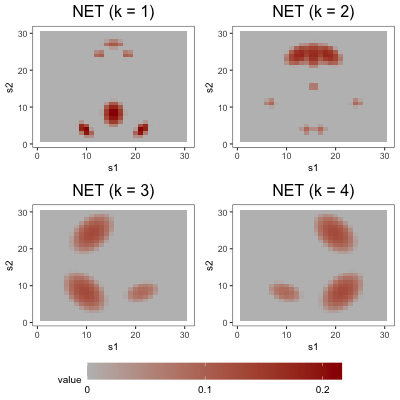}}\\
\subfloat[BL2]{\label{fig:load-schemes-bl2}\includegraphics[width=0.7\linewidth]{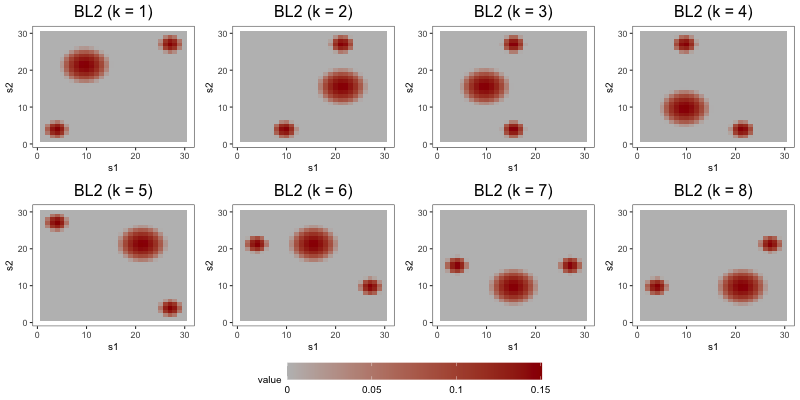}}
\caption{The three loading schemes used across studies: (a) each loading contains a bump function in two regions of its domain; (b) each loading contains bump functions arranged in patterns more complex and varied than those of the BL scheme; (c) each loading contains a bump function in three regions of its domain.}
\label{fig:load-schemes}
\end{figure}

To generate the functions $\epsilon_i$, we set $\epsilon_i(\und{s}) = \sum_{j=1}^J a_{i,j} d_j e_{j}(\und{s})$, where the $e_j$ are (again, possibly non-orthogonal) functions with support on $[(p-1)\delta, p\delta] \times [(q-1)\delta, q\delta] \subset [0,1]^2$, the $d_j$ are positive constants, the $a_{i,j}$ are drawn i.i.d. from $N(0,1)$, and $J \gg K$ is large. As in the loading scheme framework, the $(d_j, e_j)$ may not be equal to the eigen-pairs $(\beta_j, \psi_j)$ of $\mscr{B}$ since the former need not be orthogonal. We consider two error schemes. Each defines $J$ functions $e_j$ with possibly overlapping supports on an equally-spaced grid of $[0,1]^2$. In the first (denoted by BE), each $e_j$ is a two-dimensional bump function. In the second (denoted TRI), each $e_j$ is a two-dimensional triangle function. 

\subsection{Study 1: Accuracy Comparison}
\label{sec:sim-acc}

The aim of the first study is to compare the accuracy of the FFA estimator to that of five alternative estimators: 

\begin{enumerate}
    \item Independent component analysis without smoothing (denoted by ICA): we estimate $\mcal{G}$ by extracting from the data an $M\times M\times K$ tensor $\hat{\mcal{S}}$ of independent components via MELODIC, then computing $\sum_k \hat{\mcal{S}_k} \otimes \hat{\mcal{S}_k}$ where the $M\times M$ matrix $\hat{\mcal{S}_k}$ is the $k$th estimated independent component. We skip variance normalization since the simulated data do not reproduce phenomena like scanner and physiological noise that lead to different levels of variability across different voxels. Omission of this step also ensures that ICA estimates are on the same scale as those derived from other estimators.  
    \item Independent component analysis with smoothing (denoted by ICAS): we estimate $\mcal{G}$ using the same procedure as ICA but smooth the data with a Gaussian filter as described in Section \ref{sec:ica}.
    \item Principal component analysis (denoted by PCA): we estimate $\mcal{G}$ with a truncated eigen-decomposition of $\hat{\mcal{C}}$. Note that this approach corresponds with principal component estimation in multivariate factor analysis (\citealp[Section 9.3]{johnson-wichern-2002}). 
    \item Matrix completion approach, presented by \cite{descary-panaretos-2019} (denoted by DP): we compute the estimator of (\ref{eqn:g-def}) fixing $\alpha = 0$. 
    \item Matrix completion approach with smoothing (denoted by DPS): we compute the estimator of (\ref{eqn:g-def}) using $\alpha \geq 0$, but omit post-processing. 
\end{enumerate}
We study these estimators in four scenarios, which are defined by the Cartesian product of two loading schemes (BL1 and NET) and the two error schemes: S1 for BL1 and BE, S2 for BL1 and TRI, S3 for NET and BE, and S4 for NET and TRI. Within each scenario, we consider 24 configurations, each of which is characterized by a sample size $n \in \{250, 500, 1000\}$, a number of factors $K \in \{ 2,4 \}$, a bandwidth $\delta \in \{ 0.05, 0.1\}$, and a ``regime'' (R1 or R2). In both regimes, the $d_j$ are distributed uniformly in $[0.1, 1]$, while the $c_k$ are distributed uniformly in $[2,3]$, and $[0.8, 1.8]$ for R1 and R2, respectively. Consequently, the lower-signal R2 regime poses a more difficult estimation problem than the R1 regime. Note that in the multi-subject AOMIC analysis of Section \ref{sec:da}, the sample size is the number of time points per subject multiplied by the number of subjects ($480 \times 210 = 10,800$). Accordingly, this simulation study evaluates our model in settings that are likely lower-signal than that of our application. 

Per Section \ref{sec:est}, FFA estimation -- including initial loading estimation and post-processing -- involves two levels of tuning. We consider a smoothing parameter $\alpha$ and just one shrinkage parameter $\kappa$ for all loading functions to lighten computation. Instead of conducting the expensive cross-validation procedures of Section 3 in Supplement A, we simply split the data into a training set (80\%) and a test set (20\%), then choose the $\alpha$ (similarly, $\kappa$) that minimizes the out-of-sample prediction error for the band-deleted empirical covariance. Throughout the study, we set $\delta = 0.1$ for estimation procedures.

For 100 repetitions of each configuration, we compute the six estimators -- FFA, DPS, DP, PCA, ICA, and ICAS -- then calculate normalized errors by evaluating $E(x) = \| \mcal{G} - x \|_F / \| \mcal{G} \|_F$ at each estimator. To assess the comparative accuracy of the FFA estimator, we form five relative errors by dividing the normalized errors for the DPS, DP, PCA, ICA, and ICAS estimators by that for the FFA estimator. Figures \ref{fig:comp-ffa-r1} and \ref{fig:comp-ffa-r2} display results for R1 and R2, respectively.

FFA is in league with or superior to the alternatives in nearly every configuration. DPS, however, consistently competes with FFA, indicating that post-processing leads to only modest accuracy gains (Section \ref{sec:sim-int} considers the more noteworthy interpretability gains afforded by post-processing). The smoothing of FFA and DPS provides helpful de-noising in BL configurations -- particularly in the low-signal regime and/or when $n$ is small -- but is less effective in NET configurations. In the latter, over-smoothing occasionally forces the optimization procedure into bad local minima. This is likely because application of a single smoothing parameter to each loading tensor is more appropriate in the BL scheme, which has uniformly smooth loading tensors, than in the NET scheme, whose first and second loadings are spikier than its third and fourth loadings. 

DP, PCA, ICA, and ICAS typically trail FFA and DPS. It is not surprising that the performances of ICA and PCA are in near lockstep, given that MELODIC projects the temporally whitened data onto the space spanned by the $K$ leading eigenvectors of $\hat{\mcal{C}}$, and that MELODIC's temporal whitening has little effect on these i.i.d. data. The data smoothing of ICAS consistently improves upon ICA, most noticeably in regime R2. In this regime, ICAS performance even matches that of FFA and DPS when $\delta = 0.05$ configurations, though it falls short in $\delta = 0.1$ configurations where data smoothing smears more local variation into the global component. Contamination of the global component brought about by data smoothing is so pronounced that even DP, an estimator that does not explicitly encourage smooth loadings, outpaces ICAS in some of these $\delta = 0.1$ configurations.

\begin{figure}[!h]
    \centering
    \includegraphics[scale=0.5]{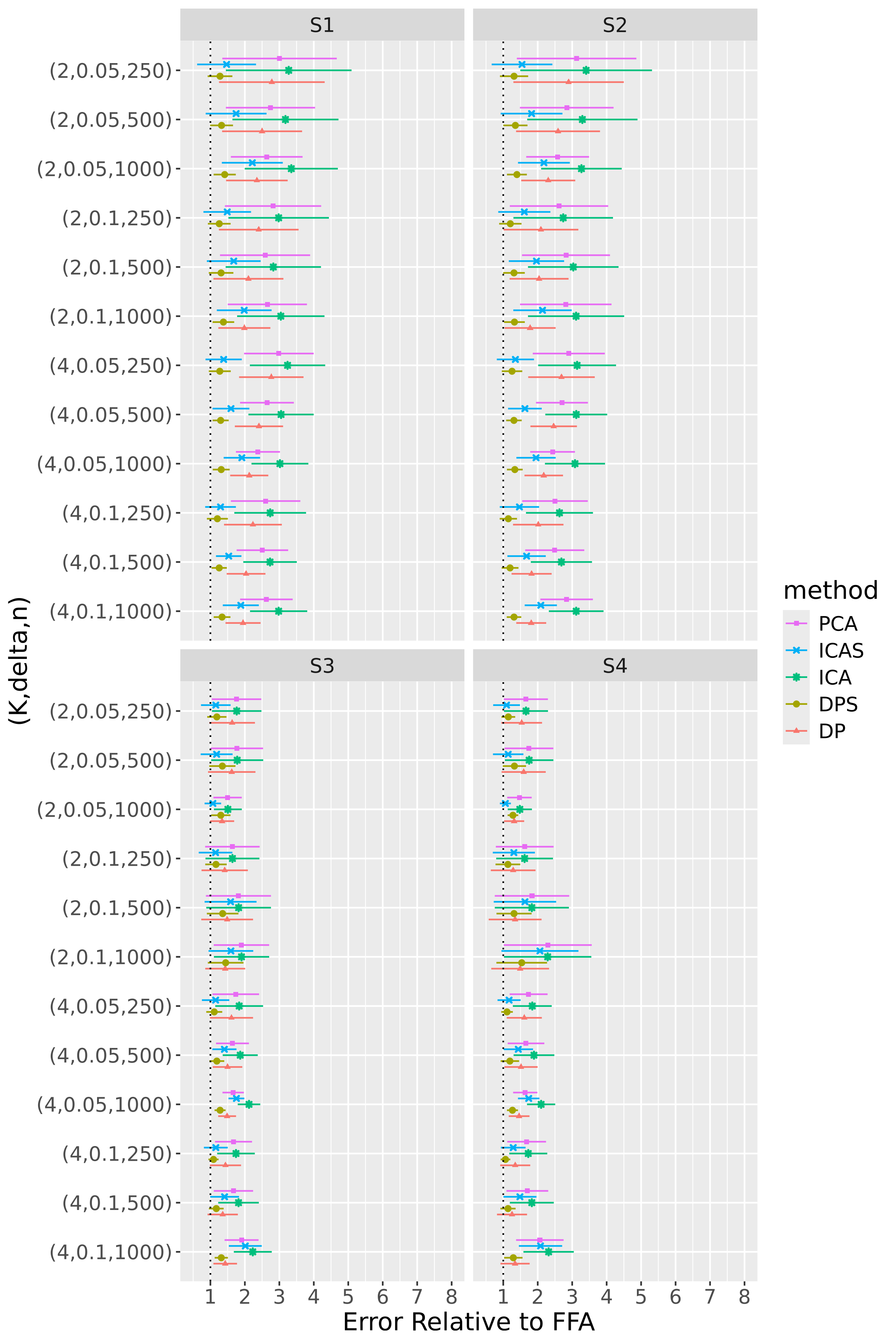}
    \caption{Mean relative errors (with $\pm 2$-standard-deviation error bars) for all Regime 1 configurations using the FFA estimator as baseline.}
    \label{fig:comp-ffa-r1}
\end{figure}

\begin{figure}[!h]
    \centering
    \includegraphics[scale=0.5]{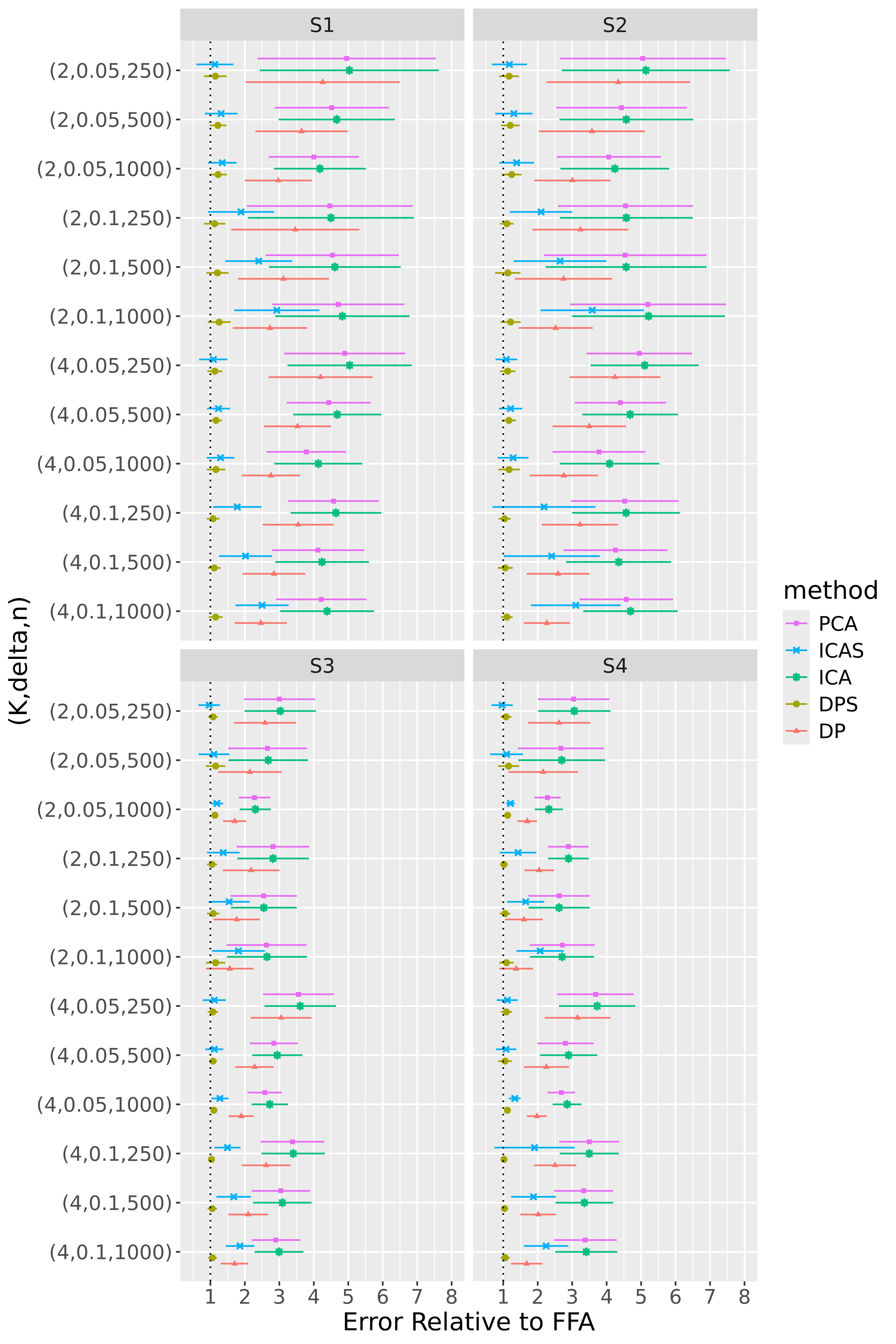}
    \caption{Mean relative errors (with $\pm 2$-standard-deviation error bars) for all Regime 2 configurations using the FFA estimator as baseline.}
    \label{fig:comp-ffa-r2}
\end{figure}

\subsection{Study 2: Interpretability Comparison}
\label{sec:sim-int}

The aim of the second study is to compare the interpretability of FFA to that of DPS and ICAS, the two most competitive estimators from Section \ref{sec:sim-acc}. To do so, we focus on output from a single repetition of a configuration with loading scheme BL2, error scheme BE, regime R2, $n$ set to 1000, $K$ set to 8, and $\delta$ set to 0.1. Consideration of only one repetition alleviates computational strains, allowing us to tune component-specific shrinkage parameters for the FFA estimator. 

In the first part of this study, we correctly specify model order by estimating the three models with 8 components (see Figure \ref{fig:sim-K8}). DPS fails to capture the simple structure of the true loadings. ICAS, on the other hand, largely accomplishes this, identifying 6 of the true loadings. However, it lacks the sparsity of FFA, which clearly capture 7 to 8 of the true loadings.  

\begin{figure}[!h]
\centering
\subfloat[FFA]{\label{fig:sim-K8-ffa}\includegraphics[width=0.6\linewidth]{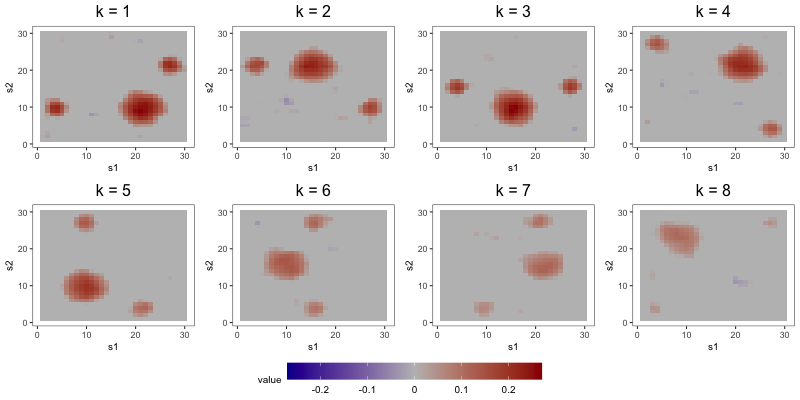}}\\
\subfloat[ICAS]{\label{fig:sim-K8-ICAS}\includegraphics[width=0.6\linewidth]{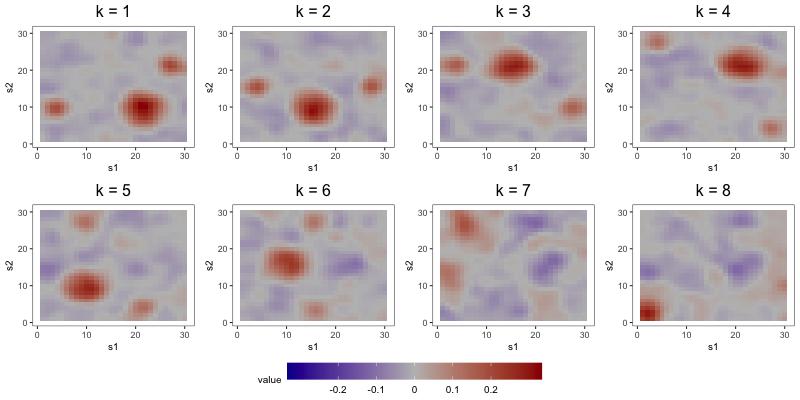}}\\
\subfloat[DPS]{\label{fig:sim-K8-dps}\includegraphics[width=0.6\linewidth]{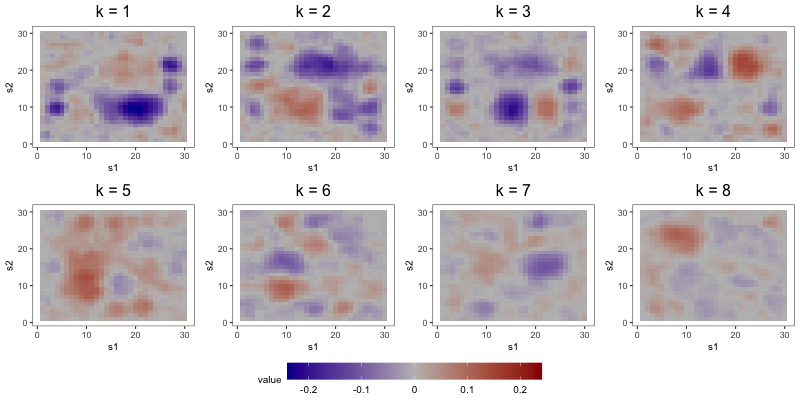}}
\caption{Estimates for order-8 FFA, DPS, and ICAS models when loading scheme is BL2, error scheme is BE, regime is R2, $n = 1000$, $K = 8$, and $\delta = 0.1$.}
\label{fig:sim-K8}
\end{figure}

The most noteworthy difference, however, between FFA and ICAS is that ICAS results are less robust to dimension overspecification. That is, if we estimate two IC models, one with the true number of ICs and another with more, the latter will splinter structures found in single ICs of the former across multiple ICs. Such fragmentation does not occur in overspecified FFMs. To see this, consider output from order-25 FFA and ICAS models fit to these data (see Figures \ref{fig:sim-ffa-K25} and \ref{fig:sim-ica-K25}). The higher-order FFM preserves the 7 to 8 loadings of the lower-order FFM and even shrinks many of the other loadings to zero, effectively correcting overspecification. However, of the 6 true loadings captured in the 8-IC model, only two are preserved in the 25-IC model ($k=1, 2$ in Figure \ref{fig:sim-ica-K25}). Other structures in the 8-IC model are scattered across the remaining components of the higher-order model. This is because, in its search for maximally independent spatial maps, ICA fragments broad correlation structures across multiple maps \citep{friston-1998}. This splintering presents challenges to interpretation, particularly when the specified dimension far exceeds the true dimension. In contrast, varimax rotation in FFA does not fragment correlation structure. Moreover, shrinkage seems to have partially ``corrected'' the overspecification, zeroing out many of the extra loadings. This makes FFA an attractive approach when interpretation is the foremost goal and dimension overspecification is possible. 

\begin{figure}[!h]
    \centering
    \includegraphics[width=0.8\linewidth]{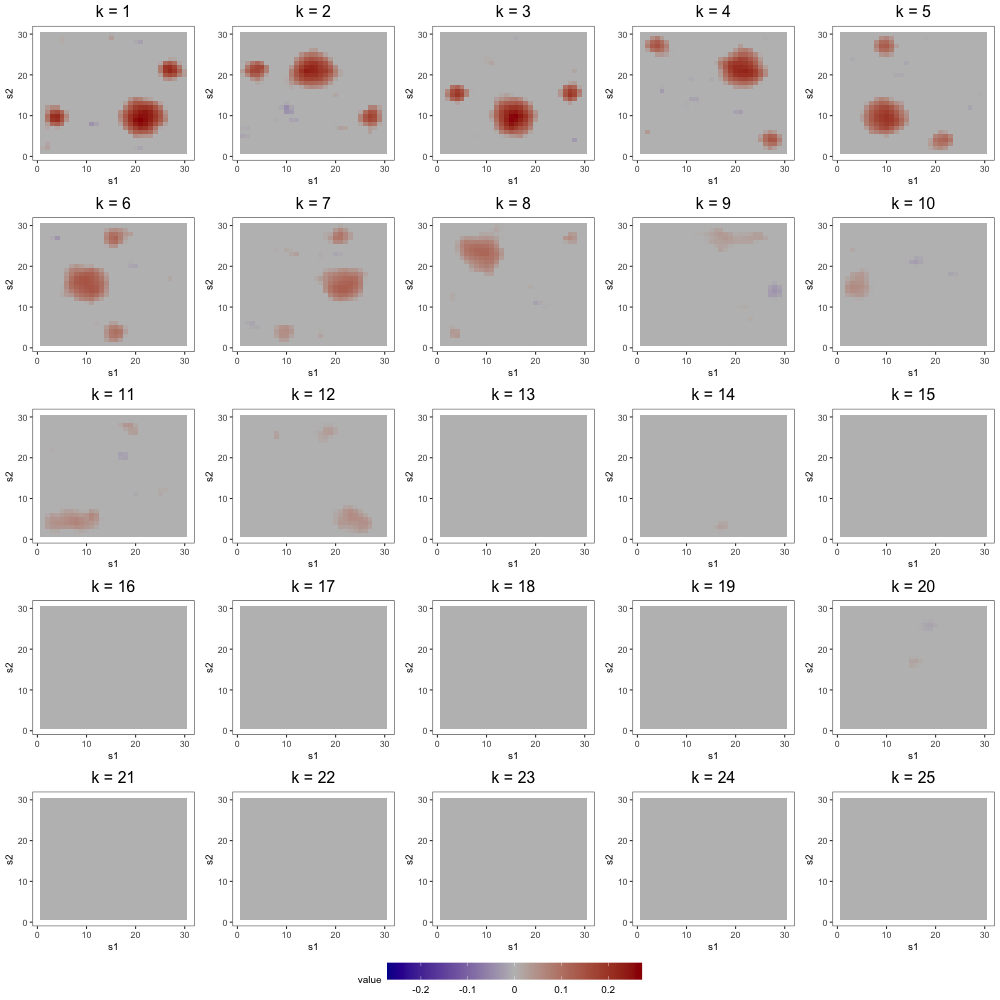}
    \caption{Estimates for order-25 FFA model when loading scheme is BL2, error scheme is BE, regime is R2, $n = 1000$, $K = 8$, and $\delta = 0.1$.}
    \label{fig:sim-ffa-K25}
\end{figure}

\begin{figure}[!h]
    \centering
    \includegraphics[width=0.8\linewidth]{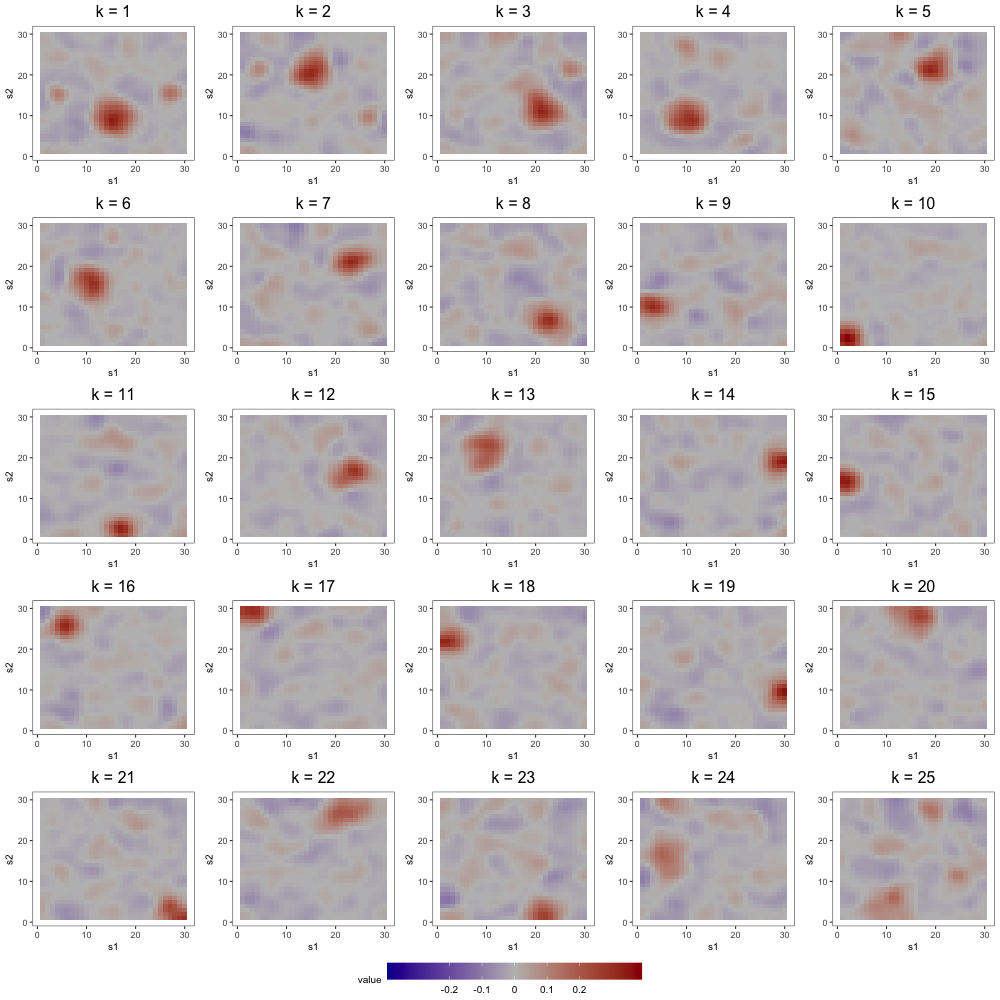}
    \caption{Estimates for order-25 ICA model when loading scheme is BL2, error scheme is BE, regime is R2, $n = 1000$, $K = 8$, and $\delta = 0.1$.}
    \label{fig:sim-ica-K25}
\end{figure}

\section{AOMIC Data Analysis}
\label{sec:da}

We now consider the AOMIC-PIOP1 dataset which includes six-minute resting-state scans for 210 healthy university students. Prior to each subject's resting-state session, a structural image was acquired at a resolution of 1 mm isotropic while the functional image was acquired at a spatial resolution of 3 mm isotropic and a temporal resolution of 0.75 seconds (resulting in a single-subject data matrix with dimension $65 \times 77 \times 60 \times 480$). Including all $\sim$ 300,000 voxels in this analysis would necessitate estimation and in-memory storage of a prohibitively large spatial covariance. We avoid this by focusing on one axial slice (at $z=30$). Preprocessing of anatomical and functional MRI were performed using a standard Fmriprep pipeline \citep{esteban-etal-2019}, as detailed in \cite{snoek-et-al-2021}.

Our strategy for discovering the connectivity patterns present in these resting-state data was to prewhiten the voxel time courses in each subject's preprocessed scan (which are, in general, nonstationary and autocorrelated), then treat the multi-subject collection of slices as an i.i.d. sample $\mcal{X}_1, \dots, \mcal{X}_{n}$, $n=100800$, which we used to fit the FFM. Although fMRI data is inherently nonstationary and autocorrelated, many connectivity studies that correlate voxel time series skip this essential prewhitening step, risking the discovery of spurious correlations \citep{christova-etal-2011, afyouni-etal-2019}. Of course, prewhitening only addresses temporal correlation within voxel time courses, not between time courses. There are several ways to prewhiten voxel time courses, including those based on Fourier \citep{laird-etal-2004} and wavelet \citep{bullmore-etal-2001} decompositions. Taking another common approach, we performed prewhitening by fitting a non-seasonal AutoRegressive Integrative Moving Average (ARIMA) model to each voxel time course using the R function \texttt{forecast::auto.arima} \citep{hyndman-khandakar-2008}, then extracting the residuals. Figure 5 from Section 6.1 of Supplement A demonstrates the effects of prewhitening on voxel time courses. Taking after \cite{christova-etal-2011}, we form samples $\mcal{X}_1, \dots, \mcal{X}_{n}$ from the scaled ARIMA residuals. 

We begin analysis of these samples by computing the empirical covariance $\hat{\mcal{C}}$ in equation (\ref{eqn:emp-cov}). Based on visual inspection and the fact that the sample size is quite large, we opt not to smooth (i.e., we set $\alpha = 0$), then plot the non-increasing function $g(j) = \|\mcal{A} \circ (\hat{\mcal{C}} - \hat{\theta}_j)\|_F^2$ for $j = 1, \dots, 100$ (see Figure 6 of Section 6.1 in Supplement A). Based on this plot, we set $\hat{K} = 12$ and use the corresponding $\hat{\mcal{L}}_1, \dots, \hat{\mcal{L}}_{12}$ as the initial loading estimates. To choose the shrinkage parameter $\und{\kappa} = (\kappa_1, \dots, \kappa_{12})$ used to compute the post-processed loadings $\Tilde{\mcal{L}}_1, \dots, \Tilde{\mcal{L}}_{12}$, we define a manageable search grid then follow the tuning procedure described in Section 3 of Supplement A. Figure \ref{fig:rs-ffa} displays the rich covariance structure captured by these post-processed loadings. In addition to anatomical segmentation (e.g., grey matter segmentation in loadings 1, 2, and 6; white matter segmentation in loadings 3 and 9; and ventricle segmentation in loadings 5 and 10), these 12 loadings contain known resting-state networks. The prominent bilateral structure in the first loading captures parts of the auditory resting-state network, including co-activation in Heschl's gyrus and the posterior insular (see $7_{20}$ of Figure 1 in \citealp{smith-etal-2009}). Next, frontal activation in the second loading includes areas associated with resting-state executive control, like the anterior cingulate (see $8_{20}$ of Figure 1 in \citealp{smith-etal-2014}). Lastly, the sixth loading contains activation in the occipital lobe, an area associated with visual processing (see $2_{20}$ of Figure 1 in \citealp{smith-etal-2009}).

To provide a methodological comparison, we also performed group ICA by temporally concatenating the preprocessed slices then running ICA on the resulting data to produce spatial maps common to all subjects \citep{calhoun-etal-2001}. As in Section \ref{sec:sim}, we do so using FSL's MELODIC function. For this analysis, we fit 12-, 25-, and 50-independent component (IC) models. Transformed components (for interpretability) for the 12-IC model are shown in Figure \ref{fig:rs-components}, and those for the 25- and 50-IC models are shown in Figures 8, 9, and 10 of Section 5.2 in Supplement A. To facilitate comparison with FFA, we also estimate the FFM with 25 and 50 factors (see Figures 5, 6, and 7 in Section 5.2 of Supplement A) even though the plots of Figure 6 in Section 6.1 of Supplement A suggest a smaller number of factors. 

We first compare the 12-component models (see Figure \ref{fig:rs-components}). Several prominent structures are captured by both FFA and ICA: dorsal correlations appear in the 6th loading, and in the 4th and 9th ICs; frontal correlations appear in the 2nd loading and the 7th IC; bilateral correlations appear in the 1st loading and in the 8th IC; similar ventricle correlations appear in the 5th loading and in the 1st IC; and brain outlines appear in the 4th and 7th loadings, and in the 11th IC. Despite the many common features, there exist some discrepancies (e.g., the spikes of activation in the 10th and 12th IC do not display prominently in any single loading).

\begin{figure}[!h]
\centering
\subfloat[FFA]{\label{fig:rs-ffa}\includegraphics[width=0.45\linewidth]{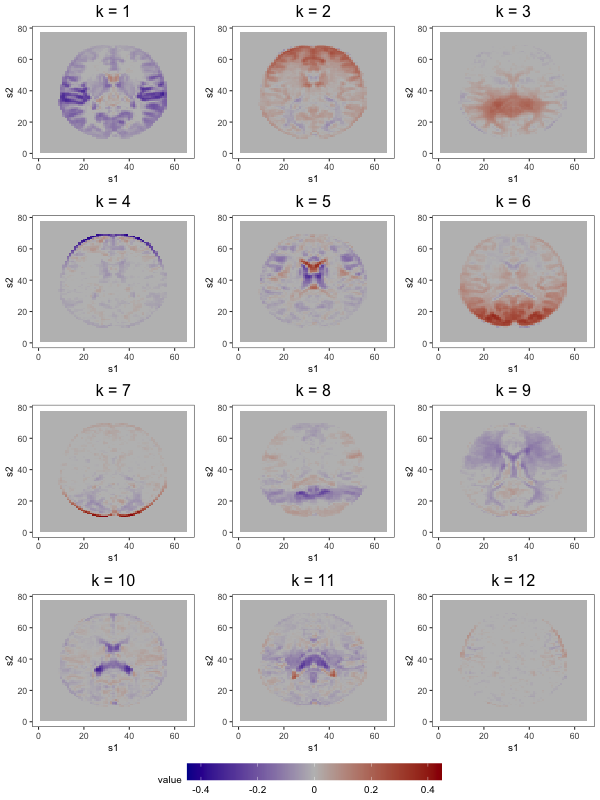}}\qquad
\subfloat[ICA]{\label{fig:rs-ica}\includegraphics[width=0.45\linewidth]{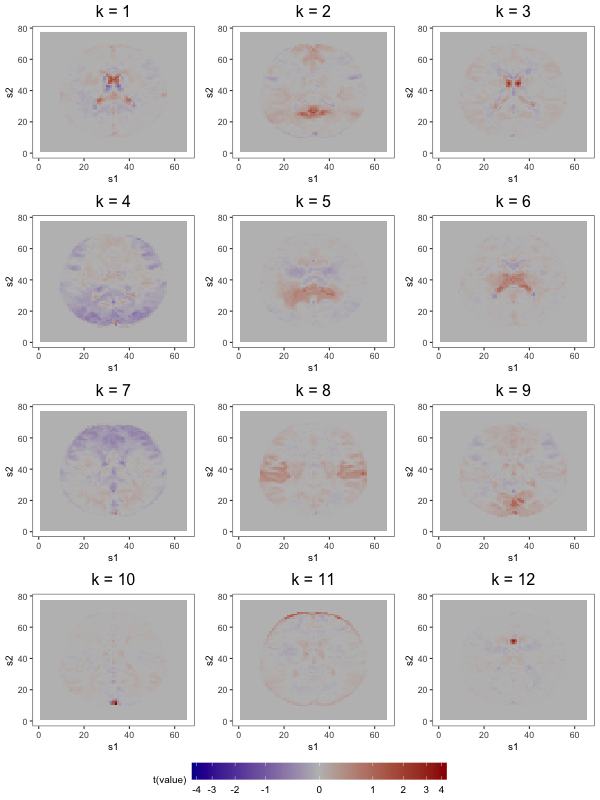}}
\caption{(a) Loadings of the functional 12-factor model estimated from ARIMA residuals. (b) Transformed components ($t(x) = \text{sgn}(x) \log (\abs{x} + 1)$) of the 12-component IC model estimated from the preprocessed scans. ICs are presented on the log-scale to bring forward correlations that are washed out by the spikes of activation in ICs 3, 10, and 12.}
\label{fig:rs-components}
\end{figure}

However, as observed in Section \ref{sec:sim-int}, the most practically meaningful difference is that MELODIC lacks FFA's robustness to dimension misspecification. Consider, for instance, the similar ventricle correlations in loading 5 and IC 1 of the order-12 plots in Figures \ref{fig:splint-ffa} and \ref{fig:splint-ica}. In the FFMs with 25 and 50 factors, this structure is preserved. However, this structure, captured in a single IC of the 12-IC model, is splintered across 2 ICs in both the 25- and 50-IC models, with the components of the latter containing less of the original structure than those of the former. Similar fragmentation occurs for other structures found in the 12-IC model.


\begin{figure}[!h]
\centering
\subfloat[FFA]{\label{fig:splint-ffa}\includegraphics[width=0.27\linewidth]{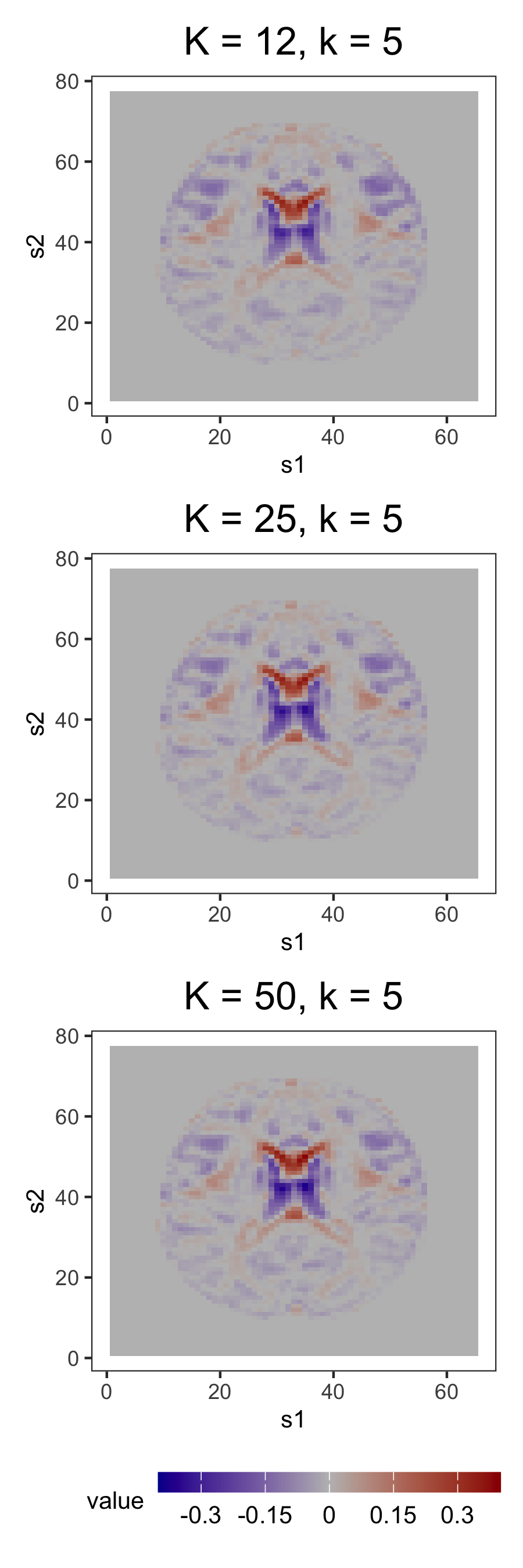}}\qquad
\subfloat[ICA]{\label{fig:splint-ica}\includegraphics[width=0.45\linewidth]{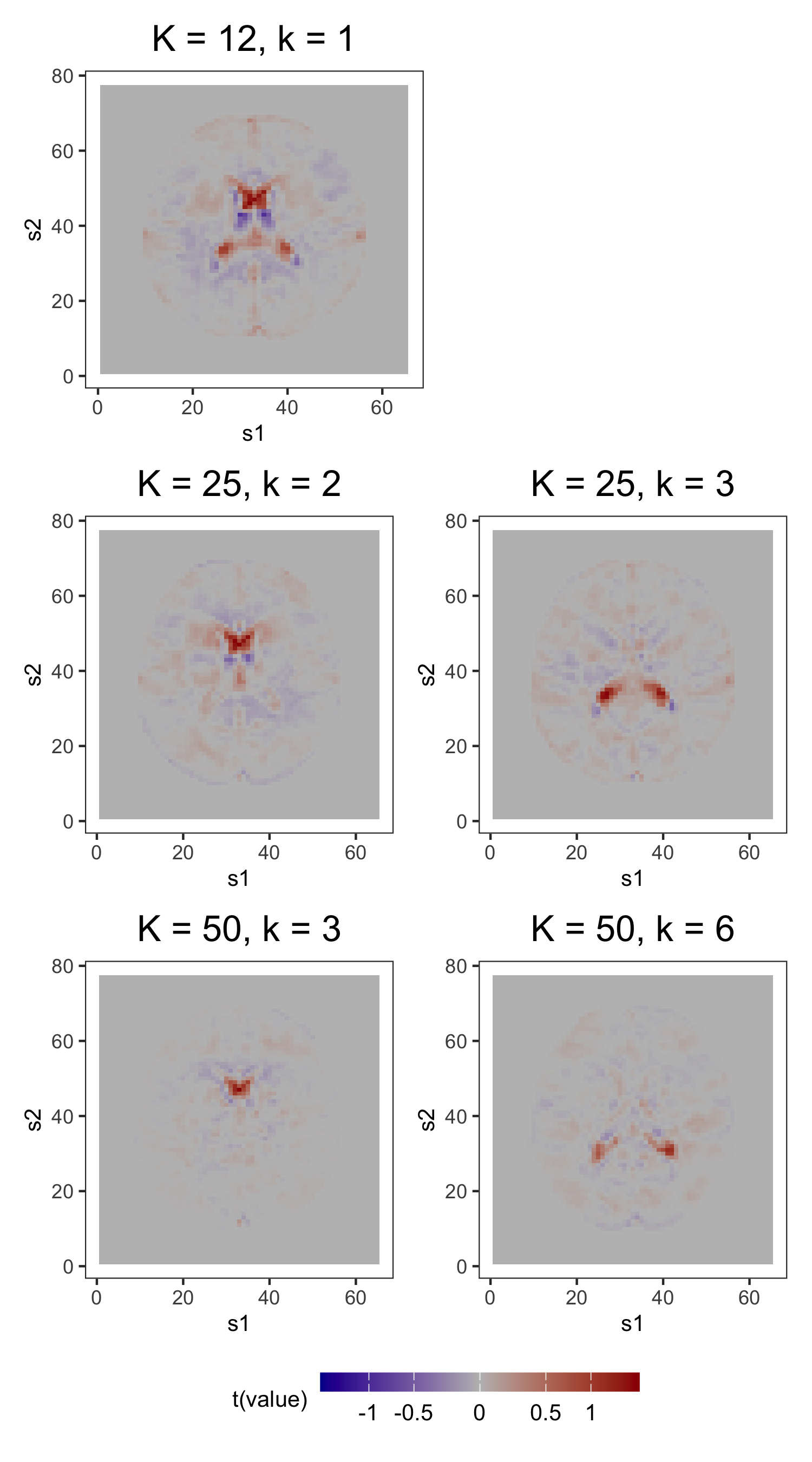}}
\caption{In their respective 12-component models, FFA and ICA capture correlation structures within the ventricles. In higher-order models, FFA preserves this structure while ICA fragments it across multiple ICs.}
\label{fig:splint}
\end{figure}

One way to combat ICA's instability is through multi-scale approaches. For example, \cite{iraji-etal-2023} estimate common spatial maps by first generating 100 half-splits of their multi-subject dataset, then fitting IC models of order 25, 50, 75, 100, 125, 150, 175, and 200 to each data split. This produces 100 sets of 900 components from which they select the 900 with the highest average spatial similarity (calculated by Pearson correlation) across the 100 sets. They then identify those ICs among the 900 that are most distinct from each other (spatial similarity < 0.8). In Section 6.2 of Supplement A, we adapt this procedure to the AOMIC data, showing that the final subset of estimated ICs does, in fact, preserve low-order structure. However, this requires more computing resources than a single-model ICA approaches. Moreover, this output lacks some of the interpretability afforded by that from a single-model approach: estimated ICs from this multi-scale algorithm are cherry-picked from many IC models, consequently collapsing the straightforward independence-based interpretation arising from the assumptions of a single IC model.

\section{Discussion}
\label{sec:discussion}

This work recognizes the central contribution of \cite{descary-panaretos-2019} -- a framework that decouples the global and local variation of a functional observation -- as the proper scaffolding upon which to build a factor analytic approach for functional data. Our work extends their framework to multidimensional functional data, enhances the estimator of the global covariance via a roughness penalty, then appends a post-processing procedure that improves interpretability. The result is a factor analytic approach tailored to the study of functional connectivity in fMRI. 

Three characteristics of our methodology make it uniquely suited to the study of functional connectivity. First, it outputs spatial maps possessing simple structure, which facilitates interpretation of the latent factors that drive correlations in the data. It is a hallmark of factor analytic methods and has, in fields like education and psychology, made multivariate factor analysis a popular alternative to less legible techniques, like PCA. We demonstrate the utility of this structure in the neuroimaging context. Second, our approach is not preceded by data smoothing like other functional connectivity methods. Given that connectivity studies aim to characterize correlations between distant brain regions, model estimation should exclude local variation, yet data smoothing forbids this. It smears local variation into the global signal, thereby corrupting subsequent attempts to estimate this signal. Third, our approach is robust to dimension overspecification. This model feature, not found in popular ICA-based methods, is critical in fMRI where it is difficult to correctly specify the dimension of the signal. 

The limitations of our methodology present several opportunities for future work. One weakness of the approach is that its application does not consider all spatial information available in an fMRI scan. Although the model remains valid for 3-dimensional brain volumes, inclusion of a third spatial dimension renders the existing estimation method computationally impractical. Additionally, the model ignores the temporal information present in fMRI data. We attempted to mitigate the impact of these temporal dynamics in our analysis through prewhitening, but such dependencies ought to be acknowledged by the model. Future work on functional factor modeling of large-scale functional or spatiotemporal data should address these concerns.


\pagebreak
\begin{center}
\textbf{\LARGE Supplement A: Functional Factor Modeling of Brain Connectivity}
\end{center}
\setcounter{section}{0}
\setcounter{equation}{0}
\setcounter{figure}{0}
\setcounter{table}{0}
\makeatletter
\vspace{1em}

\section{Notation and Definitions}
\label{nd}

\subsection{Tensor Reshaping}
\label{nd-tr}

In the discrete observation paradigm, it is useful to define some \textit{tensor reshapings}, or bijections between the set of indices of one tensor and that of another tensor. Denote the index set of $\mcal{X} \in \mbb{R}^{\mfrak{M}}$ by $\mbb{N}_{\und{M}} = \mbb{N}_{M_1} \times \dots \times \mbb{N}_{M_D}$. Define a bijection $\mu: \mbb{N}_{\und{M}} \to \mbb{N}_{M_1 \dots M_D}$ so that the \textit{vectorization} of $\mcal{X} \in \mbb{R}^{\mfrak{M}}$ is the vector 
\begin{align*}
    \mcal{X}_{\text{vec}} := [v_{\mu^{-1}(m)}]_{m=1}^{M_1\dots M_D}. 
\end{align*}
Next, define a symmetric bijection $\gamma: \mbb{N}_{\und{M}} \times \mbb{N}_{\und{M}} \to \mbb{N}_{M_1 \dots M_D} \times \mbb{N}_{M_1 \dots M_D}$ so that the \textit{square matricization} of $\mcal{C} \in \mbb{R}^{\mfrak{M}\times\mfrak{M}}$ is the square matrix  
\begin{align*}
    \mcal{C}_{\text{sq}} := [w_{\gamma^{-1}(i, j)}]_{i,j = 1}^{M_1 \dots M_D}. 
\end{align*}
Although there are many suitable selections for $\mu$ and $\gamma$, we choose these bijections to be consistent with the array reshaping operator in MATLAB. 

\subsection{Properties of Tensors}
\label{nd-pot}

We use the tensor reshapings of Section \ref{nd-tr} to define several properties of tensors. We say $\mcal{C} \in \mbb{R}^{\mfrak{M}\times \mfrak{M}}$ is \textit{symmetric} if $\mcal{C}_{\text{sq}}$ is symmetric, \textit{nonnegative definite} if $\mcal{C}_{\text{sq}}$ is nonnegative definite, and \textit{rank-$r$} if $\mcal{C}_{\text{sq}}$ is rank-$r$. Through square matricization and vectorization, it can be shown that any symmetric nonnegative definite tensor $\mcal{C} \in \mbb{R}^{\mfrak{M}\times\mfrak{M}}$ with rank-$r$ admits the eigendecomposition
\begin{align*}
    \mcal{C} = \sum_{k=1}^r c_k \mcal{A}_k \otimes \mcal{A}_k,
\end{align*}
where the $c_k$ are the nonnegative \textit{eigenvalues} of $\mcal{C}$ and the orthonormal $\mcal{A}_k$ are the \textit{eigentensors} of $\mcal{C}$.

\section{Rotational Indeterminacy}
\label{ri}

The FFM is identifiable up to an orthogonal rotation. To see this, let $\mfrak{L} \in [0,1]^D \times \mbb{N}_K$, defined by $\mfrak{L}_{\und{s},k} = l_{k}(\und{s})$, be a set of loading functions indexed by $k$. For a $K \times K$ rotation matrix $\mb{R}$, define the rotation operator $\mcal{M}_{\mb{R}}: [0,1]^D \times \mbb{N}_K \to [0,1]^D \times \mbb{N}_K$ by $(\mcal{M}_{\mb{R}} \mfrak{L})_{\und{s},k} = (\mb{R}\mfrak{L}_{\und{s},\cdot})_k$. That is, $\mcal{M}_{\mb{R}}$ rotates the loading functions by rotating the $K$-dimensional vector $(l_1(\und{s}), \dots, l_K(\und{s}))$ by $\mb{R}$ at each fixed $\und{s} \in [0,1]^D$. If $\und{f} = (f_1, \dots, f_K)$ is a vector of factors and $\mb{R}$ is some rotation matrix, then the following equivalences hold:
\begin{align*}
    Y(\und{s}) = \sum_{k=1}^K \mfrak{L}_{\und{s},k} \und{f}_k = \sum_{k=1}^K (\mcal{M}_{\mb{R}}\mfrak{L})_{\und{s},k} (\mb{R}\und{f})_k
\end{align*}
and
\begin{align*}
    \mscr{G} = \sum_{k=1}^K \mfrak{L}_{\cdot, k} \otimes \mfrak{L}_{\cdot, k} = \sum_{k=1}^K (\mcal{M}_{\mb{R}}\mfrak{L})_{\cdot, k} \otimes (\mcal{M}_{\mb{R}}\mfrak{L})_{\cdot, k}.
\end{align*}
Thus, when the global component $Y$ and its covariance $\mscr{G}$ are known, it is impossible to distinguish the loading functions of $\mfrak{L}$ from those of $\mcal{M}_{\mb{R}}\mfrak{L}$.

\section{Theoretical Support}
\label{ts}

This section includes statements and proofs of results described in the manuscript, as well as those for auxiliary results that enrich our theory. 

\subsection{Identifiability in the Completely-Observed Paradigm}
\label{ts-id-co}

We begin with Theorem \ref{thm:id}, which establishes conditions needed to identify the global covariance $\mscr{G}$ and the local covariance $\mscr{B}$ from the covariance $\mscr{C}$ in the complete-observation paradigm. The result is a straightforward extension of Theorem 1 from \cite{descary-panaretos-2019} to multidimensional functional data.  

\begin{theorem}[Identifiability]
\label{thm:id}
    Let $\mscr{G}_1,\mscr{G}_2: L^2([0,1]^D) \to L^2([0,1]^D)$ be trace-class covariance operators of rank $K_1,K_2 < \infty$, respectively. Let $\mscr{B}_1,\mscr{B}_2: L^2([0,1]^D) \to L^2([0,1]^D)$ be banded trace-class covariance operators whose  bandwidths $\und{\delta}_1$ and  $\und{\delta}_2$, respectively, have components less than 1. If the eigenfunctions of $\mscr{G}_1$ and $\mscr{G}_2$ are real analytic, then we have the equivalence
    \begin{equation*}
        \mscr{G}_1 + \mscr{B}_1 = \mscr{G}_2 + \mscr{B}_2 \hspace{1em} \Longleftrightarrow \hspace{1em} \mscr{G}_1 = \mscr{G}_2  \hspace{1em} \text{and} \hspace{1em} \mscr{B}_1 = \mscr{B}_2.
    \end{equation*}
\end{theorem}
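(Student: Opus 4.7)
The reverse direction is immediate, so all the work lies in showing that a shared sum forces equality of each component. My plan is to pass to kernels, isolate the difference $g_1 - g_2 = b_2 - b_1$, and exploit that the right-hand side vanishes off a common band while the left-hand side is real analytic on all of $[0,1]^D \times [0,1]^D$, then invoke analytic continuation.

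First, since $\mscr{G}_1$ and $\mscr{G}_2$ are finite-rank covariance operators with real analytic eigenfunctions, their Mercer kernels $g_i(\und{s}_1,\und{s}_2) = \sum_{k=1}^{K_i} \lambda_{i,k} \eta_{i,k}(\und{s}_1)\eta_{i,k}(\und{s}_2)$ are finite sums of products of real analytic functions in each argument. Real analyticity is preserved under finite sums and products, and products of analytic functions in disjoint variables remain jointly analytic, so $g_1$ and $g_2$, and hence $g_1 - g_2$, extend to real analytic functions on $[0,1]^D \times [0,1]^D$.

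Next, let $\und{\delta} = (\max(\delta_{1,d},\delta_{2,d}))_{d=1}^{D}$, which still has every component strictly less than $1$. The kernels $b_1$ and $b_2$ both vanish on the set
\begin{equation*}
    \Omega = \{(\und{s}_1,\und{s}_2) \in [0,1]^D \times [0,1]^D : |s_{1,d} - s_{2,d}| \geq \delta_d \text{ for some } d \},
\end{equation*}
so the hypothesized identity forces $g_1 - g_2 = b_2 - b_1 = 0$ throughout $\Omega$. Because each $\delta_d < 1$, the set $\Omega$ contains a nonempty open subset of the connected domain $[0,1]^D \times [0,1]^D$ (for instance, a neighborhood of a point where some coordinates satisfy $s_{1,d} = 0$, $s_{2,d} = 1$).

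Now the main step is analytic continuation: a real analytic function on a connected open set that vanishes on a nonempty open subset vanishes identically. Applying this to the real analytic extension of $g_1 - g_2$, which vanishes on the open interior of $\Omega$, yields $g_1 \equiv g_2$ on the full domain. Consequently $\mscr{G}_1 = \mscr{G}_2$, and subtracting from the original identity gives $\mscr{B}_1 = \mscr{B}_2$. The only real obstacle is confirming the joint real analyticity of the Mercer kernel from analyticity of the eigenfunctions together with finiteness of the rank; all remaining steps are essentially bookkeeping, which is why the argument hinges on the finite-rank and analyticity assumptions imposed on the global component.
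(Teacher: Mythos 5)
Your proposal is correct and follows essentially the same route as the paper's proof: establish joint real analyticity of the finite-rank Mercer kernels, observe that the banded kernels force $g_1 = g_2$ on an open off-band subset of the domain, and conclude by analytic continuation, then recover $\mscr{B}_1 = \mscr{B}_2$ by subtraction. The only differences are cosmetic — you work with the difference $g_1 - g_2$ and invoke the identity theorem directly on an open set where one coordinate pair is near $0$ and $1$ (a perfectly valid witness, since each $\delta_d < 1$), whereas the paper argues via the dimension of the zero set of an analytic function and its own choice of off-band rectangle $W$.
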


By definition, a function $f$ defined on open $U \subset \mbb{R}$ is real analytic at $x_0$ if $f$ is equal to its Taylor expansion about $x_0$ in some neighborhood of $x_0$. A function $f$ is real analytic on open $V \subset U$ if it is real analytic at each $x_0$ in $V$. In the proof of the above result, we use an alternative characterization of real analyticity that generalizes to multivariable functions. The power of analyticity in our setting comes from the so-called analytic continuation property: if a function is analytic on an open interval $U \subset \mbb{R}^D$, but is known only on open $V \subset U$, then the function extends uniquely to $U$. In the proof of Theorem \ref{thm:id}, after showing that the kernel $g$ is real analytic on $[0,1]^{2D}$, we use this observation to uniquely extend $g$ to the on-band subset of its domain (contaminated by $b$) based on off-band values (uncontaminated by $b$).

\begin{proof}[Proof of Theorem \ref{thm:id}]
    Let $g_1,g_2 \in L^2([0,1]^{2D})$ denote the kernels of $\mscr{G}_1$ and $\mscr{G}_2$, respectively. We begin by showing that $g_1$ (similarly, $g_2$) is analytic on $[0,1]^{2D}$. By Mercer's Theorem \citep{hsing-eubank-2015}, $g_1$ admits the decomposition
    \begin{align*}
        g_1(\und{s}_1, \und{s}_2) = \sum_{k=1}^K \lambda_k \eta_k(\und{s}_1)\eta_k(\und{s}_2),
    \end{align*}
    where we have assumed that each $\eta_k$ is analytic on $[0,1]$. Equivalently, each $\eta_k$ has some real analytic extension $\Tilde{\eta}_k$ to some open $U_k \supset [0,1]^D$. Let $U = \cap_k U_k$ and choose open $V$ and compact $R$ so that $U \supset R \supset V \supset [0,1]^D$. Since each $\Tilde{\eta}_k$ is real analytic on $U$ and $K \subset U$ is compact, there exists a positive constant $C$ such that for every multi-index $\alpha \in \mbb{N}^D$, 
    \begin{align*}
        \sup_{\und{s} \in R} \left| \frac{\partial^{\alpha} \eta_k}{\partial \und{s}^{\alpha}} (\und{s}) \right| \leq C^{\abs{\alpha}} \alpha!, \hspace{1em} k = 1, \dots, K.
    \end{align*}
    Define $\lambda^* = \max_k \lambda_k$, $\eta^* = \max_k \sup_{\und{s}\in R} \abs{\Tilde{\eta}_k (\und{s})}$, and $\Tilde{g}_1(\und{s}_1, \und{s}_2) = \sum_{k=1}^K \lambda_k \Tilde{\eta}_k (\und{s}_1)\Tilde{\eta}_k (\und{s}_2)$. Then for every multi-index $\alpha$,
    \begin{align*}
        \sup_{(\und{s}_1, \und{s}_2) \in V \times V} \abs{ \frac{\partial^{\alpha}}{\partial \und{s}_1^{\alpha}} \Tilde{g}_1(\und{s}_1, \und{s}_2) } & \leq \lambda^* \eta^* \sum_{k=1}^K \sup_{\und{s} \in V} \abs{ \frac{\partial^{\alpha}}{\partial \und{s}_1^{\alpha}} \Tilde{\eta}_k (\und{s}_1) } \\
        & \leq \lambda^* \eta^* \sum_{k=1}^K \sup_{\und{s} \in R} \abs{ \frac{\partial^{\alpha}}{\partial \und{s}_1^{\alpha}} \Tilde{\eta}_k (\und{s}_1) } \\
        & \leq \lambda^* \eta^* \sum_{k=1}^K C^{\abs{\alpha}} \alpha! \\
        & \leq C_0^{\abs{\alpha}} \alpha!,
    \end{align*}
    for some sufficiently large $C_0$. By symmetry, we also have 
    \begin{align*}
        \sup_{(\und{s}_1, \und{s}_2) \in V \times V} \abs{ \frac{\partial^{\alpha}}{\partial \und{s}_2^{\alpha}} \Tilde{g}_1(\und{s}_1, \und{s}_2) } \leq  C_0^{\abs{\alpha}} \alpha!.
    \end{align*}
    Thus, by Theorem 4.3.3 of \citep{krantz-park-2002}, $\Tilde{g}_1$ is real analytic on $V \times V$, making $g_1$ (similarly $g_2$) real analytic on $[0,1]^{2D}$.

    By Theorem 6.3.3 of \citep{krantz-park-2002}, the zero set of either kernel is at most $(2D - 1)$-dimensional, provided the kernels are not uniformly zero. Since the result follows trivially if $\mscr{G}_1$ and $\mscr{G}_2$ are the zero operator, we assume that their kernels are not uniformly zero. Thus, if we can show that the two kernels coincide on an open (i.e., $2D$-dimensional) subset $W$ of $[0,1]^{2D}$, then by Corollary 1.2.6 of \citep{krantz-park-2002}, they will coincide on $(0,1)^{2D}$, and on $[0,1]^{2D}$ by continuity. 

    Let $\delta < 1$ be the maximum component across $\und{\delta}_1$ and $\und{\delta}_2$, and define $W = (\delta, 1)^D \times (0, 1 - \delta)^D$. Since $\mscr{G}_1 + \mscr{B}_1 = \mscr{G}_1 + \mscr{B}_1$, but $\mscr{B}_1 = \mscr{B}_2 = 0$ on $W$, it must be that $\mscr{G}_1 = \mscr{G}_2$ on $W$. This completes the proof. 
\end{proof} 

Despite this seemingly restrictive property, the assumption of analyticity is not overly prohibitive. There are many examples of analytic functions, including polynomials, the exponential and logarithmic functions, and trigonometric functions. Moreover, the class of real analytic functions is closed under addition, subtraction, multiplication, division (given a nonvanishing denominator), differentiation, and integration. In fact, Theorem \ref{thm:dens} shows that any finite-rank covariance operator is well-approximated by finite-rank covariance operators having real analytic eigenfunctions. 

\begin{theorem}[Density]
\label{thm:dens}
    Let $Z$ be an $L^2([0,1]^D)$-valued random function with a trace-class covariance $\mscr{H}$ of rank $K < \infty$. Then, for any $\epsilon > 0$, there exists a random function $Y$ whose covariance $\mscr{G}$ has analytic eigenfunctions and rank $q \leq K$, such that
    \begin{equation*}
        \mbb{E} \norm{Z - Y}_{L^2}^2 < \epsilon \hspace{1em} \text{and} \hspace{1em} \norm{\mscr{H} - \mscr{G}}_{*} < \epsilon
    \end{equation*}
    for $\norm{\cdot}_{*}$ the nuclear norm. If additionally $\mscr{H}$ has $C^1$ eigenfunctions on $[0,1]^D$ then we have the stronger result that for any $\epsilon > 0$, there exists a random function $Y$ whose covariance $\mscr{G}$ has analytic eigenfunctions and rank $q \leq K$, such that
    \begin{equation*}
        \sup_{\und{s} \in [0,1]^D} \mbb{E} \abs{Z(\und{s}) - Y(\und{s})}^2 < \epsilon \hspace{1em} \text{and} \hspace{1em} \sup_{\und{s}_1,\und{s}_2 \in [0,1]^D} \abs{h(\und{s}_1, \und{s}_2) - g(\und{s}_1, \und{s}_2)} < \epsilon,
    \end{equation*}
    where $h$ and $g$ are the kernels of $\mscr{H}$ and $\mscr{G}$, respectively. 
\end{theorem}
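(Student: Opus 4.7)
The strategy is to replace each Mercer eigenfunction of $\mscr{H}$ with a nearby polynomial (every polynomial being real analytic on $[0,1]^D$) and to reuse the Karhunen--Loève coefficients of $Z$ to build $Y$. Assuming without loss of generality that $\mbb{E}[Z] = 0$, Mercer's theorem gives $h = \sum_{k=1}^K \mu_k \phi_k \otimes \phi_k$ with an $L^2$-orthonormal system $\{\phi_k\}$ and strictly positive eigenvalues $\mu_k$, and the corresponding Karhunen--Loève expansion gives $Z = \sum_{k=1}^K \sqrt{\mu_k}\,\xi_k \phi_k$ in $L^2$, where the scalars $\xi_k := \mu_k^{-1/2}\ip{Z}{\phi_k}_{L^2}$ are mean-zero, unit-variance, and mutually uncorrelated. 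Since polynomials are dense in $L^2([0,1]^D)$ by Weierstrass, for any $\eta > 0$ I can choose polynomials $\psi_k$ with $\norm{\phi_k - \psi_k}_{L^2} < \eta$, and I set
\[
Y := \sum_{k=1}^K \sqrt{\mu_k}\,\xi_k \psi_k, \qquad g(\und{s}_1, \und{s}_2) := \sum_{k=1}^K \mu_k \psi_k(\und{s}_1)\psi_k(\und{s}_2).
\]
The range of the resulting covariance operator $\mscr{G}$ sits inside $\mathrm{span}\{\psi_1, \dots, \psi_K\}$, so $\mscr{G}$ has rank $q \le K$ and every nonzero eigenfunction -- being a member of the range -- is a linear combination of analytic functions, hence analytic.

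Two short computations control the approximation error. Using the uncorrelatedness of the $\xi_k$,
\[
\mbb{E}\norm{Z - Y}_{L^2}^2 = \sum_{k=1}^K \mu_k \norm{\phi_k - \psi_k}_{L^2}^2 \le \eta^2 \sum_{k=1}^K \mu_k.
\]
For the nuclear norm, the rank-one identity
\[
\phi_k \otimes \phi_k - \psi_k \otimes \psi_k = (\phi_k - \psi_k) \otimes \phi_k + \psi_k \otimes (\phi_k - \psi_k),
\]
together with $\norm{u \otimes v}_{*} = \norm{u}_{L^2}\norm{v}_{L^2}$ and $\norm{\psi_k}_{L^2} \le 1 + \eta$, yields
\[
\norm{\mscr{H} - \mscr{G}}_{*} \le \eta(2 + \eta)\sum_{k=1}^K \mu_k.
\]
A single sufficiently small $\eta$ drives both quantities below $\epsilon$, establishing the first half of the theorem.

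For the stronger sup-norm conclusion, the $C^1$ hypothesis makes each $\phi_k$ continuous on $[0,1]^D$, so Weierstrass now supplies polynomials $\psi_k$ with $\sup_{\und{s}} \abs{\phi_k(\und{s}) - \psi_k(\und{s})} < \eta$. Repeating the two estimates above with sup-norms in place of $L^2$-norms, and letting $M := \max_k \sup_{\und{s}} \abs{\phi_k(\und{s})}$ absorb the remaining pointwise factor in the rank-one decomposition, gives $\sup_{\und{s}} \mbb{E}\abs{Z(\und{s}) - Y(\und{s})}^2 \le \eta^2 \sum_k \mu_k$ and $\sup_{\und{s}_1,\und{s}_2} \abs{h - g} \le \eta(2M + \eta)\sum_k \mu_k$, both controllable by small $\eta$. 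The arguments are essentially mechanical; the only conceptual point worth flagging is that the eigenfunctions of $\mscr{G}$ come out analytic for free, because the range of $\mscr{G}$ is a finite-dimensional subspace of real analytic functions -- no Gram--Schmidt or other explicit orthonormalization is needed. The main obstacle is therefore not in any single estimate but in ensuring that all three required bounds are simultaneously controlled by the same choice of $\eta$, which the rank-one decomposition accomplishes in one stroke.
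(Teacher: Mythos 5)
Your argument is correct, and it is the same in spirit as the paper's: both proofs keep the Karhunen--Lo\`eve scores of $Z$ and replace each eigenfunction of $\mscr{H}$ by a nearby real analytic function, then control the $L^2$/nuclear-norm and sup-norm errors. The difference is in the approximation tool and in self-containedness. The paper does not reprove the result; it invokes Proposition 1 of Descary and Panaretos (which uses Fourier partial-sum approximations of the eigenfunctions for $D=1$) and extends it to $D>1$ by noting that the Fourier tensor-product basis consists of analytic functions and yields sup-norm-dense approximations of $C^1([0,1]^D)$ functions. You instead use Stone--Weierstrass to approximate the eigenfunctions by polynomials, which gives a direct, dimension-free proof without citing the $D=1$ case, and your rank-one decomposition $\phi_k\otimes\phi_k-\psi_k\otimes\psi_k=(\phi_k-\psi_k)\otimes\phi_k+\psi_k\otimes(\phi_k-\psi_k)$ cleanly delivers all the bounds at once. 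A side benefit of your route is that the uniform approximation step needs only continuity of the eigenfunctions, so the $C^1$ hypothesis enters only through continuity (the Fourier route genuinely needs $C^1$ to get uniform convergence of the partial sums). Two small points you could tighten: the identity $Z(\und{s})=\sum_k\sqrt{\mu_k}\,\xi_k\phi_k(\und{s})$ used in the sup-norm estimate holds after passing to a suitable modification of $Z$ (the finite KL expansion equals $Z$ only as an element of $L^2([0,1]^D)$ almost surely), and the mean-zero reduction should be stated as adding the mean function back to $Y$, which leaves both error bounds and both covariances unchanged.
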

\begin{proof}{}
    Proposition 1 of \cite{descary-panaretos-2019} establishes this result for $D = 1$. The proof uses Fourier series approximations of the eigenfunctions of $\mscr{H}$ to establish each result. To extend these results to $D>1$, we need only apply the same reasoning using a basis of $L^2([0,1]^D)$ that is composed of functions analytic on $[0,1]^D$, and that yields approximations dense in $C^1([0,1]^D)$ under the sup-norm. One can show that these criteria are satisfied by the Fourier tensor product basis,
    \begin{align*}
        \{ e_{n_1} \otimes \dots \otimes e_{n_D} : n_1, \dots, n_D \in \mbb{N}\},
    \end{align*}
    where $e_{n_d}$ is the $n_d$-th function of the Fourier basis of $L^2([0,1])$.
\end{proof}

\subsection{Identifiability in the Discretely-Observed Paradigm}
\label{ts-id-do}

Next, we turn to theory for the discrete-observation setting. We begin by describing an observation scheme that generalizes the evenly-spaced grid considered in the manuscript. Let $\{I_{\und{m},\und{M}}\}_{\und{m}\in \mbb{N}_{\und{M}}}$ be the partition of $[0,1]^D$ into cells of size $M_1^{-1} \times \dots \times M_D^{-1}$, where $\und{M}$ and $\mbb{N}_{\und{M}}$ are defined as in the manuscript. Suppose we observe each of the $n$ samples of $X$ at the $M = M_1 \dots M_D$ discrete points in $\{ \und{s}_{\und{m}} \}_{\und{m} \in \mbb{N}_{\und{M}}}$ where $\und{s}_{\und{m}} \in I_{\und{m},\und{M}}$, and denote the set of all possible discrete point collections by $\mcal{S}_{\und{M}}$. From here, we can follow the steps taken in the manuscript to define the \textit{\und{M}-resolution functional factor model} with $K$ factors as
\begin{align*}
    \mcal{X}^{\und{M}} = \sum_{k=1}^K \mcal{L}^{\und{M}}_k f_k + \mcal{E}^{\und{M}}.
\end{align*}
We denote the $\und{M}$-resolution versions of $(\mscr{C}, \mscr{G}, \mscr{B})$ by $(\mscr{C}^{\und{M}}, \mscr{G}^{\und{M}}, \mscr{B}^{\und{M}})$. These operators have kernels
\begin{align*}
    c^{\und{M}}(\und{x}, \und{y}) & = \sum_{\und{i},\und{j} \in \mbb{N}_{\und{M}}} c(\und{s}_{\und{i}}, \und{s}_{\und{j}}) \mathds{1}\{ (\und{x}, \und{y}) \in I_{\und{i},\und{M}} \times I_{\und{j},\und{M}}\}, \\
    g^{\und{M}}(\und{x}, \und{y}) & = \sum_{\und{i},\und{j} \in \mbb{N}_{\und{M}}} g(\und{s}_{\und{i}}, \und{s}_{\und{j}}) \mathds{1}\{ (\und{x}, \und{y}) \in I_{\und{i},\und{M}} \times I_{\und{j},\und{M}}\}, \\
    b^{\und{M}}(\und{x}, \und{y}) & = \sum_{\und{i},\und{j} \in \mbb{N}_{\und{M}}} b(\und{s}_{\und{i}}, \und{s}_{\und{j}}) \mathds{1}\{ (\und{x}, \und{y}) \in I_{\und{i},\und{M}} \times I_{\und{j},\und{M}}\},
\end{align*}
which can be summarized using the tensors of dimension $M_1 \times \dots \times M_D \times M_1 \times \dots \times M_D$,
\begin{align*}
    \mcal{C}^{\und{M}} (\und{i},\und{j})  = c^{\und{M}}(\und{s}_{\und{i}},  \und{s}_{\und{j}}), \hspace{1em}
    \mcal{G}^{\und{M}} (\und{i},\und{j}) = g^{\und{M}}(\und{s}_{\und{i}}, \und{s}_{\und{j}}), \hspace{1em}
    \mcal{B}^{\und{M}} (\und{i},\und{j}) = b^{\und{M}}(\und{s}_{\und{i}}, \und{s}_{\und{j}}).
\end{align*}
As in the manuscript, the covariance decomposes as
\begin{equation*}
    \mcal{C}^{\und{M}} = \underbrace{\sum_{k=1}^K \mcal{L}^{\und{M}}_k \otimes \mcal{L}^{\und{M}}_k}_{\mcal{G}^{\und{M}}} + \mcal{B}^{\und{M}}.
\end{equation*}

Theorem \ref{thm:id-disc} presents conditions needed to identify $\mcal{G}$ and $\mcal{B}$ from $\mcal{C}$. Chief among them is the requirement from the complete-observation setting that $\mscr{G}$ be finite-rank with real analytic eigenfunctions. Theorem \ref{thm:nonzero-minors} is an intermediate result that links the functional assumption of analyticity to discretely-observed covariances. 

\begin{theorem}
\label{thm:nonzero-minors}
    Let $\mscr{G}$ have kernel $g(\und{s}_1, \und{s}_2) = \sum_{k=1}^K \lambda_k \eta_k(\und{s}_1) \eta_k(\und{s}_2)$ with $K < \infty$ and real analytic eigenfunctions $\{ \eta_1, \dots, \eta_K\}$. If $M = M_1 \dots M_D > K$, then all order-$K$ minors of $\mcal{G}^{\und{M}}_{\text{sq}}$ are nonzero, almost everywhere on $\mcal{S}_{\und{M}}$.
\end{theorem}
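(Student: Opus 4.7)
The plan is to exploit the rank-$K$ factorization of $\mcal{G}^{\und{M}}_{\text{sq}}$ and reduce every order-$K$ minor to the evaluation of a single real analytic function whose nonvanishing is controlled by linear independence of the $\eta_k$. Let $\mu:\mbb{N}_{\und{M}}\to\mbb{N}_M$ denote the vectorization bijection, and define $\mb{H}\in\mbb{R}^{M\times K}$ by $(\mb{H})_{\mu(\und{i}),k}=\sqrt{\lambda_k}\,\eta_k(\und{s}_{\und{i}})$. The Mercer form of $g$ immediately gives $\mcal{G}^{\und{M}}_{\text{sq}}=\mb{H}\mb{H}^T$. For any $K$-element row/column subsets $I,J\subset\{1,\dots,M\}$, the Cauchy--Binet formula (in the square case) yields
\[
\det\!\bigl((\mcal{G}^{\und{M}}_{\text{sq}})_{I,J}\bigr)\;=\;\det(\mb{H}_{I,:})\,\det(\mb{H}_{J,:}),
\]
so every order-$K$ minor factors into two determinants of $K\times K$ blocks of $\mb{H}$, and it suffices to show each $\det(\mb{H}_{I,:})$ is nonzero for almost every grid configuration.

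Fix $K$ distinct multi-indices $\und{i}_1,\dots,\und{i}_K\in\mbb{N}_{\und{M}}$ (indexing $I$) and consider
\[
F(\und{t}_1,\dots,\und{t}_K)\;:=\;\det\bigl[\eta_k(\und{t}_\ell)\bigr]_{k,\ell=1}^K
\]
as a function on $([0,1]^D)^K$. As in the proof of Theorem \ref{thm:id}, each $\eta_k$ extends to a real analytic function on an open neighborhood of $[0,1]^D$, hence $F$ is real analytic on an open neighborhood of $([0,1]^D)^K$. The key step is that $F$ is not identically zero; I would prove this by induction on $K$. For $K=1$, $\eta_1$ is nonzero somewhere. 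For the inductive step, fix $\und{t}_1,\dots,\und{t}_{K-1}$ so that the $(K-1)\times(K-1)$ minor with columns $\eta_1,\dots,\eta_{K-1}$ is nonzero; expanding the full $K\times K$ determinant along its last column, if $F(\und{t}_1,\dots,\und{t}_{K-1},\,\cdot\,)$ vanished identically on $[0,1]^D$, the cofactor expansion would exhibit $\eta_K$ as a nontrivial linear combination of $\eta_1,\dots,\eta_{K-1}$, contradicting orthonormality.

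To conclude, I would invoke the standard fact that a real analytic function on an open subset of Euclidean space that is not identically zero has Lebesgue-measure-zero vanishing set (the classical Mityagin argument). Since $\lambda_1,\dots,\lambda_K$ are strictly positive (being the nonzero eigenvalues of the rank-$K$ covariance $\mscr{G}$), $\det(\mb{H}_{I,:})$ differs from $F$ evaluated at $(\und{s}_{\und{i}_1},\dots,\und{s}_{\und{i}_K})$ only by the nonzero scalar $\prod_k\sqrt{\lambda_k}$. Thus the set of grid points in the product of cells $I_{\und{i}_1,\und{M}}\times\cdots\times I_{\und{i}_K,\und{M}}$ yielding $\det(\mb{H}_{I,:})=0$ has measure zero; the remaining grid coordinates are unconstrained, so the set of bad $\{\und{s}_{\und{m}}\}\in\mcal{S}_{\und{M}}$ is measure zero for this particular $I$. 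Finally, I would take the union over the finitely many ($\binom{M}{K}$) choices of $I$; measure zero is preserved under finite union.

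The main obstacle is establishing $F\not\equiv 0$: analyticity alone guarantees a measure-zero vanishing set \emph{given} nonvanishing at some point, and the inductive cofactor-expansion argument is what converts the hypothesis of rank exactly $K$ (equivalently, linear independence of the $\eta_k$) into the existence of such a point. Everything else---the factorization via Cauchy--Binet, the analytic continuation of eigenfunctions, and the measure-theoretic wrap-up---is essentially bookkeeping once this step is in place.
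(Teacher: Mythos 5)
Your proposal is correct and follows essentially the same route as the paper: factor each order-$K$ minor through the rank-$K$ decomposition of $\mcal{G}^{\und{M}}_{\text{sq}}$ into a product of $K\times K$ determinants of eigenfunction evaluations, view such a determinant as a real analytic function of the $K$ grid points, and conclude almost-everywhere nonvanishing (after a Fubini argument and a finite union over index subsets) from the fact that a not-identically-zero analytic function has a null zero set. The only divergence is the sub-step showing that determinant function is not identically zero: you argue by induction with a cofactor expansion and linear independence of the $\eta_k$, while the paper iteratively applies the inner-product functionals $T_{\eta_k}$ to the purportedly vanishing determinant to force $\eta_1 \equiv 0$; both hinge on orthonormality and are interchangeable.
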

\begin{proof}{}
    First, notice that from $g(\und{s}_1, \und{s}_2) = \sum_{k=1}^K \lambda_k \eta_k (\und{s}_1) \eta_k (\und{s}_2)$, we have that
    \begin{align*}
        (\mcal{G})_{(\und{i},\und{j})} = (\mcal{G}_\text{sq})_{(\gamma(\und{i}), \gamma(\und{j}))} = \sum_{k=1}^K \lambda_k \eta_k (\und{s}_{\und{i}}) \eta_k (\und{s}_{\und{j}}), \hspace{1em} \und{i}, \und{j} \in \mbb{N}_{\und{M}},
    \end{align*}
    where $\gamma: \mbb{N}_{\und{M}} \to \mbb{N}_{M}$ is the index set bijection that defines square matricization of $\mcal{G}$. Thus, the the rank-$K$ matrix $\mcal{G}_{\text{sq}} \in \mbb{R}^{M \times M}$ can be written as $\mb{U} \mb{\Sigma} \mb{U}^T$, where $(\mb{U})_{(\gamma (\und{i}), k)} = \eta_k (\und{s}_{\und{i}})$ and $\mb{\Sigma} = \text{diag}(\lambda_1, \dots, \lambda_K)$. Note that any $K \times K$ submatrix of $\mcal{G}_{\text{sq}}$ can be written as
    \begin{align*}
        \mb{U}_F \mb{\Sigma} \mb{U}_{F'}^T,
    \end{align*}
    where $\mb{U}_F$ and $\mb{U}_{F'}$ are $K \times K$ matrices obtained by deleting rows of $\mb{U}$ whose indices are not included in the cardinality-$K$ subsets $F,F' \subset \mbb{N}_{M}$, respectively. The condition that any order-$K$ minor of $\mcal{G}_{\text{sq}}$ be nonzero is equivalent to the condition that
    \begin{align*}
        \det\left[ \mb{U}_F \mb{\Sigma} \mb{U}_{F'}^T \right] = \det\left[ \mb{U}_F \right] \det\left[\mb{\Sigma}\right] \det\left[ \mb{U}_{F'}^T \right] \neq 0,
    \end{align*}
    for any subsets $F,F' \subset \mbb{N}_{M}$ of cardinality $K$. By construction, $\det[\mb{\Sigma}] \neq 0$, so the minor condition is then equivalent to requiring that $\det[\mb{U}_F] \neq 0$ for any subset $F \subset \mbb{N}_{M}$ of cardinality $K$. 

    We will show that this is indeed the case almost everywhere on $\mcal{S}_{\und{M}}$. Let $\mu$ denote the Lebesgue measure on $\mcal{S}_{\und{M}}$ and let $F = \mbb{N}_K$, without loss of generality. Using the Leibniz formula, we have that $\det [\mb{U}_F]$ can be written as the function
    \begin{align*}
        D_{\det} (\und{s}_{\gamma^{-1}(1)}, \dots, \und{s}_{\gamma^{-1}(K)}) & = \sum_{\sigma \in S_K} \epsilon(\sigma) \prod_{k=1}^K (\mb{U}_F)_{(\sigma(k),k)} \\
        & = \sum_{\sigma \in S_K} \epsilon(\sigma) \prod_{k=1}^K \eta_{k}(\und{s}_{\gamma^{-1}(\sigma(k))}),
    \end{align*}
    where $S_K$ is the symmetric group on $K$ elements and $\epsilon (\sigma)$ is the signature of the permutation $\sigma$. Note that the function $D_{\det}$ is real analytic on $[0,1]^{DK}$, by virtue of each $\eta_k$ being real analytic on $[0,1]^D$.

    We will now proceed by contradiction. Assume that
    \begin{align*}
        \mu \left\{ \{ \und{x}_{\und{j}} \}_{\und{j} \in \mbb{N}_{\und{M}}} \in \mcal{S}_{\und{M}} : D_{\det}(\mb{x}_{\gamma^{-1} (1)}, \dots , \mb{x}_{\gamma^{-1} (K)}) = 0 \right\} > 0.
    \end{align*}
    Since $\mu$ is the Lebesque measure, it follows that the Hausdorff dimension of the set $A = \{ (\und{x}_{1}, \dots,  \und{x}_{K}) : D_{\det} (\und{x}_{1}, \dots,  \und{x}_{K}) = 0 \}$ is equal to $DK$. However, since $D_{\det}$ is analytic, Theorem 6.3.3 of \citep{krantz-park-2002} implies the dichotomy: either $D_{\det}$ is constant everywhere on $(0,1)^{DK}$, or the set $A$ is at most of dimension $DK - 1$. Since the zero set of $D_{\det}$ has positive measure and is thus of Hausdorff dimension $DK$, it must be that $D_{\det}$ is everywhere zero on $(0,1)^{DK}$:
    \begin{align*}
        D_{\det} (\und{x}_{1}, \dots,  \und{x}_{K}) = \sum_{\sigma \in S_K} \epsilon(\sigma) \prod_{k=1}^K \eta_k (\und{x}_{\sigma(k)}) = 0, \hspace{1em} \forall (\und{x}_1, \dots \und{x}_K) \in (0,1)^{DK}.
    \end{align*}

    Now fix $(\und{x}_{1}, \dots,  \und{x}_{K-1})$ and apply $D_{\det}$ (viewed as a function of $\und{x}_K$ only) the continuous linear functional $T_{\eta_k}(f) = \ip{f}{\eta_k}$. Then for all $(\und{x}_{1}, \dots,  \und{x}_{K-1}) \in (0,1)^{D(K-1)}$:
    \begin{align*}
        0 & = \ip{D_{\det}}{\eta_k} \\
        & = \ip{\sum_{\sigma \in S_K} \epsilon(\sigma) \prod_{k=1}^K \eta_k (\und{x}_{\sigma(k)})}{\eta_k} \\
        & = \sum_{\sigma \in S_K} \left[ \epsilon(\sigma) \prod_{k:\sigma(k)\neq K} \eta_k(\und{x}_{\sigma(k)}) \right] \ip{\eta_{\sigma^{-1}(K)}}{\eta_K} \\
        & = \sum_{\sigma \in S_K} \epsilon(\sigma) \prod_{k=1}^{K-1} \eta_k(\und{x}_{\sigma(k)})
    \end{align*}
    Applying $T_{\eta_k}$ iteratively to $D_{\det}$ while keeping $(\und{x}_{1}, \dots,  \und{x}_{k-1})$ fixed then leads to
    \begin{align*}
        \eta_1(\und{s}) = 0, \hspace{1em} \forall \und{s} \in (0,1)^D.
    \end{align*}
    This last equality contradicts the fact that $\eta_1$ has norm one, and allows us to conclude that 
    \begin{align*}
        \mu \left\{ \{ \und{x}_{\und{j}} \}_{\und{j} \in \mbb{N}_{\und{M}}} \in \mcal{S}_{\und{M}} : D_{\det}(\mb{x}_{\gamma^{-1} (1)}, \dots , \mb{x}_{\gamma^{-1} (K)}) = 0 \right\} = 0.
    \end{align*}
\end{proof}

We now state Theorem \ref{thm:id-disc}, then invoke Theorem \ref{thm:nonzero-minors} in its proof.

\begin{theorem}[Identifiability at Finite Resolution]
\label{thm:id-disc}
    Let $\mscr{G}_1$ and $\mscr{G}_2$ be covariance operators of rank $K_1, K_2 < \infty$, respectively, and assume without loss of generality that $K_1 \geq K_2$. Let $\mscr{B}_1$ and $\mscr{B}_2$ be banded continuous covariance operators whose bandwidths $\und{\delta}_1$ and $\und{\delta}_2$, respectively, have components less than $1/2$. If the eigenfunctions of $\mscr{G}_1$ and $\mscr{G}_2$ are real analytic, and 
    \begin{align}
    \label{eqn:max-K-supp}
        K_1 \leq K^* =  \prod_{d=1}^D  \left\lfloor \left( \frac{1}{2} - \max\{ \delta_{1,d}, \delta_{2,d}\} \right) M_d - 1\right\rfloor,
    \end{align}
    then we have the equivalence
    \begin{align*}
        \mcal{G}^{\und{M}}_1 + \mcal{B}^{\und{M}}_1 = \mcal{G}^{\und{M}}_2 + \mcal{B}^{\und{M}}_2 \hspace{1em} \Longleftrightarrow \hspace{1em} \mcal{G}^{\und{M}}_1 = \mcal{G}^{\und{M}}_2 \hspace{1em} \text{and} \hspace{1em} \mcal{B}^{\und{M}}_1 = \mcal{B}^{\und{M}}_2,
    \end{align*}
    almost everywhere on $\mcal{S}_{\und{M}}$ with respect to the Lebesgue measure. 
\end{theorem}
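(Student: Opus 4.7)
The plan is to lift the analytic-continuation argument of Theorem 3.1 to the discrete setting by replacing analytic continuation with an algebraic matrix completion based on the vanishing of large minors of low-rank matrices. The key ingredients are the rank constraint on $\mcal{G}_r^{\und{M}}$ and the almost-sure nonzero-minors property from Theorem 3.4, while the role of $K^*$ is a combinatorial one: it guarantees that enough off-band information is available to pin down each on-band entry. The first step mirrors the continuous proof: the bandedness of $\mscr{B}_1,\mscr{B}_2$ forces both $\mcal{B}_1^{\und{M}}$ and $\mcal{B}_2^{\und{M}}$ to vanish on the off-band index set
$$\Omega = \{(\und{i}, \und{j}) \in \mbb{N}_{\und{M}} \times \mbb{N}_{\und{M}} : |i_d - j_d|/M_d \geq \max(\delta_{1,d},\delta_{2,d}) \text{ for some } d\},$$
so that equating the two decompositions gives $\mcal{G}_1^{\und{M}}=\mcal{G}_2^{\und{M}}$ on $\Omega$. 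The theorem then reduces to showing that the off-band values of $\mcal{G}_r^{\und{M}}$ uniquely determine its on-band values for almost every $\{\und{s}_{\und{m}}\}\in\mcal{S}_{\und{M}}$.

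For this reduced problem, I would exploit the low-rank structure: because $(\mcal{G}_1^{\und{M}})_{\text{sq}}$ has rank at most $K_1$, every $(K_1+1)\times(K_1+1)$ minor vanishes. Fix an on-band pair $(\und{r},\und{c})\notin\Omega$, and suppose one can produce index sets $F,F'\subset\mbb{N}_{\und{M}}$ of cardinality $K_1+1$ with $\und{r}\in F$, $\und{c}\in F'$, and $(F\times F')\setminus\{(\und{r},\und{c})\}\subset\Omega$. Expanding the vanishing determinant of $(\mcal{G}_1^{\und{M}})_{\text{sq}}[F,F']$ along the row indexed by $\und{r}$ yields a linear equation in the single unknown $\mcal{G}_1^{\und{M}}(\und{r},\und{c})$ whose coefficient is a $K_1\times K_1$ minor of $(\mcal{G}_1^{\und{M}})_{\text{sq}}$. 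By Theorem 3.4 together with the analyticity of the eigenfunctions of $\mscr{G}_1$, this coefficient is almost surely nonzero on $\mcal{S}_{\und{M}}$, so the on-band value $\mcal{G}_1^{\und{M}}(\und{r},\und{c})$ is expressible as a rational function of entries indexed by $\Omega$, which in turn agree with those of $\mcal{G}_2^{\und{M}}$. Repeating over all on-band pairs yields $\mcal{G}_1^{\und{M}}=\mcal{G}_2^{\und{M}}$, and hence $\mcal{B}_1^{\und{M}}=\mcal{B}_2^{\und{M}}$.

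The hard part is the combinatorial construction in the second step: for every on-band pair $(\und{r},\und{c})$ one must exhibit the required $(K_1+1)$-sized index sets $F,F'$ in which $(\und{r},\und{c})$ is the unique on-band entry. This is precisely where the bound $K_1 \leq K^* = \prod_d \lfloor(1/2-\delta_d)M_d - 1\rfloor$ is invoked. Along each axis $d$, the factor $\lfloor(1/2-\delta_d)M_d - 1\rfloor$ bounds the size of an index block that can be placed to one side of a reference coordinate while remaining $\delta_d M_d$-separated from an opposing block of equal size on the other side; the product over $d$ then yields a $D$-dimensional product block of off-band indices of size at least $K_1$. Adapting the one-dimensional direct-product construction of \cite{descary-panaretos-2019} and choosing the "direction" of the block along each axis based on where $\und{r}$ and $\und{c}$ sit in $\mbb{N}_{\und{M}}$ (a short case analysis) is then enough to exhibit such an $F,F'$ for every on-band pair, completing the proof.
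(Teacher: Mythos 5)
Your overall strategy matches the paper's up to one crucial point: like the paper, you reduce to recovering the on-band entries of a rank-$K_1$ tensor from its off-band entries, and you use the vanishing of $(K_1+1)$-order minors together with the almost-sure nonvanishing of $K_1$-order minors (Theorem \ref{thm:nonzero-minors}) to set up a linear equation $ax^*+b=0$ with $a\neq 0$. The gap is in your completion scheme. You require, for \emph{every} on-band pair $(\und{r},\und{c})$, index sets $F,F'$ of cardinality $K_1+1$ such that $(\und{r},\und{c})$ is the \emph{only} on-band entry of $F\times F'$, so that each on-band value is a rational function of off-band values in one shot. This combinatorial claim is false at the stated bound $K^*$. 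Concretely, take $D=1$, $M=100$, $\delta_1=\delta_2=0.1$ (band radius $b=10$ in index units), so $K^*=39$, and let $K_1=39$; consider the diagonal on-band entry $(15,15)$. The sets $A=F\setminus\{15\}$ and $B=F'\setminus\{15\}$ must each have $39$ elements lying outside $[5,25]$, and every cross pair must satisfy $\abs{i-j}>10$. Since any two indices in $[1,4]$ are within distance $3$, at most one of $A,B$ can touch $[1,4]$; hence one of them, say $B$, has all $39$ elements in $[26,100]$, and $A$ has at least $35$ elements there. But in a span of $75$ positions, two nonempty cross-separated sets lose at least $10$ positions to a gap, so $\abs{A\cap[26,100]}+\abs{B}\leq 65<35+39$. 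So no such $F,F'$ exist, and your per-entry construction (and the heuristic that $K^*$ counts a block that fits "to one side" of the reference coordinate) breaks down precisely for on-band entries near the boundary of the domain when $K_1$ is close to $K^*$; a one-shot scheme would only be guaranteed under a strictly smaller rank bound.

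The paper avoids this by making the completion \emph{sequential}: it first uses condition (\ref{eqn:max-K-supp}) to exhibit a common off-band $K_1\times K_1$ submatrix (which also forces $\mathrm{rank}(\mcal{G}_2^{\und{M}})=K_1$, a step you sidestep but do not actually need, since $K_2\le K_1$ already kills the $(K_1+1)$-minors of $\mcal{G}_2^{\und{M}}$ and the cofactor entries you use are off-band and shared), and then imputes missing entries one at a time, where each new $(K_1+1)\times(K_1+1)$ submatrix may contain \emph{previously imputed} (hence known) on-band entries, not only off-band ones. It is exactly this reuse of imputed entries that lets the induction reach the corner-adjacent band entries and makes the bound $K^*$ sufficient. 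To repair your argument you would either have to adopt this iterative ordering (and verify that condition (\ref{eqn:max-K-supp}) always supplies a next submatrix with a single unknown), or weaken the theorem by shrinking $K^*$.
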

\begin{proof}[Proof of Theorem 2 from the Manuscript]
    Given our conditions, the tensors $\mcal{B}_i \in \mbb{R}^{\mfrak{M} \times \mfrak{M}}$ have bandwidths $2\lceil \delta_{i,d} M_d \rceil$ for $i \in \{ 1, 2\}$ and $d = 1, \dots, D$. Let $\delta_d = \max\{ \delta_{1,d}, \delta_{2,d} \}$ for $d = 1, \dots, D$, and assume without loss of generality that $K_1 \leq K_2$. Let $\Omega$ be the set of indices on which both $\mcal{B}_1$ and $\mcal{B}_2$ vanish:
    \begin{align*}
        \Omega = \left\{ (\und{i}, \und{j}) \in \mbb{N}_{\und{M}} \times \mbb{N}_{\und{M}} : \abs{i_d - j_d} > \lceil \delta_d M_d \rceil \ \text{for some} \ d = 1, \dots, D \right\}.
    \end{align*}
    From $\mcal{G}_1 + \mcal{B}_1 = \mcal{G}_2 + \mcal{B}_2$, we obtain that $(\mcal{G}_1)_{(\und{i},\und{j})} = (\mcal{G}_2)_{(\und{i},\und{j})}$ for all $(\und{i}, \und{j}) \in \Omega$. Let $a_d = \left\lfloor \left( \frac{1}{2} - \max\{ \delta_{1,d}, \delta_{2,d}\} \right) M_d - 1\right\rfloor$ and define
    \begin{align*}
        \Omega_{\mcal{A}} = \{ (\und{i}, \und{j}) \in \mbb{N}_{\und{M}} \times \mbb{N}_{\und{M}} : 1 \leq i_d \leq a_d + 1, M_d - a_d \leq j_d \leq M_d \},
    \end{align*}
    then equation (1) of the manuscript implies that $\Omega_{\mcal{A}} \subset \Omega$, which in turn implies that the tensors $\mcal{G}_1$ and $\mcal{G}_2$ contain a common subtensor of dimension $a_1 \times \dots \times a_D \times a_1 \times \dots \times a_D$. Consequently, $\mcal{G}_{1\text{sq}}$ and $\mcal{G}_{2\text{sq}}$ contain a common submatrix of dimension $a_1 \dots a_D \times a_1 \dots a_D$, and, by equation (1) of the manuscript, a common submatrix $\mb{A}$ of dimension $K_1 \times K_1$.

    Assume that all order-$K_1$ minors of $\mcal{G}_{1\text{sq}}$ are nonzero. Then the determinant of $\mb{A}$ is nonzero. This implies that $K_2 = \text{rank} (\mcal{G}_{2\text{sq}}) \geq (\mcal{G}_{1\text{sq}}) = K_1$, which means $\mcal{G}_1$ and $\mcal{G}_2$ are two rank-$K_1$ tensors equal on $\Omega$. 

    Now, let $\mcal{G}^*$ be a tensor equal to $\mcal{G}_1$ on $\Omega$, but unknown at those indices not belonging to $\Omega$. We will now show that there exists a unique rank-$K_1$ completion of $\mcal{G}^*$. Due to equation (1) of the manuscript, it is possible to find a submatrix of $\mcal{G}_{\text{sq}}^*$ with dimension $(K_1 + 1) \times (K_1 + 1)$ containing exactly one unobserved value, denoted $x^*$. Using the fact that the determinant of any square submatrix of dimension larger than $K_1$ is zero, we obtain a linear equation of the form $ax^* + b = 0$ where $a$ is equal to to the determinant of a submatrix of $\mcal{G}_\text{sq}^*$ of dimension $K_1 \times K_1$. Since we have assumed that any minor of order $r_1$ is nonzero, we have that $a \neq 0$ and the previous equation has a unique solution. It is then possible to impute the value of $x^*$. Again, due to equation (1) of the manuscript, it is possible to apply this procedure iteratively until all missing entries are imputed, thereby allowing us to uniquely complete the tensor $\mcal{G}^*$ into a rank-$K_1$ tensor. 

    In summary, we have demonstrated that when all order-$K_1$ minors of $\mcal{G}_{1\text{sq}}$ are nonzero, it holds that $\mcal{G}^* = \mcal{G}_1 = \mcal{G}_2$, and hence $\mcal{B}_1 = \mcal{B}_2$. Theorem \ref{thm:nonzero-minors} ensures that $\mcal{G}_{1\text{sq}}$ indeed has nonvanishing minors of order $K_1$ almost everywhere on $\mcal{S}_{\und{M}}$. So we conclude that $\mcal{G}_1 = \mcal{G}_2$ and $\mcal{B}_1 = \mcal{B}_2$ almost everywhere on $\mcal{S}_{\und{M}}$. 
\end{proof}

\subsection{Estimation}
\label{ts-est}

We now present theory for the estimators of $\mcal{G}$, $K$, and $\mcal{L}_k$ discussed in Section 3 of the manuscript. To motivate the estimator of $\mcal{G}$, we define an optimization problem for which $\mcal{G}$ is the unique solution. Let $\Theta_{\und{M}}$ be the set of $\mfrak{M}\times\mfrak{M}$ symmetric non-negative definite tensors with trace norm bounded by that of $\hat{\mcal{C}}_n$, whose own trace norm may be scaled to one. Next, define the functional $S^{\und{M}}: \Theta_{\und{M}} \to \mbb{R}$ by
\begin{align*}
    S^{\und{M}} (\theta; \tau) & = \underbrace{M^{-2}\| \mcal{A}^{\und{M}} \circ (\mcal{C}^{\und{M}} - \theta) \|_F^2}_{R^{\und{M}}(\theta)} + \tau \text{rank}(\theta),
\end{align*}
where $M = M_1 \dots M_D$, $\mcal{A}^{\und{M}} \in \mbb{R}^{\mfrak{M}\times\mfrak{M}}$ is defined by $\mcal{A}^{\und{M}}(\und{i}, \und{j}) = \mathds{1} \{ \abs{i_d - j_d} > \lceil M_d / 4 \rceil$ for $d = 1, \dots, D\}$, and $\tau > 0$ is a rank-penalizing parameter. The subsequent lemma shows how the functional $S^{\und{M}}$ is used to formulate two versions of the same optimization problem, both of which are solved uniquely by $\mcal{G}$. 

\begin{lemma}
\label{lem:opt-probs}
    Let $\mscr{G}: L^2([0,1]^D) \to L^2([0,1]^D)$ be a rank $K < \infty$ covariance operator with analytic eigenfunctions and kernel $g$, and $\mscr{B}: L^2([0,1]^D) \to L^2([0,1]^D)$ be a trace-class covariance operator with $\und{\delta}$-banded kernel $b$. For $\{ \und{s}_{\und{m}} \}_{\und{m} \in \mbb{N}_{\und{M}}} \in \mcal{S}_{\und{M}}$, let 
    \begin{align*}
        \mcal{G}^{\und{M}} = \{ g(\und{s}_{\und{m}_1}, \und{s}_{\und{m}_2}) \}_{\und{m}_1, \und{m}_2} \hspace{1em} \text{and} \hspace{1em} \mcal{B}^{\und{M}} = \{ b(\und{s}_{\und{m}_1}, \und{s}_{\und{m}_2}) \}_{\und{m}_1, \und{m}_2},
    \end{align*}
    and $\mcal{C}^{\und{M}} = \mcal{G}^{\und{M}} + \mcal{B}^{\und{M}}$. Assume 
    \begin{align}
    \label{cond:est}
        \max_d \delta_d < 1/4 \hspace{1em} \text{and} \hspace{1em} K \leq K^* = \prod_{d=1}^D \left\lfloor \frac{M_d}{4} - 1 \right\rfloor.
    \end{align}
    Then for almost all grids in $\mcal{S}_{\und{M}}$:
    \begin{enumerate}
        \item The tensor $\mcal{G}^{\und{M}}$ is the unique solution to the optimization problem
        \begin{align*}
            \min_{\theta \in \mbb{R}^{\mfrak{M}\times\mfrak{M}}} \text{rank}(\theta) \hspace{1em} \text{subject to} \hspace{1em} R^{\und{M}}(\theta) = 0.
        \end{align*}
        \item Equivalently, in penalized form, 
        \begin{align*}
            \mcal{G}^{\und{M}} = \arg \min_{\theta \in \mbb{R}^{\mfrak{M}\times\mfrak{M}}} S^{\und{M}} (\theta; \tau),
        \end{align*}
        for sufficiently small $\tau > 0$. 
    \end{enumerate}
\end{lemma}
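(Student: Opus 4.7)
The plan is to derive part (1) essentially as an instance of Theorem \ref{thm:id-disc}, then bootstrap part (2) from part (1) via a compactness argument stratified by rank. First I would observe that $R^{\und{M}}(\theta) = 0$ is equivalent to $\theta$ agreeing with $\mcal{C}^{\und{M}}$ on the off-band set $\Omega := \{(\und{i},\und{j}): \abs{i_d - j_d} > \lceil M_d/4 \rceil \text{ for all } d\}$ supported by $\mcal{A}^{\und{M}}$. Since $\max_d \delta_d < 1/4$ implies $\lceil M_d/4 \rceil \geq \delta_d M_d$, the tensor $\mcal{B}^{\und{M}}$ vanishes throughout $\Omega$, so $\mcal{C}^{\und{M}}$ and $\mcal{G}^{\und{M}}$ coincide there. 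Hence $R^{\und{M}}(\mcal{G}^{\und{M}}) = 0$ and $\text{rank}(\mcal{G}^{\und{M}}) = K$, which establishes feasibility and bounds the optimal rank by $K$.

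For uniqueness in part (1), I would take any $\theta \in \Theta_{\und{M}}$ with $R^{\und{M}}(\theta) = 0$ and $\text{rank}(\theta) \leq K$, so that $\theta$ and $\mcal{G}^{\und{M}}$ agree on $\Omega$. Condition (\ref{cond:est}) is precisely the instance of (\ref{eqn:max-K-supp}) with effective bandwidth $\delta_d = 1/4$, and the rank-constrained matrix completion argument already used in the proof of Theorem \ref{thm:id-disc} — propagating a Leibniz-type determinant identity outward from an order-$K$ block of nonvanishing minors furnished by Theorem \ref{thm:nonzero-minors} — forces any such $\theta$ to coincide with $\mcal{G}^{\und{M}}$ at every index. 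This yields part (1) without new machinery.

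For part (2), I would compare $S^{\und{M}}(\theta; \tau)$ with the value $\tau K$ attained at $\mcal{G}^{\und{M}}$, across three strata of $\Theta_{\und{M}}$. If $\text{rank}(\theta) > K$, then $S^{\und{M}}(\theta;\tau) \geq \tau(K+1) > \tau K$ regardless of $\tau$. If $\text{rank}(\theta) = K$ and $\theta \neq \mcal{G}^{\und{M}}$, part (1) forces $R^{\und{M}}(\theta) > 0$, so $S^{\und{M}}(\theta;\tau) > \tau K$. The delicate case is $\text{rank}(\theta) \leq K - 1$: here I plan to exploit that $\Theta_{\und{M}}$ is compact (closed and bounded in finite dimension) and that rank-at-most-$(K-1)$ is a closed condition, so that $R^{\und{M}}$ attains its minimum $\delta^{*}$ on this stratum, and part (1) prevents $\delta^{*} = 0$ (any zero minimizer would be a rank-$\leq K-1$ solution distinct from $\mcal{G}^{\und{M}}$, contradicting the uniqueness just proved). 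Choosing any $\tau < \delta^{*}/K$ then yields $S^{\und{M}}(\theta;\tau) \geq \delta^{*} + \tau(K-1) > \tau K$ throughout the low-rank stratum. I expect verifying this uniform lower bound via the compactness-plus-part-(1) argument to be the main obstacle, since every other case in part (2) reduces immediately to either the rank penalty or to part (1).
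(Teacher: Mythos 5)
Your proposal is correct and follows essentially the same route as the paper's proof: part (1) via off-band agreement with $\mcal{G}^{\und{M}}$, the nonvanishing order-$K$ minors from Theorem \ref{thm:nonzero-minors} to force $\mathrm{rank}(\theta)\geq K$, and the iterative rank-$K$ completion from the proof of Theorem \ref{thm:id-disc}; part (2) via the same rank stratification, with the paper taking $\mu=\min_{\mathrm{rank}(\theta')\leq K-1}R^{\und{M}}(\theta')>0$ and $\tau<\mu/(K-1)$ where you take $\tau<\delta^{*}/K$. Your intermediate bound $S^{\und{M}}(\theta;\tau)\geq\delta^{*}+\tau(K-1)$ overstates the rank penalty when $\mathrm{rank}(\theta)<K-1$, but this is harmless since $S^{\und{M}}(\theta;\tau)\geq\delta^{*}>\tau K$ already gives the needed strict inequality.
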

\begin{proof}
    We first show that $\mcal{G}^{\und{M}}$ is the unique solution to the constrained optimization problem. Note that a solution $\theta$ to the constrained problem must coincide with $\mcal{G}^{\und{M}}$ on 
    \begin{align*}
        \Omega = \{ (\und{i},\und{j}) : \abs{i_d - j_d} > \lceil M_d / 4 \rceil, d = 1, \dots, D \},
    \end{align*}
    since $\norm{\mcal{A}^{\und{M}} \circ (\mcal{C}^{\und{M}} - \theta)}_F = 0$. Then by condition (2) of the manuscript and Theorem \ref{thm:nonzero-minors}, the square matricization of $\theta$ has a non-zero minor of order $K$, implying that $\text{rank}(\theta) \geq K$. Since $\mcal{G}^{\und{M}}$ satisfies the constraint and has rank $K$, we have that $\text{rank}(\theta) = K$. Using the iterative procedure from the proof of Theorem 2 of the manuscript, we can show that $\theta$ equals $\mcal{G}^{\und{M}}$ everywhere. 

    Next, we show that $\mcal{G}^{\und{M}}$ is the unique minimizer of $S(\cdot;\tau)$ for sufficiently small $\tau > 0$. First, consider some $\theta \in \mbb{R}^{\mfrak{M} \times \mfrak{M}}$ with $\text{rank}(\theta) \geq K$. Since $\mcal{G}^{\und{M}}$ uniquely solves the constrained problem, we have that for all $\tau > 0$,
    \begin{align*}
        \norm{\mcal{A}^{\und{M}} \circ (\mcal{C}^{\und{M}} - \mcal{G}^{\und{M}})}_F^2 + \tau \text{rank}(\mcal{G}^{\und{M}}) <         \norm{\mcal{A}^{\und{M}} \circ (\mcal{C}^{\und{M}} - \theta)}_F^2 + \tau \text{rank}(\theta).
    \end{align*}
    Thus, $\theta$ does not minimize $S(\cdot;\tau)$. Next, consider some $\theta \in \mbb{R}^{\mfrak{M} \times \mfrak{M}}$ with $\text{rank}(\theta) \leq K - 1$. Let $\mu = \min_{\theta' \in \mbb{R}^{\mfrak{M} \times \mfrak{M}}, \text{rank}(\theta') \leq K - 1} \norm{\mcal{A}^{\und{M}} \circ (\mcal{C}^{\und{M}} - \theta')}_F^2 > 0$, where the inequality holds since a minimizer of $\norm{\mcal{A}^{\und{M}} \circ (\mcal{C}^{\und{M}} - \theta')}_F^2$ has rank bounded below by $K$. Let $\tau^* = \frac{\mu}{K - 1}$. Then for all $\tau < \tau^*$, 
    \begin{align*}
        \norm{\mcal{A}^{\und{M}} \circ (\mcal{C}^{\und{M}} - \mcal{G}^{\und{M}})}_F^2 + \tau \text{rank}(\mcal{G}^{\und{M}}) = 
        \tau K < \mu + \tau \leq \norm{\mcal{A}^{\und{M}} \circ (\mcal{C}^{\und{M}} - \theta)}_F^2 + \tau \text{rank}(\theta).
    \end{align*}
    Thus, $\theta$ does not minimize $S(\cdot ; \tau)$. 
\end{proof}

In Lemma \ref{lem:opt-probs}, we present constrained and penalized formulations of the same optimization problem. The constrained problem informs practical implementation of the estimator, while the penalized version motivates the formal estimator of $\mcal{G}$. To make this estimation problem tractable, we assume that $\mcal{G}$ belongs to the subset $\Theta_{\und{M}}^* \subset \Theta_{\und{M}}$ defined by
\begin{align*}
    \Theta_{\und{M}}^* = \{ \theta \in \Theta_{\und{M}} : \mscr{P} (\theta) < B, B < \infty \},
\end{align*}
where $\mscr{P}: \mbb{R}^{\mfrak{M} \times \mfrak{M}} \to \mbb{R}$ is some roughness-penalizing functional. On this space, define the functional $S_n^{\und{M}}: \Theta_{\und{M}}^* \to \mbb{R}$ by
\begin{align*}
    S^{\und{M}}_n (\theta; \tau, \alpha_n) = \underbrace{M^{-2}\| \mcal{A}^{\und{M}} \circ (\hat{\mcal{C}}_n^{\und{M}} - \theta) \|_F^2}_{R_n^{\und{M}}(\theta)} + \tau \text{rank}(\theta) + \alpha_n \mscr{P}(\theta),
\end{align*}
where $\alpha_n > 0$ is a roughness-penalizing parameter. This functional, which may be viewed as a roughness-penalizing empirical analogue of $S^{\und{M}}$, is used to define the estimator for $\mcal{G}$.  

\begin{definition}[Estimator of $\mcal{G}^{\und{M}}$]
\label{def:est-L}
    Suppose $(X_1, \dots, X_n)$ are an i.i.d. sample from a functional $K$-factor model. Let $\{ \und{s}_{\und{m}} \}_{\und{m} \in \mbb{N}_{\und{M}}} \in \mcal{S}_{\und{M}}$ and assume we observe
    \begin{align*}
        \mcal{X}_{i,\und{m}} = X_i(\und{s}_{\und{m}}), \hspace{1em} i = 1, \dots, n, \hspace{1em} \und{m} \in \mbb{N}_{\und{M}}.
    \end{align*}
    Let $\hat{\mcal{C}}_n^{\und{M}} \in \mbb{R}^{\mfrak{M} \times \mfrak{M}}$ be the empirical covariance tensor of the tensors 
    \begin{align*}
        \left\{ \{ \mcal{X}_{i,\und{m}} \}_{\und{m}\in\mbb{N}_{\und{M}}} \right\}_{i=1}^n.
    \end{align*}
    We define the estimator $\hat{\mcal{G}}^{\und{M}}_n \in \Theta_{\und{M}}^{*}$ of $\mcal{G}^{\und{M}}$ to be an approximate minimum of $S_n^{\und{M}}(\theta; \tau, \alpha_n)$, where $\tau > 0$ is sufficiently small, and $\alpha_n \to 0$ as $n \to \infty$. By approximate minimum, we mean that $\hat{\mcal{G}}^{\und{M}}_n$ satisfies 
    \begin{align*}
        S^{\und{M}}_n (\hat{\mcal{G}}^{\und{M}}_n) \leq \inf_{\theta} S^{\und{M}}_n(\theta) + o_P(1).
    \end{align*}
\end{definition}

The rank and scaled eigentensors of the estimator in Definition \ref{def:est-L} produce estimators for the number of factors $K$ and the loading tensors $\mcal{L}_k$, respectively. 

\begin{definition}[Estimators of $K$ and $\mcal{L}_k^{\und{M}}$]
\label{def:est-loadings}
    Consider the setting and estimator $\hat{\mcal{G}}^{\und{M}}_n$ presented in Definition \ref{def:est-L}. We define estimators for the number of factors $K$ and the $k$th loading function $\mcal{L}_k^{\und{M}}$ as
    \begin{align*}
        \hat{K}_n = \text{rank}(\hat{\mcal{G}}^{\und{M}}_n) \hspace{1em} \text{and} \hspace{1em} \hat{\mcal{L}}_{k,n}^{\und{M}} = \hat{\lambda}_{k,n}^{1/2} \hat{\mcal{H}}_{k,n}^{\und{M}},
    \end{align*}
    where $\hat{\lambda}_{k,n}$ and $\hat{\mcal{H}}_{k,n}^{\und{M}}$ are the $k$th eigenvalue and eigentensor of $\hat{\mcal{G}}^{\und{M}}_n$, respectively. 
\end{definition}

In Theorem \ref{thm:consistency}, we show that $\hat{\mcal{G}}$, $\hat{K}$, and $\hat{\mcal{L}}_k$, $k = 1, \dots, \hat{K}$ are consistent for their population analogues.

\begin{theorem}[Consistency]
\label{thm:consistency}
    Suppose $\mcal{L}_k$ is equal to the scaled eigentensor of $\mcal{G}$ for $k = 1, \dots, K$. Then for sufficiently small $\tau > 0$ and $\alpha_n \to 0$, we have
    \begin{enumerate}
        \item[(i)] $\norm{\hat{\mcal{G}}_n - \mcal{G}}_F^2 \overset{P}{\to} 0$,
        \item[(ii)] $\abs{\text{rank}(\hat{L}_n) - K} \overset{P}{\to} 0$, and
        \item[(iii)] $\norm{\hat{\mcal{L}}_{n,k} - \mcal{L}_k}_F^2 \overset{P}{\to} 0, \ k = 1, \dots, K$.
    \end{enumerate}
\end{theorem}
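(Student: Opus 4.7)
The overall plan is to treat $\hat{\mcal{G}}_n$ as an M-estimator and port a standard argmin-consistency argument to the setting of Lemma \ref{lem:opt-probs}. The population objective $S^{\und{M}}(\cdot;\tau)$ has $\mcal{G}$ as its unique minimizer for small $\tau>0$, so it suffices to show that (a) the empirical objective converges uniformly to the population objective on the admissible set, and (b) the near-minimizer condition in Definition \ref{def:est-L} then forces $\hat{\mcal{G}}_n\to\mcal{G}$ in Frobenius norm. Once (i) is in hand, the rank statement (ii) follows from lower-semicontinuity of rank combined with the rank-penalty gap $\tau$, and (iii) follows from standard matrix perturbation bounds applied to the square matricizations.

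First I would establish the ingredients for uniform convergence. Since $X$ has a trace-class covariance, the finite-dimensional sample covariance satisfies $\hat{\mcal{C}}^{\und{M}}_n \to \mcal{C}^{\und{M}}$ in Frobenius norm by the ordinary law of large numbers on $\mbb{R}^{\mfrak{M}\times\mfrak{M}}$. Expanding $\|\mcal{A}^{\und{M}}\circ(\hat{\mcal{C}}^{\und{M}}_n-\theta)\|_F^2-\|\mcal{A}^{\und{M}}\circ(\mcal{C}^{\und{M}}-\theta)\|_F^2$ and using the Cauchy--Schwarz inequality together with the trace-norm bound defining $\Theta_{\und{M}}^*$ gives
\begin{align*}
\sup_{\theta\in\Theta_{\und{M}}^*}\bigl|R_n^{\und{M}}(\theta)-R^{\und{M}}(\theta)\bigr| \overset{P}{\to} 0.
\end{align*}
The penalty term vanishes uniformly since $\alpha_n\mscr{P}(\theta)\le\alpha_n B\to 0$ on $\Theta_{\und{M}}^*$. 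Thus $S_n^{\und{M}}(\theta;\tau,\alpha_n)$ converges uniformly to $S^{\und{M}}(\theta;\tau)$ on $\Theta_{\und{M}}^*$ up to the discontinuous rank contribution.

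The heart of the proof is (i). I argue by contradiction: suppose along a subsequence $\|\hat{\mcal{G}}_n-\mcal{G}\|_F>\epsilon$. Since $\Theta_{\und{M}}^*$ is a bounded subset of the finite-dimensional space $\mbb{R}^{\mfrak{M}\times\mfrak{M}}$, pass to a further subsequence on which $\text{rank}(\hat{\mcal{G}}_n)$ is constant (bounded by the ambient dimension) and $\hat{\mcal{G}}_n\to\theta^*$ with $\|\theta^*-\mcal{G}\|_F\ge\epsilon$. Lower-semicontinuity of rank yields $\text{rank}(\theta^*)\le\liminf\text{rank}(\hat{\mcal{G}}_n)$. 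Using the approximate-minimizer property $S_n^{\und{M}}(\hat{\mcal{G}}_n)\le S_n^{\und{M}}(\mcal{G})+o_P(1)$ with the uniform convergence above,
\begin{align*}
R^{\und{M}}(\theta^*)+\tau\,\text{rank}(\theta^*)\ \le\ R^{\und{M}}(\mcal{G})+\tau K.
\end{align*}
But Lemma \ref{lem:opt-probs} part 2 states that this minimum is attained uniquely at $\mcal{G}$ for sufficiently small $\tau$, contradicting $\|\theta^*-\mcal{G}\|_F\ge\epsilon$. This yields (i). The main obstacle here is precisely the discontinuity of $\text{rank}(\cdot)$: one cannot appeal to a continuous argmin theorem off the shelf, so the decomposition by rank level and the use of lower-semicontinuity are what makes the argument go through.

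For (ii), lower-semicontinuity of rank applied to $\hat{\mcal{G}}_n\to\mcal{G}$ gives $\text{rank}(\hat{\mcal{G}}_n)\ge K$ eventually with probability tending to one; the penalty $\tau\,\text{rank}(\theta)$ combined with the near-minimizer property precludes $\text{rank}(\hat{\mcal{G}}_n)>K$, since otherwise $\mcal{G}$ would strictly beat $\hat{\mcal{G}}_n$ in $S_n^{\und{M}}$ by at least $\tau+o_P(1)$. So $\text{rank}(\hat{\mcal{G}}_n)=K$ with probability tending to one. For (iii), once $\hat{\mcal{G}}_n$ is rank-$K$ and Frobenius-close to $\mcal{G}$, apply Weyl's inequality to $\hat{\mcal{G}}_{n,\text{sq}}$ versus $\mcal{G}_{\text{sq}}$ to get $\hat{\lambda}_{k,n}\to\lambda_k$, and a Davis--Kahan-type bound on the eigenspaces to get $\hat{\mcal{H}}_{k,n}^{\und{M}}\to\mcal{H}_k^{\und{M}}$ up to a sign (absorbed into the convention fixing the loading). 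Multiplying gives $\hat{\mcal{L}}_{n,k}\to\mcal{L}_k$ in Frobenius norm. The only subtlety is handling eigenvalue ties; if $\mcal{G}$ has repeated eigenvalues one should phrase (iii) as convergence of the eigenspace projectors and note that the theorem's loading statement is understood up to the rotational indeterminacy highlighted in Section \ref{ri}.
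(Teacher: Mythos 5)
Your proof is correct and follows the same M-estimation skeleton as the paper's: uniform convergence of the empirical criterion over $\Theta_{\und{M}}^*$ (with the penalty controlled by $\alpha_n B \to 0$), lower semicontinuity coming from the rank term, the unique population minimizer supplied by Lemma \ref{lem:opt-probs}, and tightness from the trace-norm bound. The execution differs in three places, each worth noting. For (i), the paper simply invokes Corollary 3.2.3 of van der Vaart and Wellner after verifying its four hypotheses, whereas you give a self-contained compactness/subsequence argument with rank held fixed along a sub-subsequence; this is more elementary, though you should be slightly more careful with the randomness (pass to almost-surely convergent sub-subsequences on an event of probability at least $\delta$, as the paper does in its own part (ii), rather than treating $\hat{\mcal{G}}_n$ as a deterministic sequence). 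For (ii), your exclusion of $\mathrm{rank}(\hat{\mcal{G}}_n) > K$ by comparing $S_n^{\und{M}}(\hat{\mcal{G}}_n)$ directly with $S_n^{\und{M}}(\mcal{G}) = \tau K + o_P(1)$ is genuinely simpler than the paper's route, which compares against the infimum over the rank-$K$ set and needs an equi-Lipschitz argument to show that infimum tends to zero; your shortcut works because $R_n^{\und{M}}(\mcal{G}) \to R^{\und{M}}(\mcal{G}) = 0$ and $R_n^{\und{M}} \ge 0$, yielding the same contradiction $\tau \le o_P(1)$. For (iii), the paper asserts it follows immediately from (i); your Weyl/Davis--Kahan argument, together with the remark that repeated eigenvalues force an eigenspace (or up-to-rotation) formulation, is actually more careful on the point the paper leaves implicit, namely that convergence of individual scaled eigentensors requires the top-$K$ eigenvalues of $\mcal{G}$ to be simple.
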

\begin{proof}{}
    In proving (i), we assume $D=1$ so that all tensors in the problem are $M \times M$ matrices. For $D>1$, simply square matricize each tensor then proceed as follows. By Corollary 3.2.3 from \cite{vaart-wellner-1996}, the result follows after verifying that:
    \begin{enumerate}
        \item[(a)] $\sup_{\theta \in \Theta_M^*} \abs{S_n^M(\theta) - S^M(\theta)} \overset{a.s.}{\to} 0$,
        \item[(b)] $S^M(\theta)$ is lower semi-continuous on $\Theta_M^*$,
        \item[(c)] $S^M(\theta)$ has a unique minimum at $\mcal{G}$, and
        \item[(d)] $\{ \hat{\mcal{G}}_n\}$ is uniformly tight. 
    \end{enumerate}
Note that we already established (c) in the Lemma from the manuscript. After scaling $\hat{\mcal{C}}_n^M$ to have unit trace norm, (d) holds as well. To verify (b), note that $R^M(\theta)$ is continuous and $\text{rank}(\theta)$ is lower semi-continuous. Being the sum of a continuous and lower semi-continuous function $S^M(\theta)$ is itself lower semi-continuous. 

To check (a), note that for $\theta \in \Theta_M^*$,
\begin{align*}
    | S_n^M(\theta; & \tau, \alpha_n) - S^M(\theta; \tau) | \\
    & = \abs{R_n^M(\theta) - R^M(\theta) + \alpha_n \mscr{P}(\theta)} \\
    & = \abs{ \norm{ \mcal{A}^M \circ (\hat{\mcal{C}}_n^M - \theta) }_F^2  - \norm{ \mcal{A}^M \circ (\mcal{C}^M - \theta) }_F^2} + \alpha_n \mscr{P}(\theta) \\
    & \leq \abs{ \norm{ \mcal{A}^M \circ (\hat{\mcal{C}}_n^M - \theta) }_F  - \norm{ \mcal{A}^M \circ (\mcal{C}^M - \theta) }_F } \\
    & \hspace{3em} \times \left( \norm{ \mcal{A}^M \circ (\hat{\mcal{C}}_n^M - \theta) }_F  + \norm{ \mcal{A}^M \circ (\mcal{C}^M - \theta) }_F \right) \\
    & \hspace{3em} + \alpha_n \mscr{P}(\theta) \\
    & \leq \norm{ \mcal{A}^M \circ (\hat{\mcal{C}}_n^M - \mcal{C}^M)}_F \left( 2 + \norm{\hat{\mcal{C}}_n^M}_F + \norm{\mcal{C}^M}_F\right) + \alpha_n B \\
    & \overset{a.s.}{\to} 0,
\end{align*}
where convergence follows from $\hat{\mcal{C}}_n^M \to \mcal{C}^M$ almost surely, and $\alpha_n \to 0$. This completes the proof of (i). 

To prove (ii), we again assume $D=1$ and extend the proof to $D>1$ via square matricization. Proceeding by contradiction, suppose that (ii) does not hold. Then for some subsequence $\{n'\} \subset \{n\}$, there exist $\epsilon > 0$ and $\delta > 0$ such that $\mbb{P} \left( \abs{\text{rank}(\hat{\mcal{G}}_{n'}) - K} > \epsilon \right) > \delta$ for all $n'$. Consequently, there exists either
\begin{enumerate}
    \item[(e)] a subset $\{j\} \subset \{ n' \}$ such that $\mbb{P}\left( \text{rank}(\hat{\mcal{G}}_{j}) < K \right) > \delta/2$ for all $j$, or 
    \item[(f)] a subset $\{i\} \subset \{ n' \}$ such that $\mbb{P}\left( \text{rank}(\hat{\mcal{G}}_{i}) > K \right) > \delta/2$ for all $i$.
\end{enumerate}
Focusing on (e), the convergence of $\hat{\mcal{G}}_n$ to $\mcal{G}$ in probability means there exists a subsequence $\{j'\} \subset \{j\}$ such that $\hat{\mcal{G}}_{j'} \to \mcal{G}$ almost surely. This means that on a set with probability at least $\delta / 2$, $\mcal{\hat{L}}_{j'} \to \mcal{G}$ and $\text{rank}(\hat{\mcal{G}}_{j'}) < r$ for all $j'$. This is not possible since the set of matrices with rank at most $r - 1$ is closed.  

We thus focus on (f). Again, since $\hat{\mcal{G}}_n$ converges to $\mcal{G}$ in probability, there exists a subsequence $\{i'\} \subset \{i\}$ for which $\hat{\mcal{G}}_{i'} \to \mcal{G}$ almost surely. This means that on a set with probability at least $\delta / 2$, $\hat{\mcal{G}}_{i'} \to \mcal{G}$ and $\text{rank}(\hat{\mcal{G}}_{i'}) > r$ for all $i'$. Working on this set, for all $i'$, 
\begin{align}
    R_{i'}^M(\hat{\mcal{G}}_{i'}) & + \tau (K + 1) + \alpha_{i'} \mscr{P}(\hat{\mcal{G}}_{i'}) \nonumber \\
    & \leq R_{i'}^M(\hat{\mcal{G}}_{i'}) + \tau \text{rank}(\hat{\mcal{G}}_{i'}) + \alpha_{i'} \mscr{P}(\hat{\mcal{G}}_{i'}) \nonumber \\
    & \leq \inf_{\theta \in \Theta_M^* : \text{rank}(\theta) = K} \left\{ R_{i'}^M(\theta) + \tau \text{rank}(\theta) + \alpha_{i'} \mscr{P}(\theta) \right\} \label{pf:cons-Li-minimizer} \\
    & \leq \inf_{\theta \in \Theta_M^* : \text{rank}(\theta) = K} \left\{ R_{i'}^M(\theta) + \tau K + \alpha_{i'} \mscr{P}(\theta) \right\} \nonumber \\
    & \leq \inf_{\theta \in \Theta_M^* : \text{rank}(\theta) = K} R_{i'}^M(\theta) + \tau K + \alpha_{i'} B, \label{pf:cons-B}
\end{align}
where line (\ref{pf:cons-Li-minimizer}) follows because $\hat{\mcal{G}}_{i'}$ is a minimizer of $S_{i'}^M$. From the proof of (i), we have that 
\begin{align}
    \sup_{\theta \in \Theta_M^*} \abs{R_{i'}^M(\theta) - \alpha_n \mscr{P}(\theta) - R^M(\theta)} \overset{a.s.}{\to} 0, \label{pf:cons-unif-conv}
\end{align}
which implies
\begin{align}
    R_{i'}^M(\hat{\mcal{G}}_{i'}) - 
    \alpha_{i'}\mscr{P}(\hat{\mcal{G}}_{i'}) - R^M(\hat{\mcal{G}}_{i'}) \overset{a.s.}{\to} 0. \label{pf:cons-unif-cons-imp}
\end{align}
Moreover, since $R^M$ is continuous and $\hat{\mcal{G}}_{i'} \to \mcal{G}$ almost surely, by continuous mapping we have
\begin{align}
    R^M(\hat{\mcal{G}}_{i'}) \overset{a.s.}{\to} R^M(\mcal{G}) = 0. \label{pf:cons-via-cont-map}
\end{align}
Combining the results of lines (\ref{pf:cons-unif-cons-imp}) and (\ref{pf:cons-via-cont-map}) gives 
\begin{align}
    R_{i'}^M(\hat{\mcal{G}}_{i'}) + \alpha_{i'}\mscr{P}(\hat{\mcal{G}}_{i'}) \overset{a.s.}{\to} 0. \label{pf:cons-lhs-conv}
\end{align}
Now, note that on the set $\{ \theta \in \Theta_M^* : \text{rank}(\theta) = K\}$ the $R_{i'}^M$ are equi-Lipschitz continuous. From this and the uniform convergence in line (\ref{pf:cons-unif-conv}), we have
\begin{align}
    \inf_{\theta \in \Theta_M^* : \text{rank}(\theta) = K} R_{i'}^M(\theta) \to \inf_{\theta \in \Theta_M^* : \text{rank}(\theta) = K} R^M(\theta) = 0. \label{pf:cons-rhs-conv}
\end{align}
By the inequality in line (\ref{pf:cons-B}) and the results in lines (\ref{pf:cons-lhs-conv}) and (\ref{pf:cons-rhs-conv}), we derive the contradiction that $\tau \leq 0$. This proves (ii).

Finally, (iii) follows immediately from (i), thus completing the proof. 
\end{proof}

\section{Tuning Procedures}
\label{tp}

Recall from Section 3.1.1 of the manuscript that we must choose values for the smoothing parameter $\alpha$ before estimating $\mcal{G}$. We find that subjective selection is often preferable, but there are some settings (e.g., a simulation study) that require automatic selection. One may do so using V-fold cross-validation, as follows:
\begin{enumerate}
    \item[(1a)] Partition the sample $\mcal{X}_1, \dots, \mcal{X}_n$ into $V$ folds of equal (or near equal) size.
    \item[(1b)] For a given $\alpha$, let $\Tilde{\mcal{G}}^{(v)}(\alpha)$ be the solution to the optimization problem
    \begin{align*}
        \min_{0 \preceq \theta \in \mbb{R}^{\mfrak{M}\times\mfrak{M}}} & M^{-2} \norm{ \mcal{A} \circ \left(\hat{\mcal{C}}_n^{(v)} - \theta \right) }_F^2 + M^{-1} \alpha \sum_{k=1}^{\text{rank}(\theta)} \Tilde{\mscr{P}} \mscr{E}_k (\theta) \\
        & \text{subject to} \hspace{1em} \text{rank}(\theta) \leq K^*,
    \end{align*}
    where $\hat{\mcal{C}}_n^{(v)}$ denotes the empirical covariance of all samples except those in the $v$th fold, and $K^*$ is the maximal rank of $\mcal{G}$. That is, $\Tilde{\mcal{G}}^{(v)}(\alpha)$ is the maximal-rank version of the estimator in Definition 1 of the manuscript for the sample excepting the $v$th fold.
    \item[(1c)] Compute the cross-validation scores
    \begin{align*}
        \text{CV}(\alpha) = \frac{1}{VM^2} \sum_{v=1}^V \norm{ \mcal{A} \circ \left( \hat{\mcal{C}}_n^v -  \Tilde{\mcal{G}}^{(v)}(\alpha)\right) }_F^2,
    \end{align*}
    where $\hat{\mcal{C}}_n^v$ is the empirical covariance of samples in the $v$th fold. 
    \item[(1d)] Minimize $\text{CV}(\alpha)$ to provide the choice of smoothing parameters.
\end{enumerate}
One might wonder if we could use the above procedure to tune a \textit{vector} of smoothing parameters $\und{\alpha}$ that applies a different level of smoothing to each eigentensor. We do not recommend this for the following reason. Recall that we solve the problem in (1b) by re-framing it as a factorized matrix completion problem like equation (4) of the manuscript that optimizes over $\mb{V} \in \mbb{R}^{M\times K^*}$, $M = M_1...M_D$. For fixed $\mb{V}\mb{V}^T$, the error term in equation (4) of the manuscript is constant for any $\mb{V}$ while the penalty term varies as we rotate $\mb{V}$. This means that rough loading tensors can avoid penalization by ``hiding'' from larger $\alpha_k$ in some rotation of $\mb{V}$, making it difficult to target a specific loading tensor with a particular roughness penalty.

Next, recall from Section 3.2 of the manuscript that some tuning procedure is needed to systematically choose the shrinkage parameter $\und{\kappa}$. One option is to choose $\und{\kappa}$ using $V$-fold cross-validation, as follows:
\begin{enumerate}
    \item[(2a)] Apply a varimax rotation to the initial estimate $\hat{\mfrak{L}}$ to obtain $\hat{\mfrak{L}}^*$.
    \item[(2b)] Partition the sample $\mcal{X}_1, \dots, \mcal{X}_n$ into $V$ folds of equal (or near equal) size. 
    \item[(2c)] Using the tuned smoothing parameter $\alpha^*$, compute $\hat{\mfrak{L}}^{(v)}(\alpha^*)$ as in step (b) of the estimation procedure for $\mcal{G}$ (see Section 3.1.1 of the manuscript), setting $K^* = \hat{K}$. We suppress $\alpha^*$ from here forward. 
    \item[(2d)] Rotate each $\hat{\mfrak{L}}^{(v)}$ towards the target $\hat{\mfrak{L}}$ (using \texttt{GPArotation::targetT} from \cite{bernaards-etal-2015}) to get $(\hat{\mfrak{L}}^{(v)})^*$. 
    \item[(2e)] For a given $\und{\kappa}$, adaptively soft-threshold $(\hat{\mfrak{L}}^{(v)})^*$ as in equation (5) of the manuscript to get $\Tilde{\mfrak{L}}^{(v)}(\und{\kappa})$. 
    \item[(2f)] Compute the cross-validation scores
    \begin{align*}
        \text{CV}(\und{\kappa}) = \frac{1}{VM^2} \sum_{v=1}^V \norm{ \mcal{A} \circ \left( \hat{\mcal{C}}^{v} - \Tilde{\mcal{G}}^{(v)}(\und{\kappa})\right) }_F^2,
    \end{align*}
    where $\Tilde{\mcal{G}}^{(v)}(\und{\kappa}) = \sum_{k=1}^K \Tilde{\mfrak{L}}_{\cdot, k}^{(v)}(\und{\kappa}) \otimes \Tilde{\mfrak{L}}_{\cdot, k}^{(v)}(\und{\kappa})$, and $\hat{\mcal{C}}^v$ is the empirical covariance of samples in the $v$th fold. 
    \item[(2g)] Minimize $\text{CV}(\und{\kappa})$ to provide the choice of shrinkage parameter. 
\end{enumerate}

\section{Rank Simulation Study}
\label{sec:ss}

To study the behavior of the scree plot approach, we simulated a single instance of each configuration, then plotted the quantities $f(\hat{\theta}_j)$ for $j = 1, \dots, 8$. Figures \ref{fig:scree-plots-1}, \ref{fig:scree-plots-2}, \ref{fig:scree-plots-3}, and \ref{fig:scree-plots-4} display the resulting scree plots. As expected, R2 presents a harder problem than R1, while rank selection generally becomes easier as $n$ grows. Post-hoc analysis of some selection failures -- for instance, when the scree plot appears non-decreasing or under-selects the rank -- reveals that over-smoothing can force optimization into bad local optima, and that such optima may sometimes be avoided with a more conservative choice of the smoothing parameter. In particular, for several R2 configurations, the scree plot approach fails more frequently for the NET loading scheme than for the BL scheme. As discussed in Section 4.1 of the manuscript, this is likely due to the application of a single smoothing parameter to the loading tensors of NET which exhibit varying degrees of smoothness. Despite these occasional failures, our simulation study supports the scree plot approach as a viable rank selection method.

\begin{figure}[!h]
\centering
\subfloat[Scenario S1]{\label{fig:scree-plots-1-s1}\includegraphics[width=0.35\linewidth]{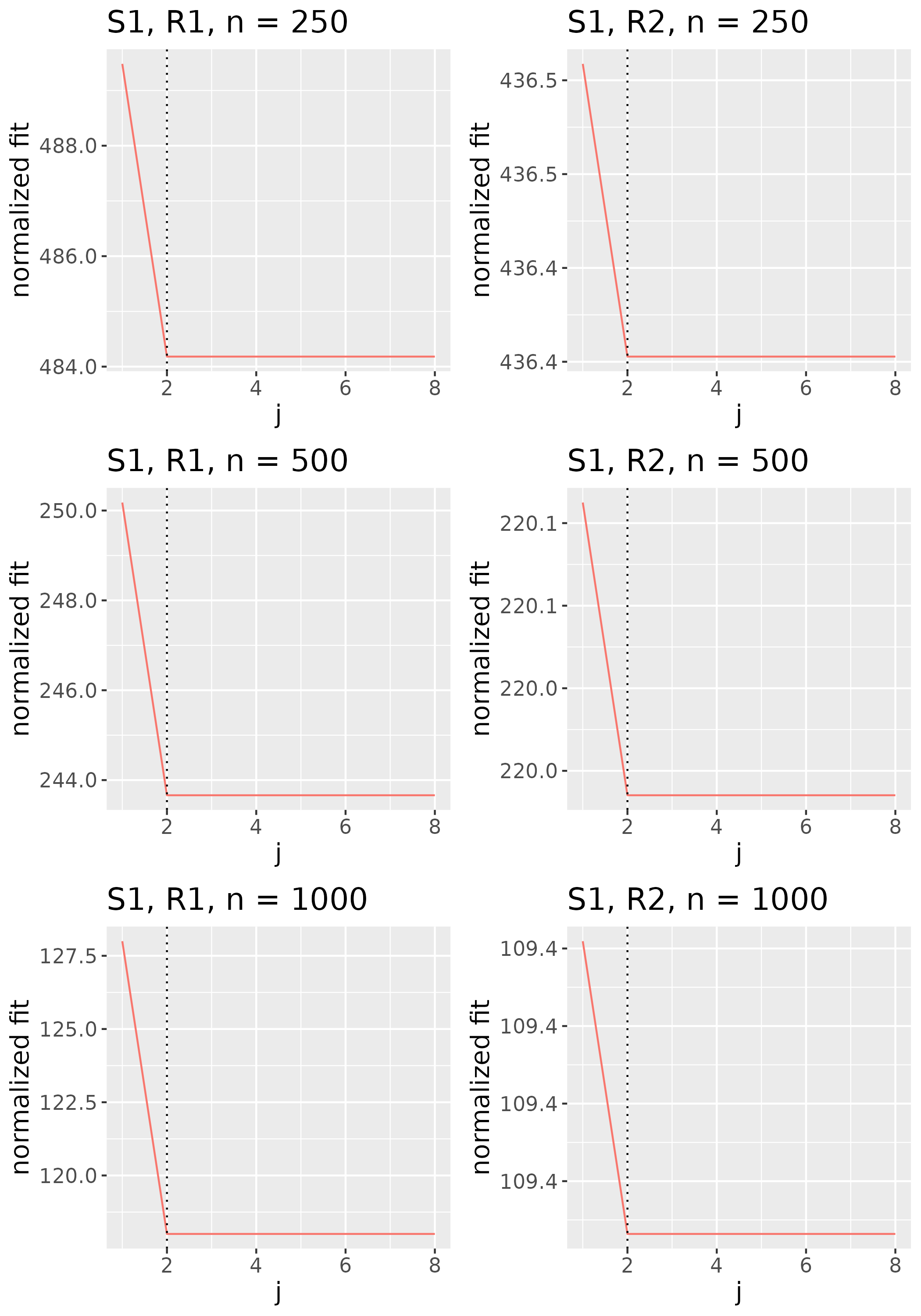}}\qquad
\subfloat[Scenario S2]{\label{scree-plots-1-s2}\includegraphics[width=0.35\linewidth]{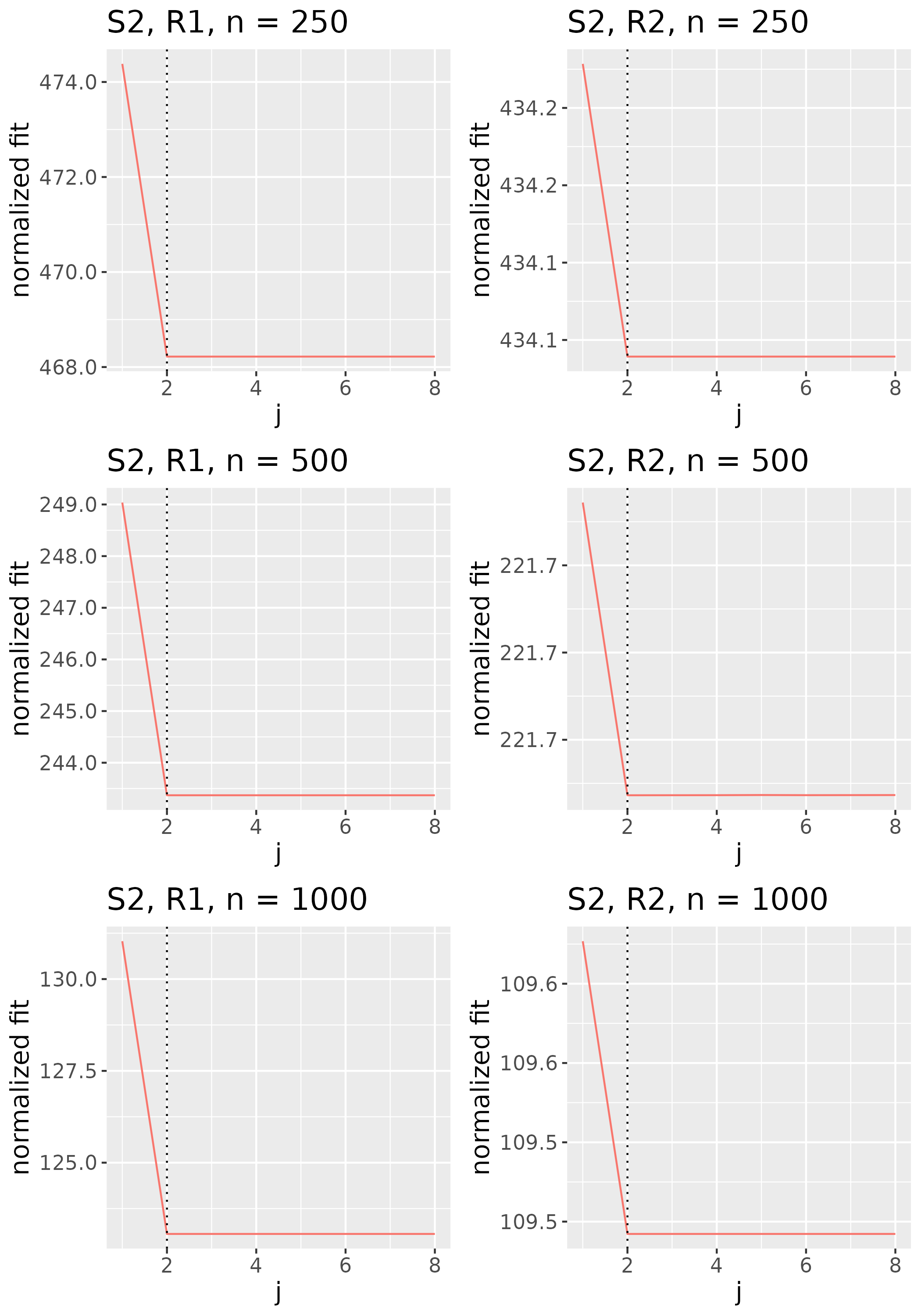}}\\
\subfloat[Scenario S3]{\label{scree-plots-1-s3}\includegraphics[width=0.35\linewidth]{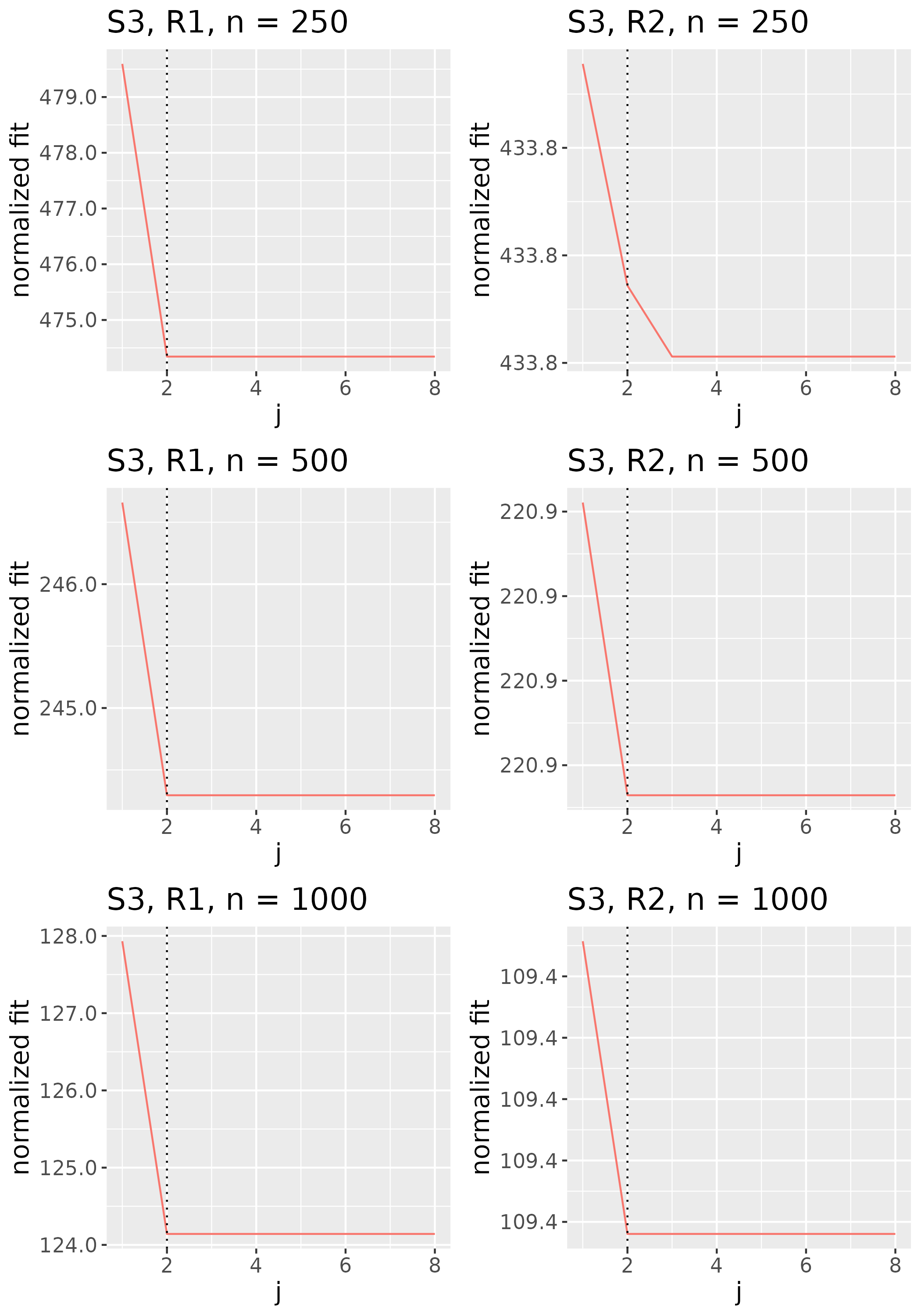}}\qquad
\subfloat[Scenario S4]{\label{scree-plots-1-s1}\includegraphics[width=0.35\linewidth]{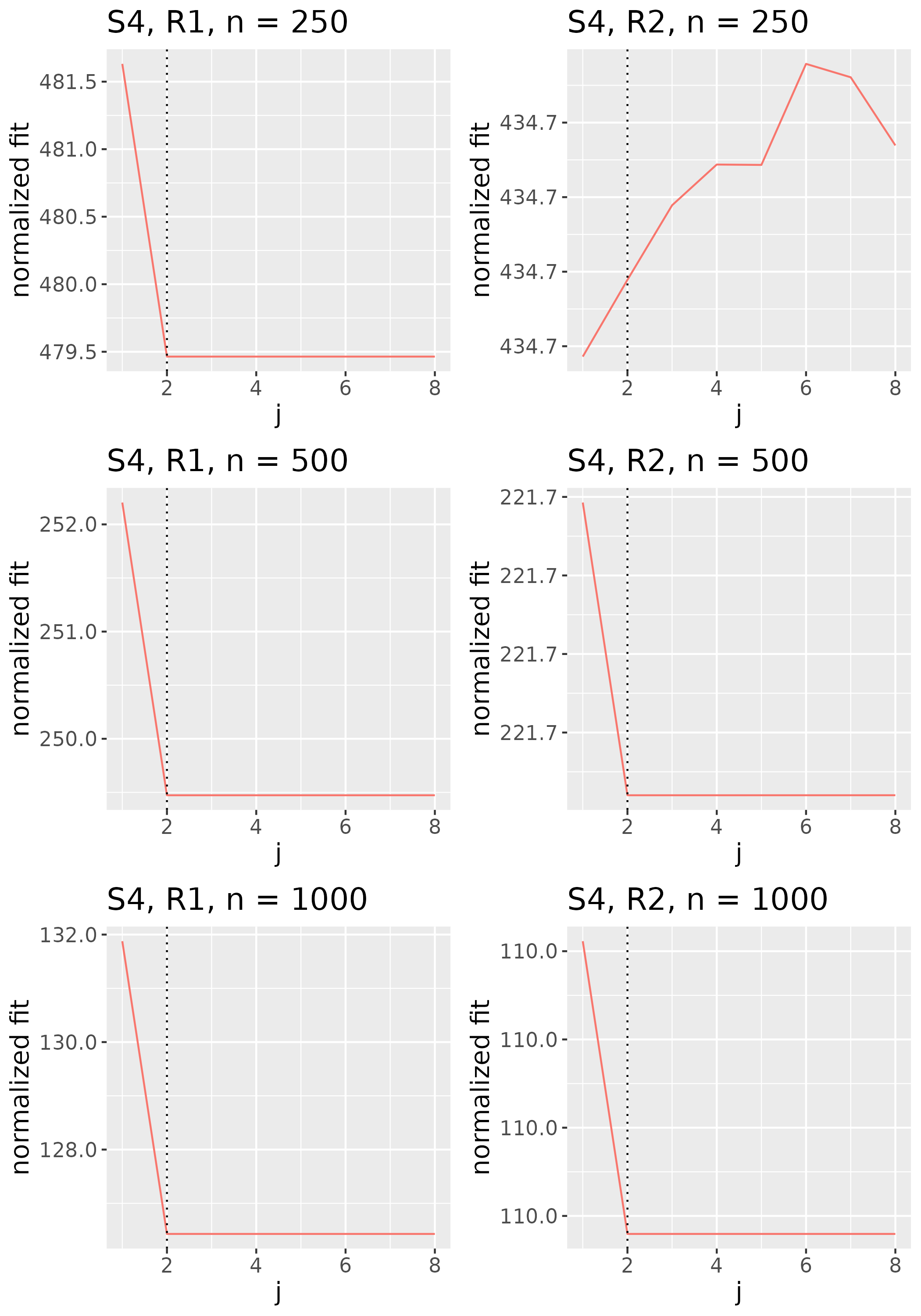}}
\caption{Plots of the function $j \mapsto f(\hat{\theta}_j)$ for $K = 2$ and $\delta = 0.05$. Each plot is for some combination of scenario, regime, and sample size $n$.}
\label{fig:scree-plots-1}
\end{figure}

\begin{figure}[!h]
\centering
\subfloat[Scenario S1]{\label{fig:scree-plots-2-s1}\includegraphics[width=0.35\linewidth]{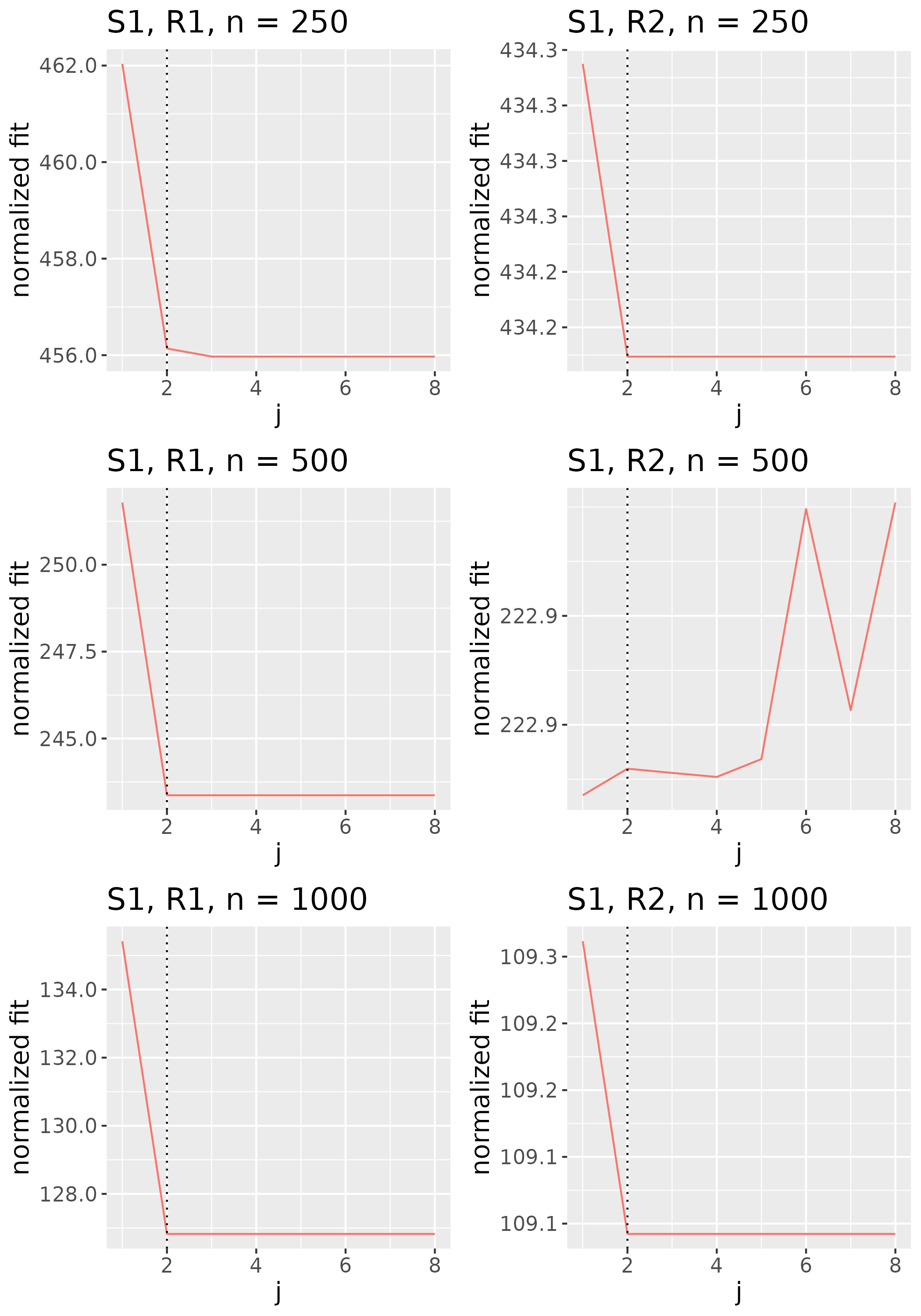}}\qquad
\subfloat[Scenario S2]{\label{scree-plots-2-s2}\includegraphics[width=0.35\linewidth]{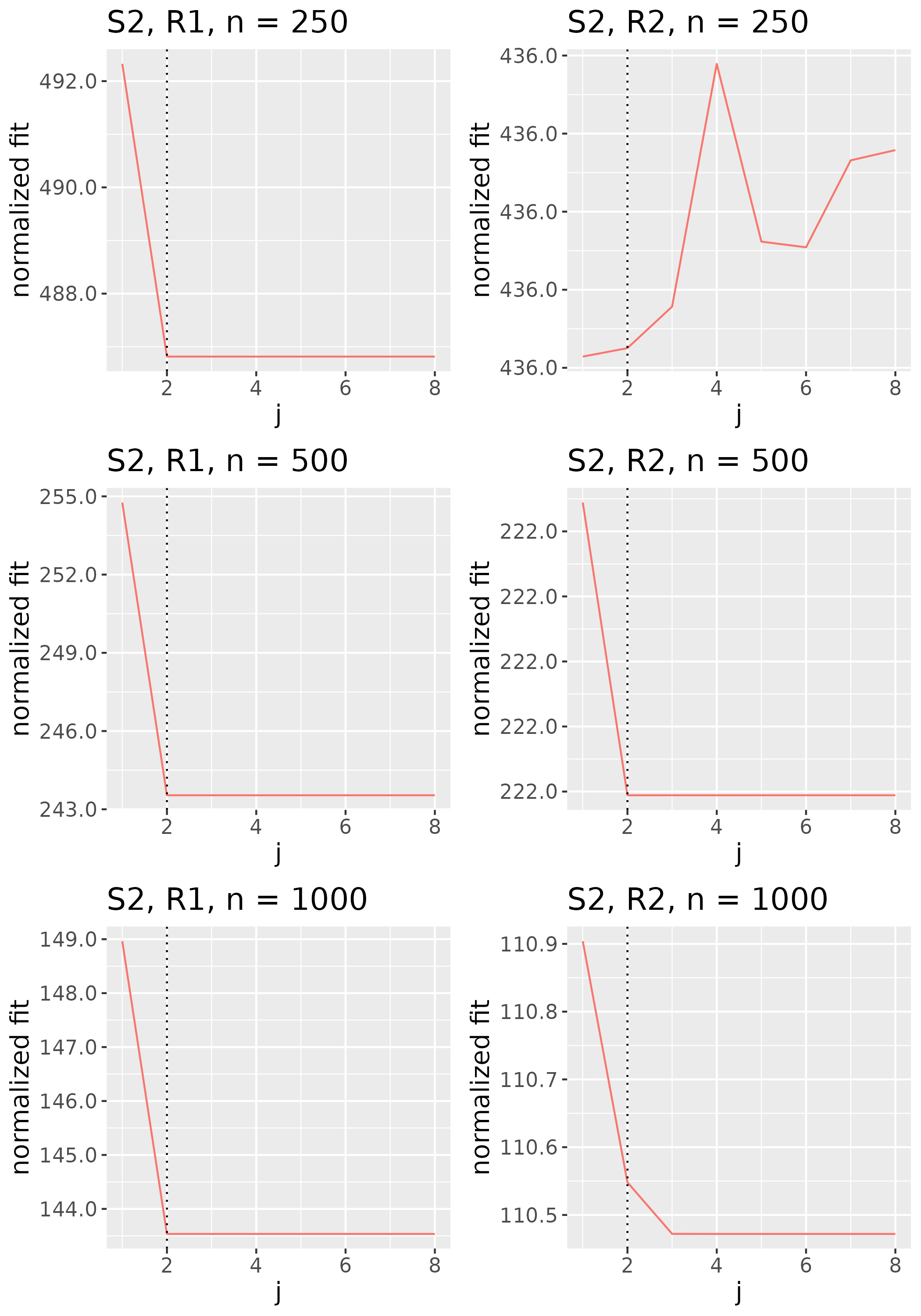}}\\
\subfloat[Scenario S3]{\label{scree-plots-2-s3}\includegraphics[width=0.35\linewidth]{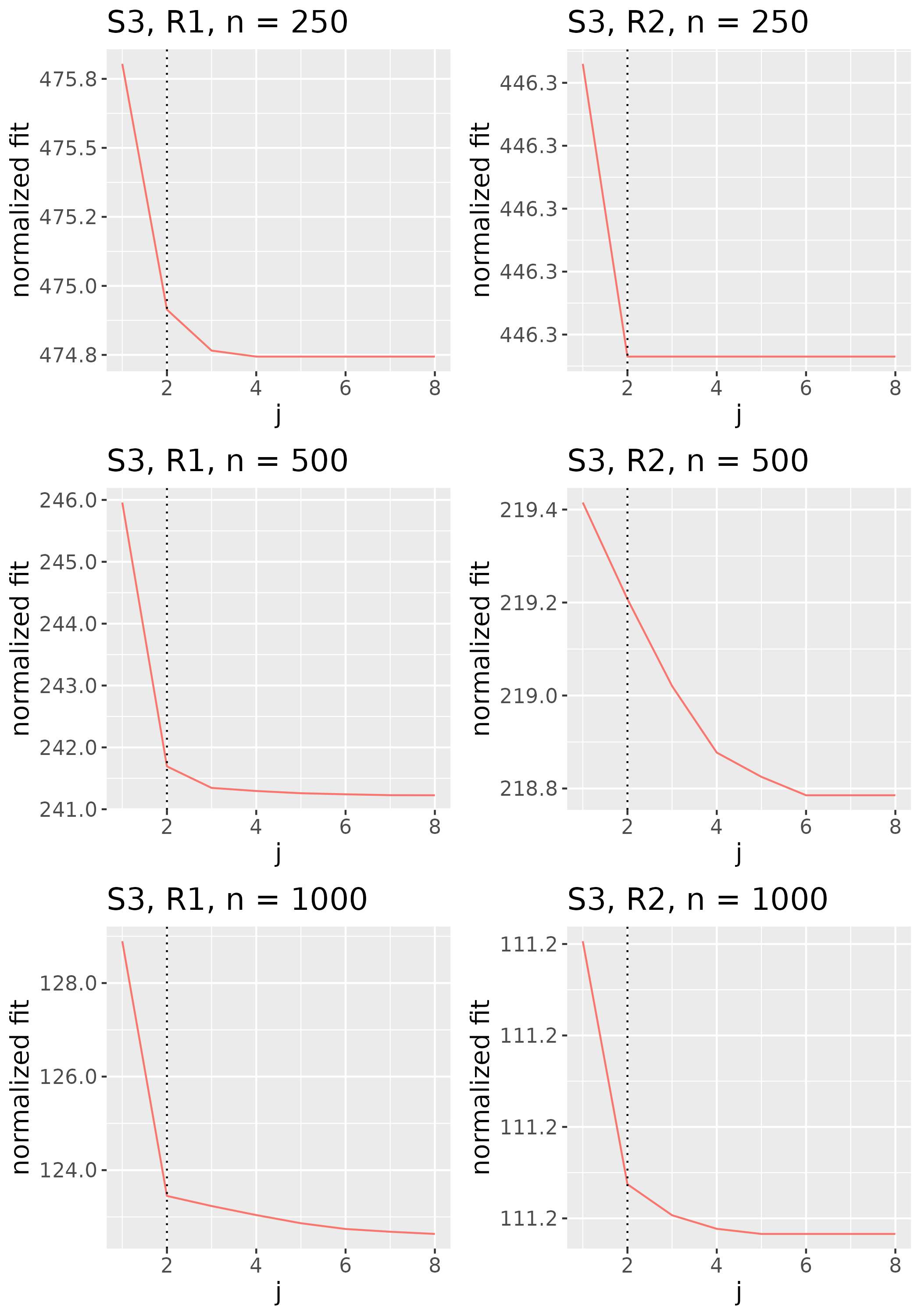}}\qquad%
\subfloat[Scenario S4]{\label{scree-plots-2-s1}\includegraphics[width=0.35\linewidth]{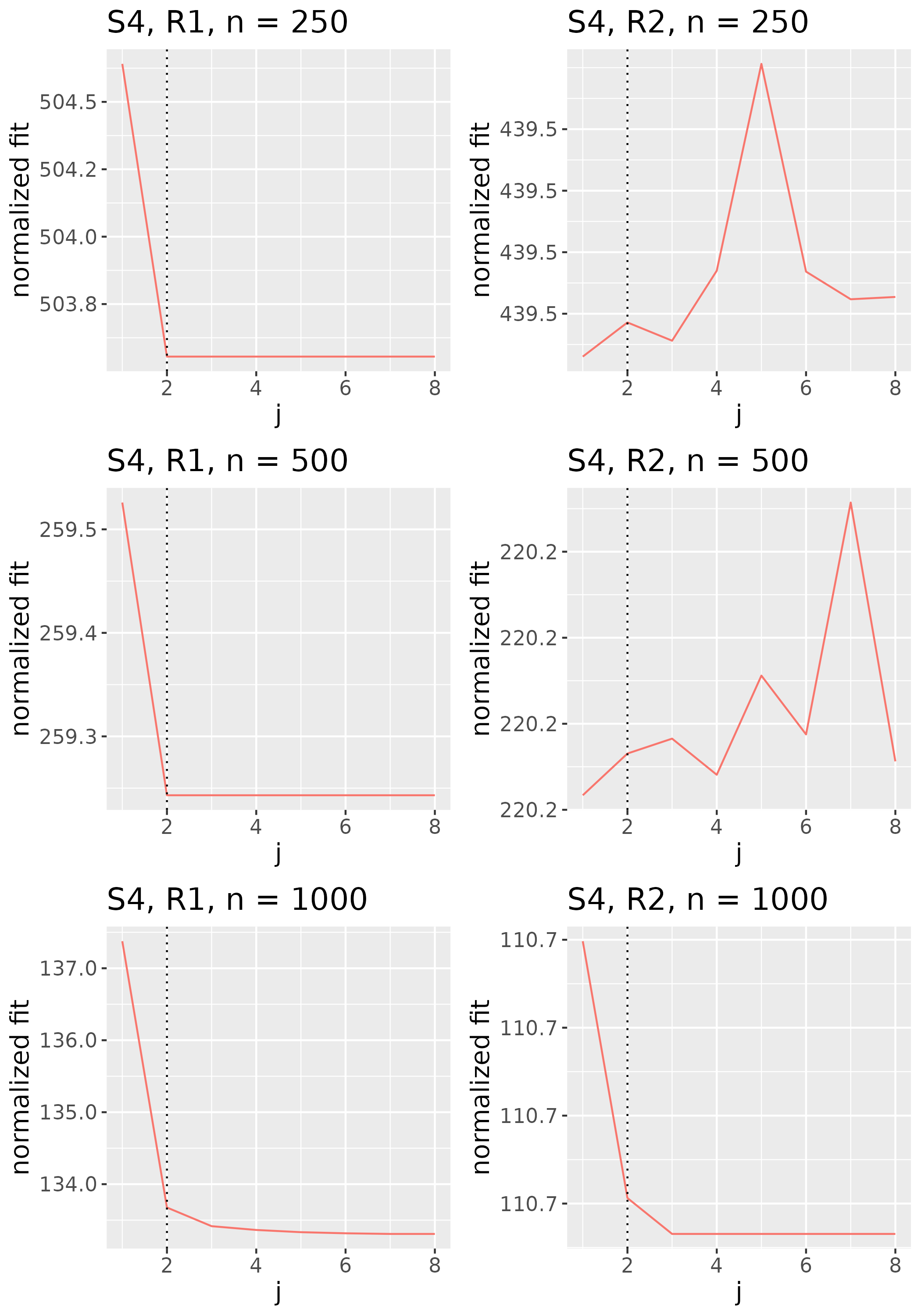}}%
\caption{Plots of the function $j \mapsto f(\hat{\theta}_j)$ for $K = 2$ and $\delta = 0.1$. Each plot is for some combination of scenario, regime, and sample size $n$.}
\label{fig:scree-plots-2}
\end{figure}

\begin{figure}[!h]
\centering
\subfloat[Scenario S1]{\label{fig:scree-plots-3-s1}\includegraphics[width=0.35\linewidth]{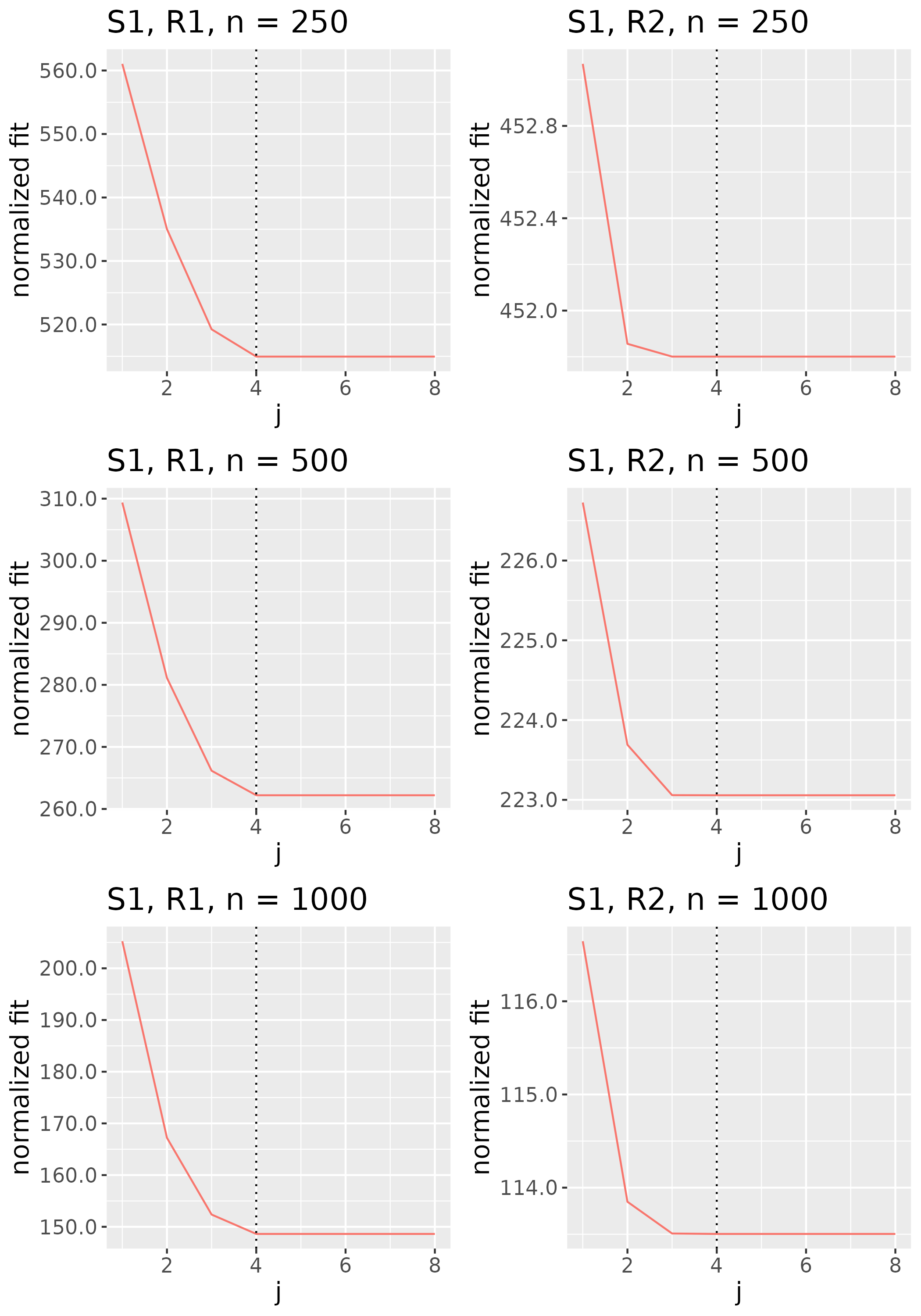}}\qquad
\subfloat[Scenario S2]{\label{scree-plots-3-s2}\includegraphics[width=0.35\linewidth]{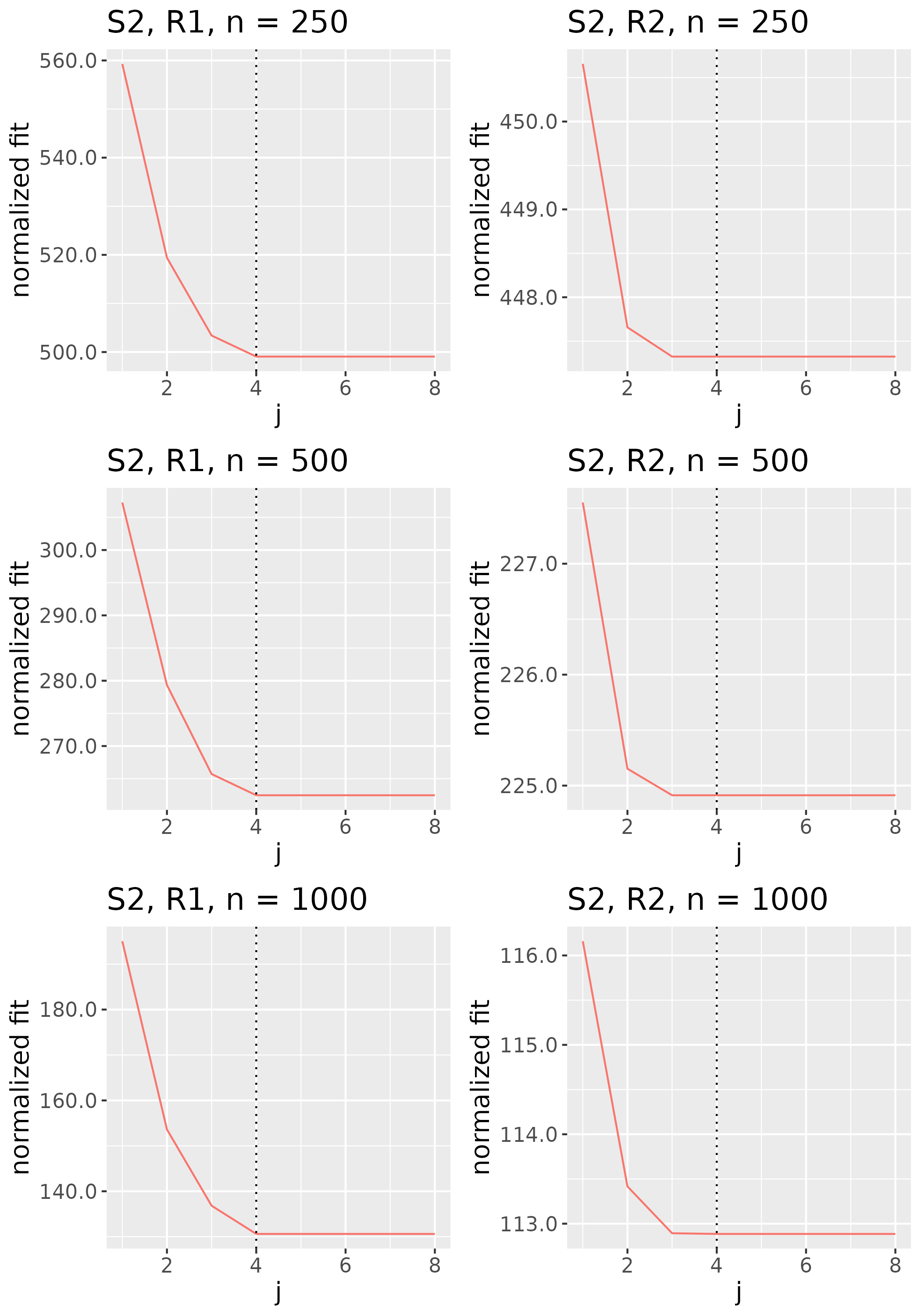}}\\
\subfloat[Scenario S3]{\label{scree-plots-3-s3}\includegraphics[width=0.35\linewidth]{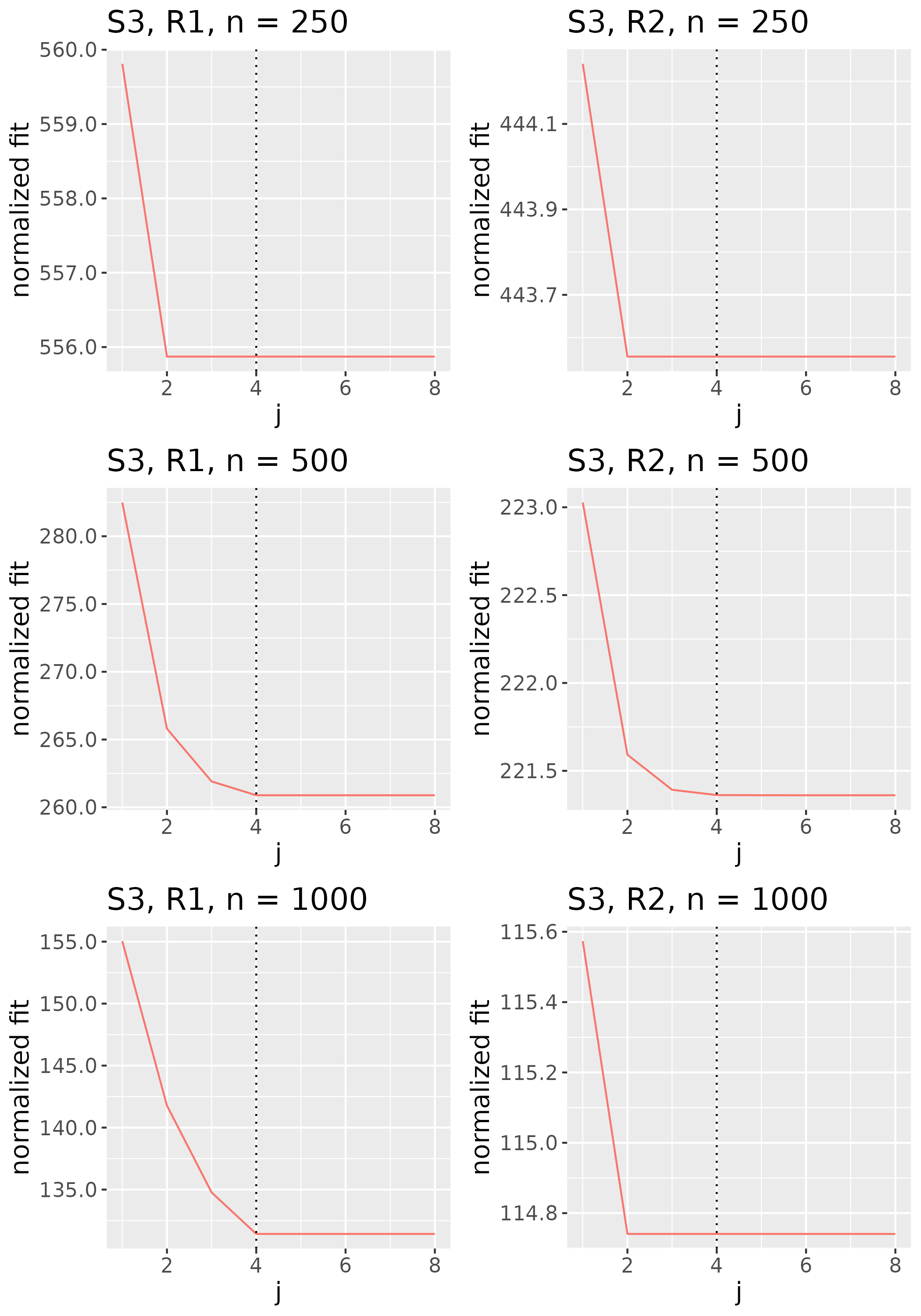}}\qquad%
\subfloat[Scenario S4]{\label{scree-plots-3-s1}\includegraphics[width=0.35\linewidth]{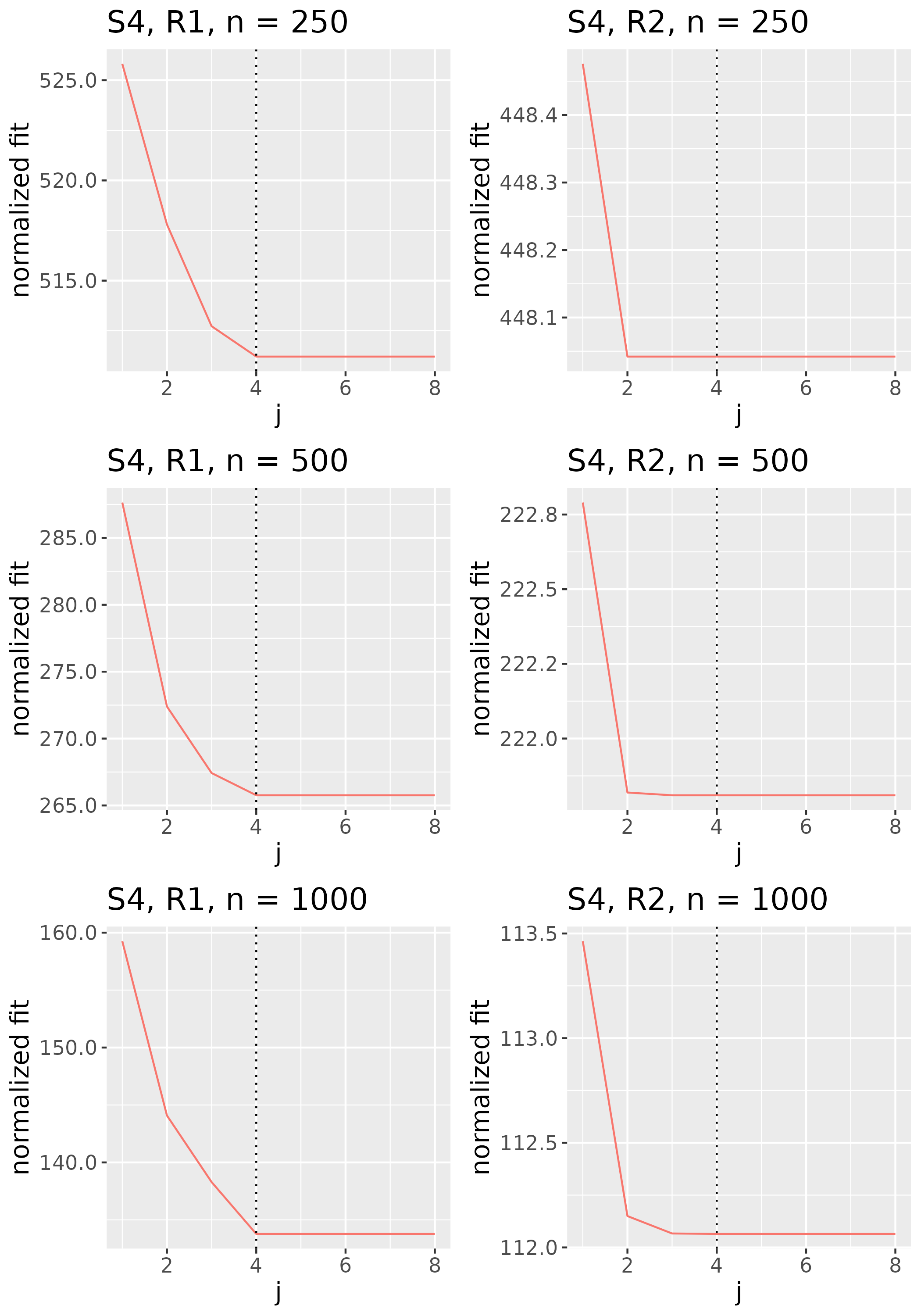}}%
\caption{Plots of the function $j \mapsto f(\hat{\theta}_j)$ for $K = 4$ and $\delta = 0.05$. Each plot is for some combination of scenario, regime, and sample size $n$.}
\label{fig:scree-plots-3}
\end{figure}

\begin{figure}[!h]
\centering
\subfloat[Scenario S1]{\label{fig:scree-plots-4-s1}\includegraphics[width=0.35\linewidth]{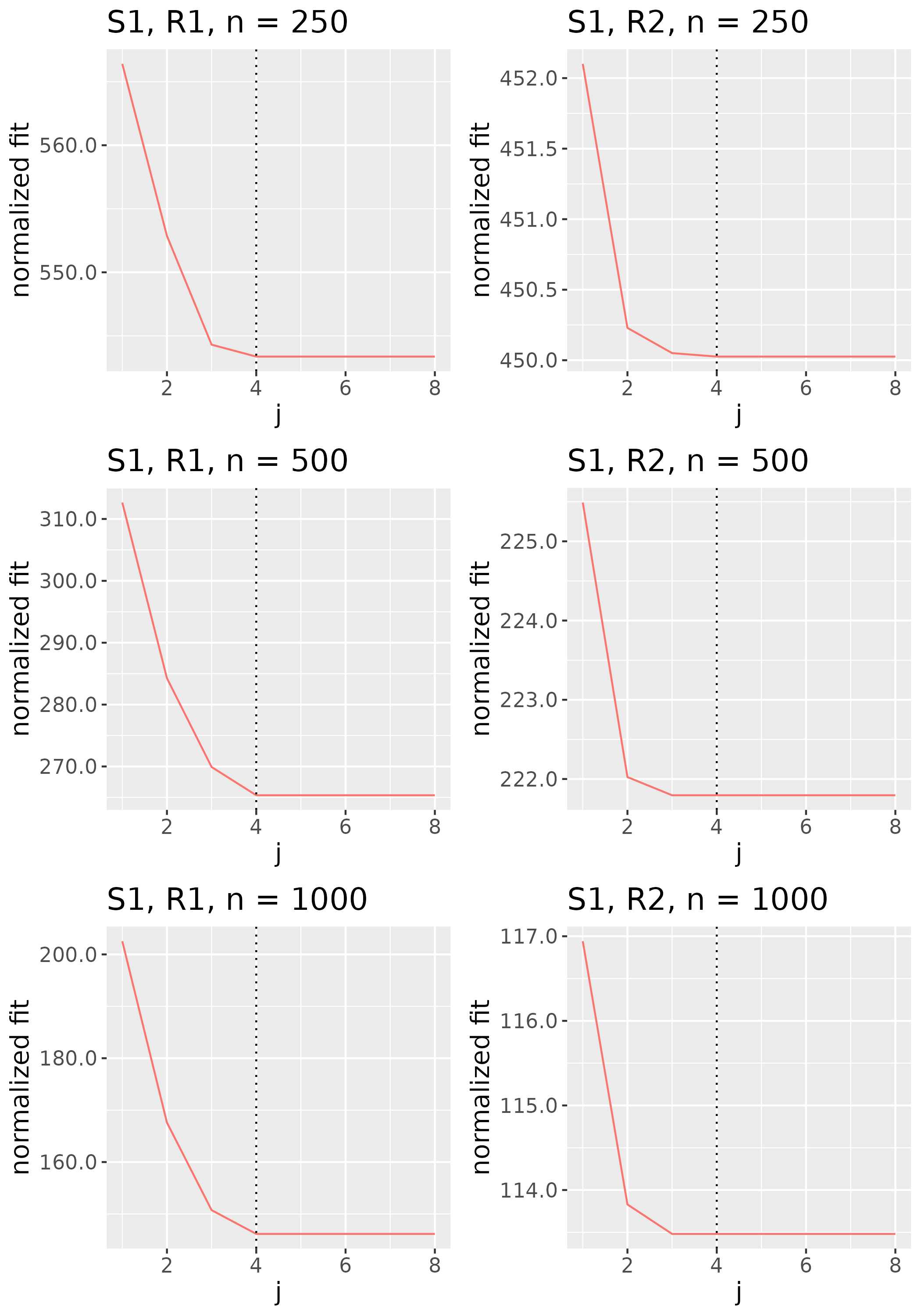}}\qquad
\subfloat[Scenario S2]{\label{scree-plots-4-s2}\includegraphics[width=0.35\linewidth]{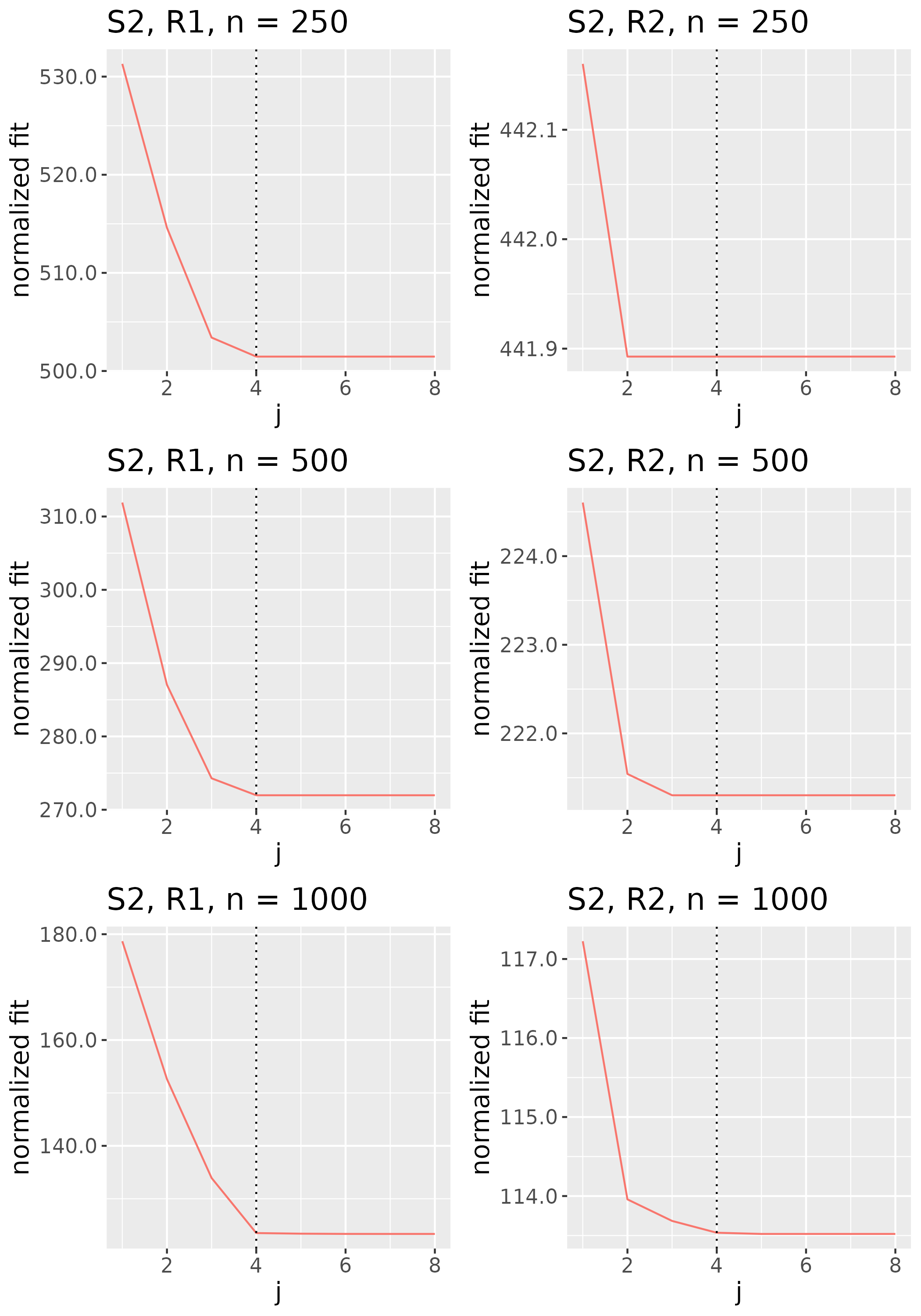}}\\
\subfloat[Scenario S3]{\label{scree-plots-4-s3}\includegraphics[width=0.35\linewidth]{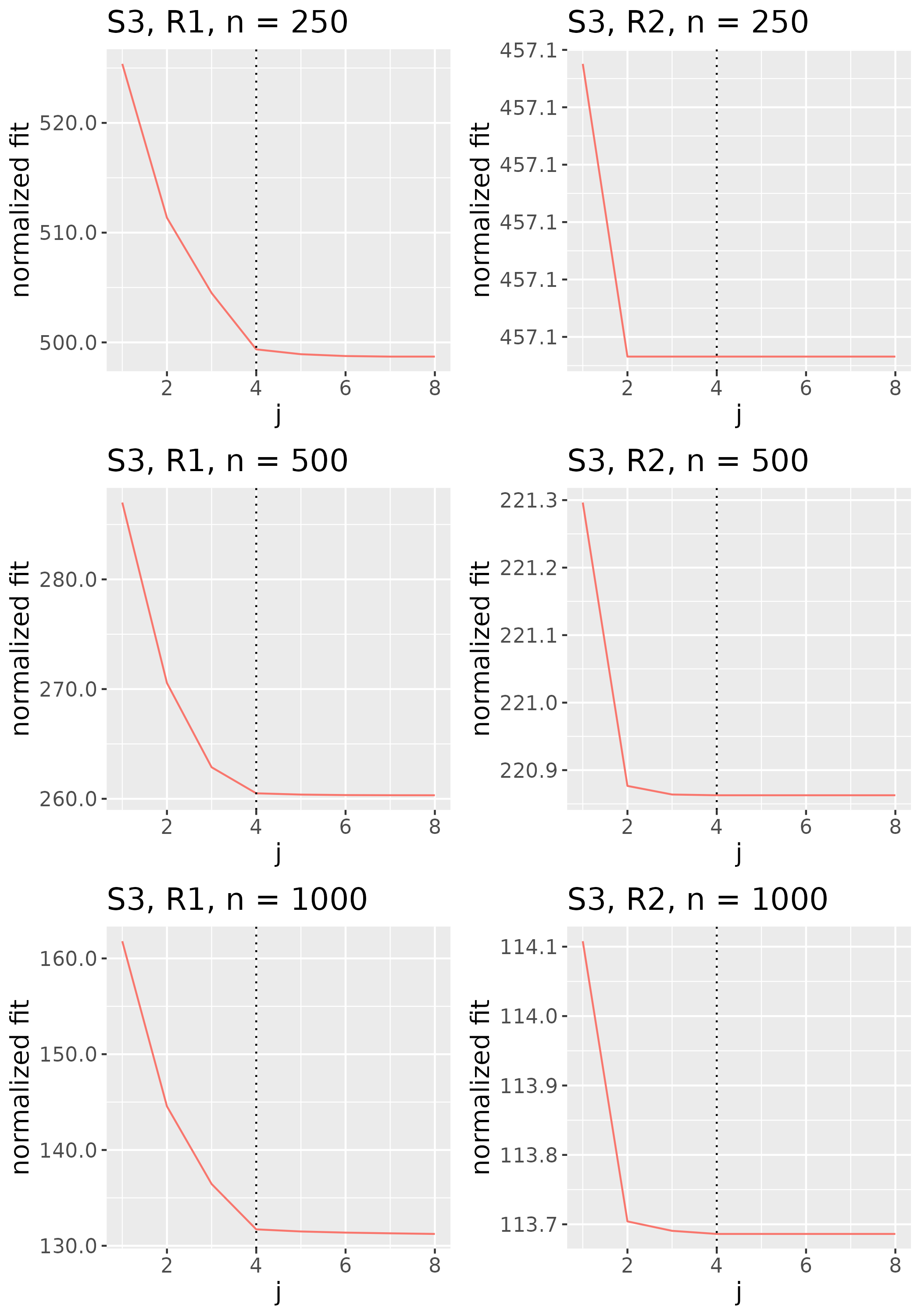}}\qquad%
\subfloat[Scenario S4]{\label{scree-plots-4-s1}\includegraphics[width=0.35\linewidth]{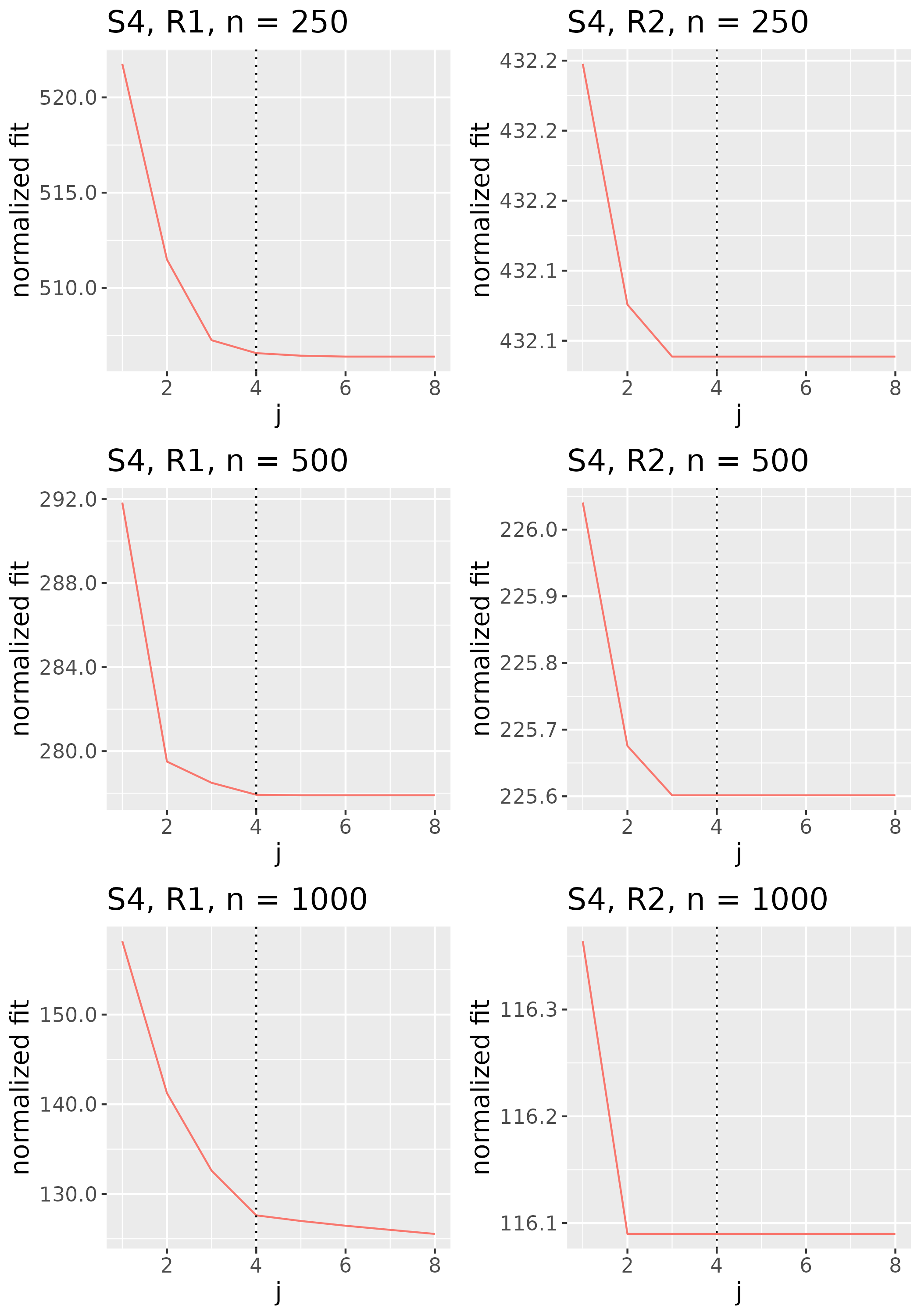}}%
\caption{Plots of the function $j \mapsto f(\hat{\theta}_j)$ for $K = 4$ and $\delta = 0.1$. Each plot is for some combination of scenario, regime, and sample size $n$.}
\label{fig:scree-plots-4}
\end{figure}

\section{AOMIC Data Analysis}
\label{sec:da-supp}

\subsection{Output from Higher-Order Models}
\label{sec:da-supp-out}

This section contains supplementary figures from the analysis of resting-state data from the Amsterdam Open MRI Collection (AOMIC) PIOP1 dataset in Section 5 of the manuscript. 

Figure \ref{fig:rs-whitening} demonstrates the effect of prewhitening on voxel time courses. Figure \ref{fig:rs-white-raw} shows a time course and corresponding autocorrelation function (acf) for three example voxels from one subject's preprocessed data, while Figure \ref{fig:rs-white-resid} displays each time course's ARIMA residuals scaled to unit variance with accompanying acf. 

\begin{figure}[!h]
\centering
\subfloat[Preprocessed BOLD.]{\label{fig:rs-white-raw}\includegraphics[width=0.35\linewidth]{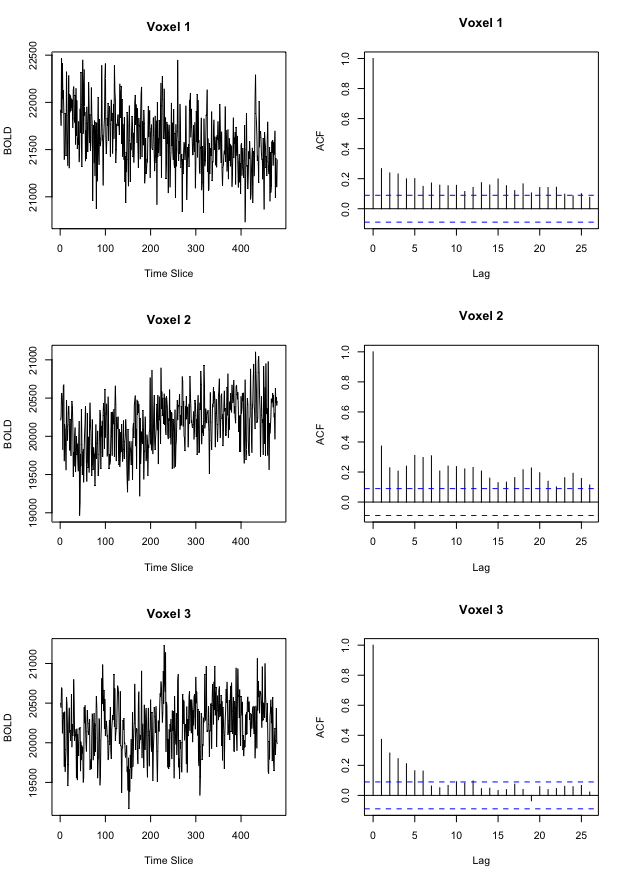}}\qquad
\subfloat[ARIMA residual.]{\label{fig:rs-white-resid}\includegraphics[width=0.35\linewidth]{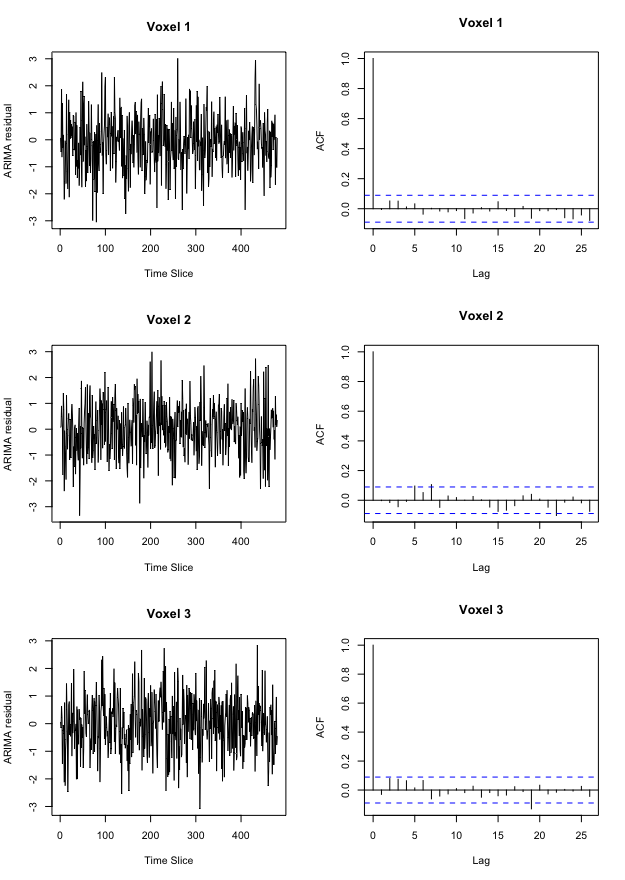}}
\caption{(a) The preprocessed BOLD time course and autocorrelation function (acf) for three example voxels whose time series exhibit varying degrees of nonstationarity and temporal correlation. (b) The scaled ARIMA residuals and acf for the three voxel time courses depicted in (a). In the acf plot of a white noise series, we expect 95\% of the nonzero-lag spikes to lie within the plot's blue dotted lines.}
\label{fig:rs-whitening}
\end{figure}

The left-hand plot of Figure \ref{fig:rs-scree} displays the non-increasing function $g(j) = \|\mcal{A} \circ (\hat{\mcal{C}}_n - \hat{\theta}_j)\|_F^2$ for $j = 1, \dots, 100$. To aid in identification of this plot's elbow, we also plot the ratios $r(i) = g(i) / g(i+1)$ for $i = 1, \dots, 99$ in Figure \ref{fig:rs-scree}. The estimated number of factors $\hat{K}$ should be the $j$ at which $g(j)$ levels off or, equivalently, the $i$ at which $r(i)$ becomes a constant close to 1. As discussed in Section 6 of the manuscript, we set $\hat{K} = 12$.

\begin{figure}[!h]
    \centering
    \includegraphics[scale=0.2]{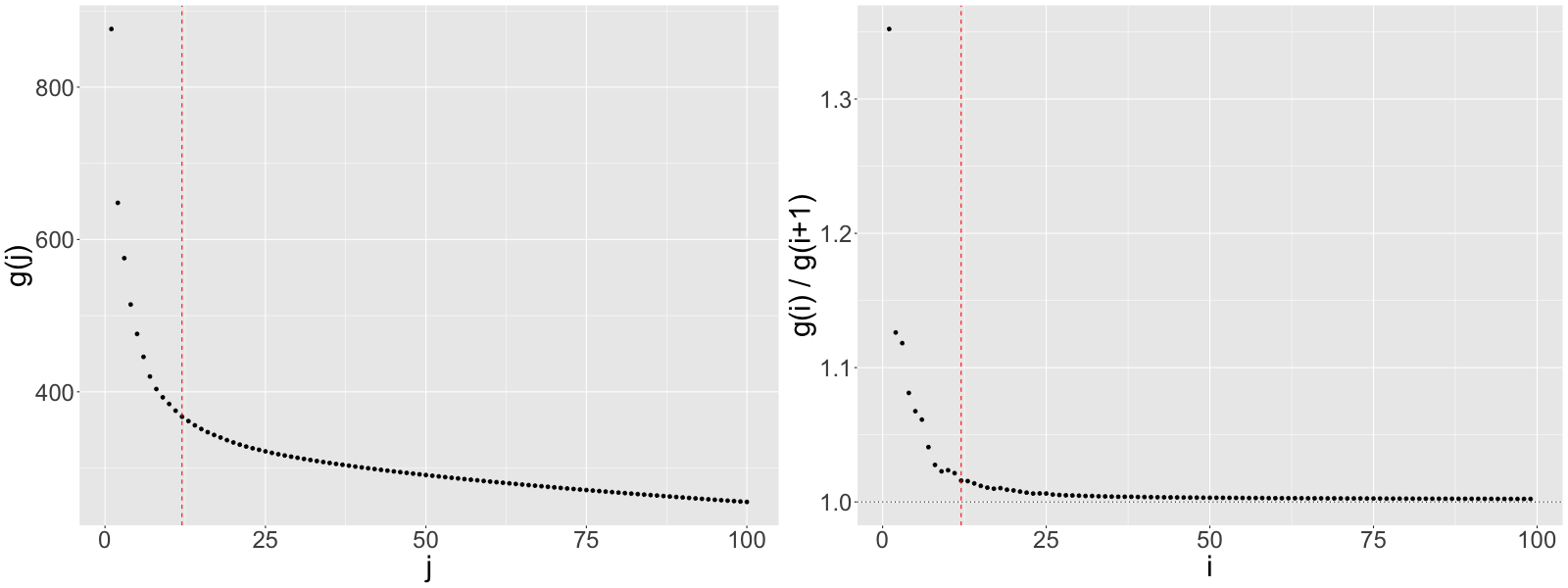}
    \caption{The scree-type plot (left) and ratio plot (right) used to select the number of factors in the functional factor analysis the AOMIC resting-state data.}
    \label{fig:rs-scree}
\end{figure}

Figure \ref{fig:rs-ffa-25} displays loading estimates from the 25-factor FFM while Figures \ref{fig:rs-ffa-50-1} and \ref{fig:rs-ffa-50-2} show those from the 50-factor FFM. Figure \ref{fig:rs-ica-25} displays independent component (IC) estimates from the 25-IC model while Figures \ref{fig:rs-ica-50-1} and \ref{fig:rs-ica-50-2} show those from the 50-IC model.

\begin{figure}[!h]
    \centering
    \includegraphics[width=0.8\linewidth]{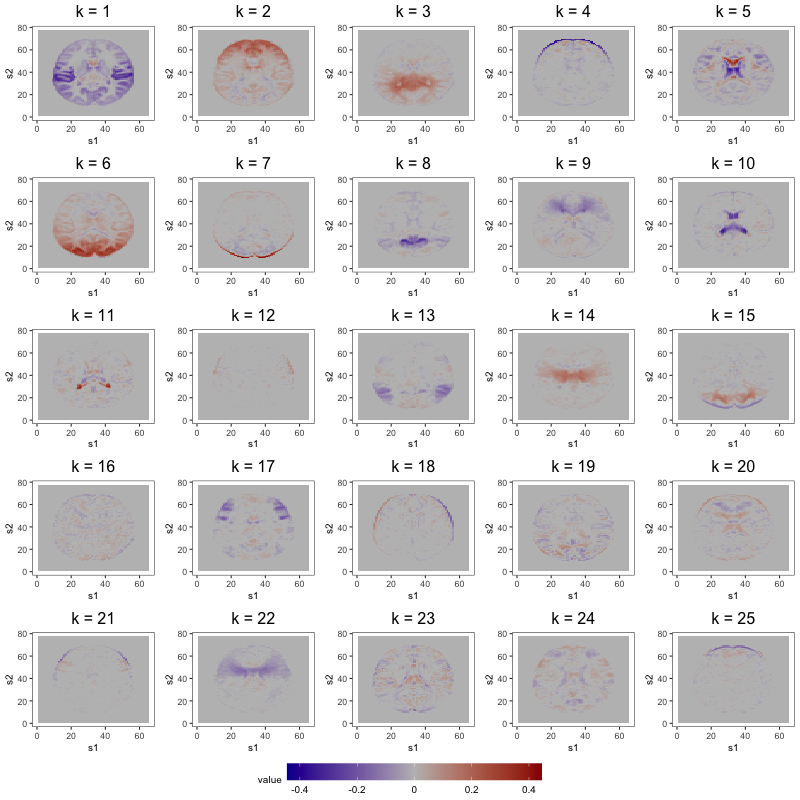}
    \caption{Loadings of the functional 25-factor model estimated from ARIMA residuals.}
    \label{fig:rs-ffa-25}
\end{figure}

\begin{figure}[!h]
    \centering
    \includegraphics[width=0.8\linewidth]{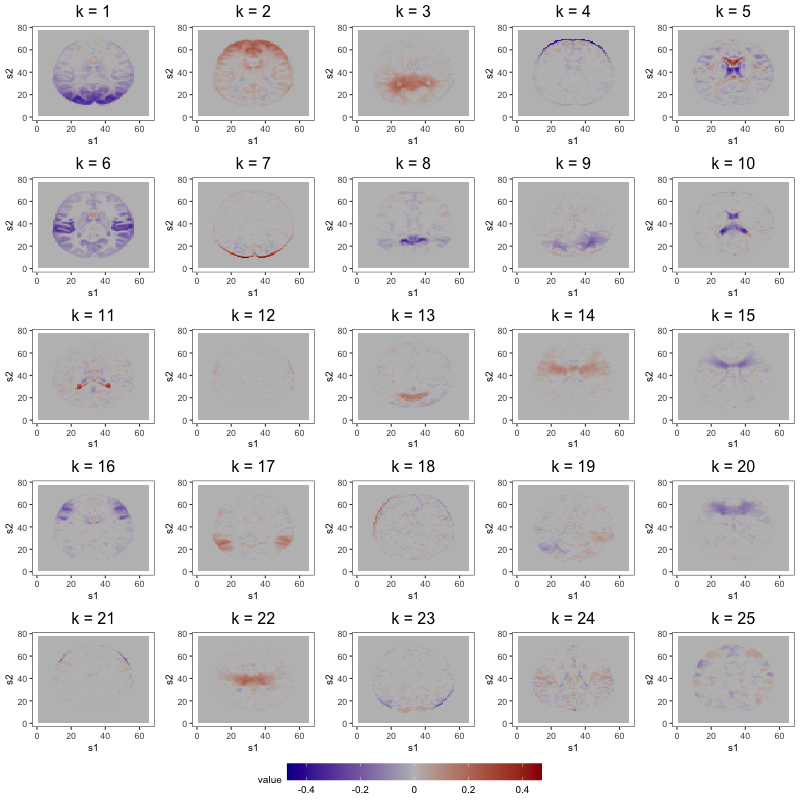}
    \caption{Loadings 1-25 of the functional 50-factor model estimated from ARIMA residuals.}
    \label{fig:rs-ffa-50-1}
\end{figure}

\begin{figure}[!h]
    \centering
    \includegraphics[width=0.8\linewidth]{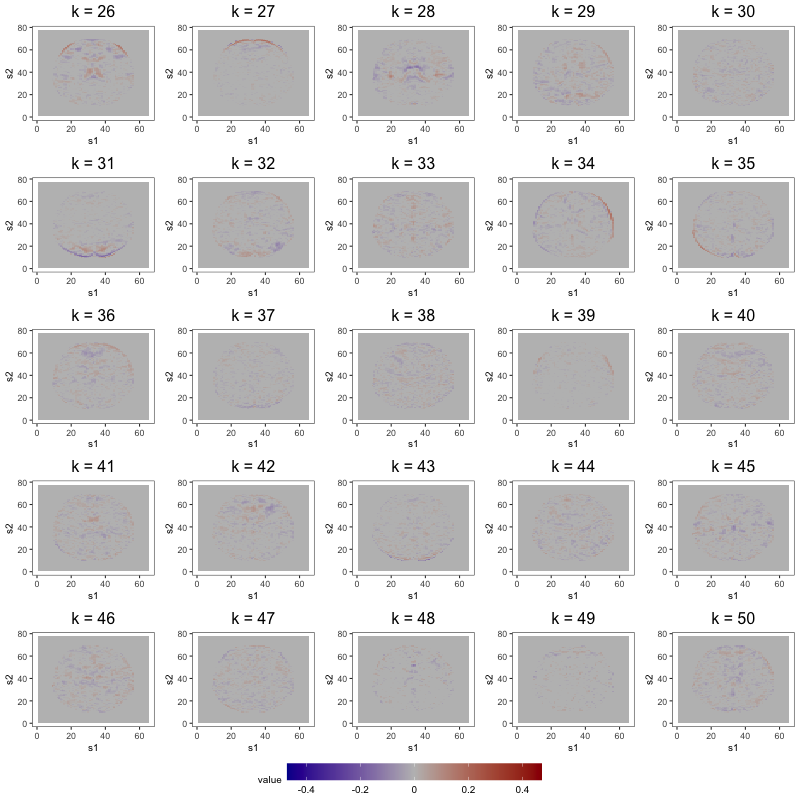}
    \caption{Loadings 26-50 of the functional 50-factor model estimated from ARIMA residuals.}
    \label{fig:rs-ffa-50-2}
\end{figure}

\begin{figure}[!h]
    \centering
    \includegraphics[width=0.8\linewidth]{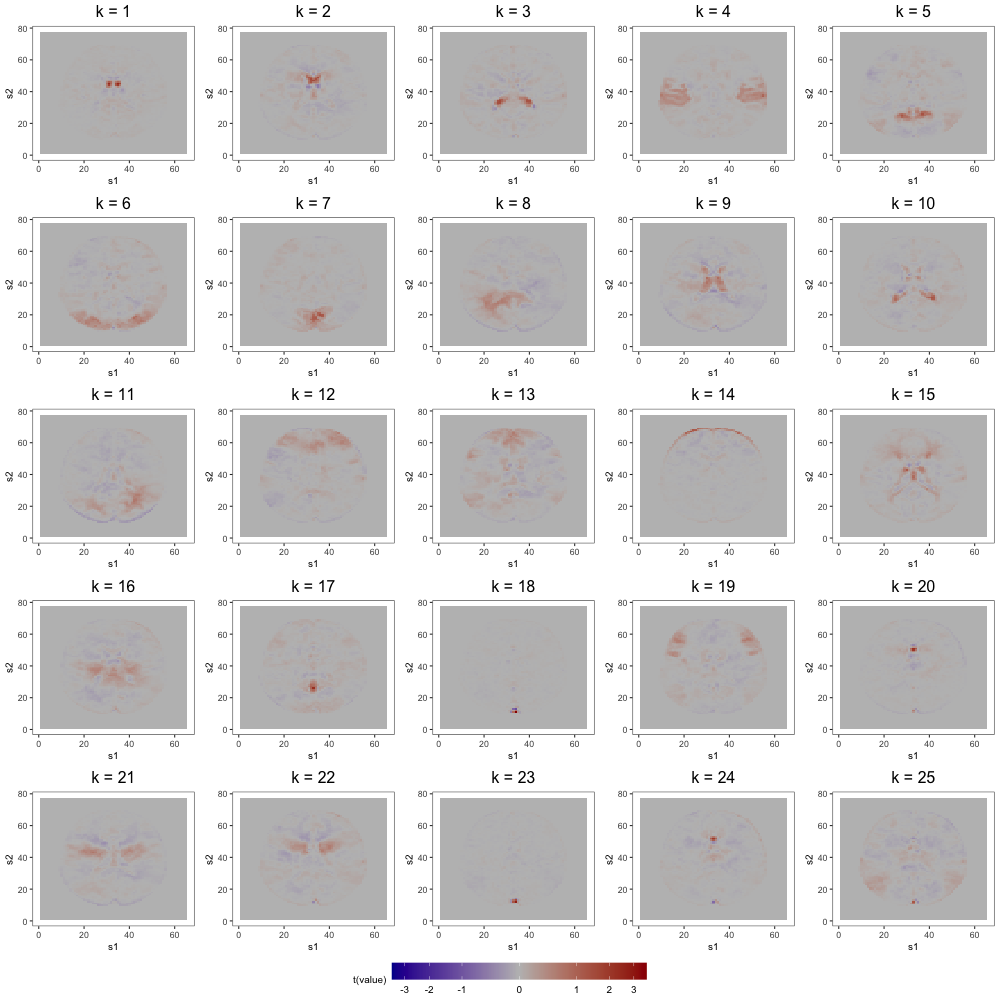}
    \caption{Transformed components ($t(x) = \text{sgn}(x) \log (\abs{x} + 1)$) of the 25-component IC model estimated from the preprocessed scans.}
    \label{fig:rs-ica-25}
\end{figure}

\begin{figure}[!h]
    \centering
    \includegraphics[width=0.8\linewidth]{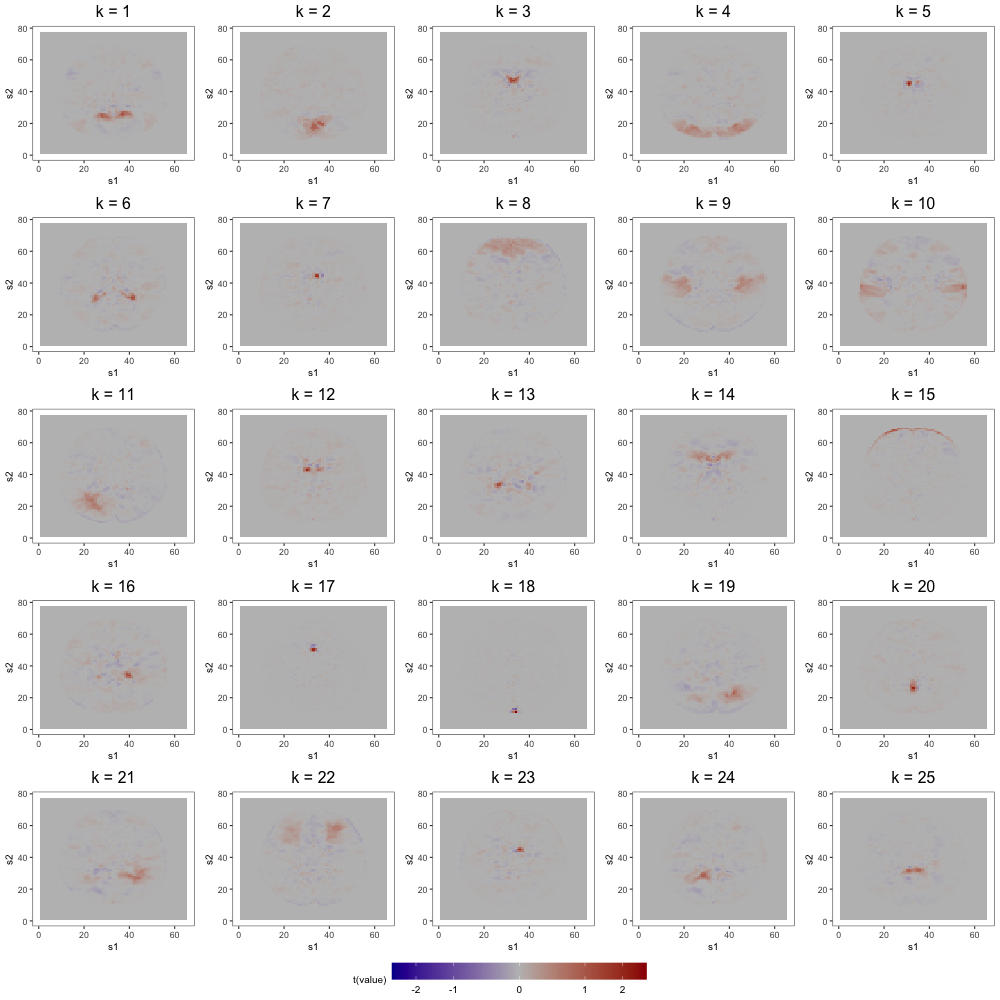}
    \caption{Transformed components ($t(x) = \text{sgn}(x) \log (\abs{x} + 1)$) 1-25 of the 50-component IC model estimated from the preprocessed scans.}
    \label{fig:rs-ica-50-1}
\end{figure}

\begin{figure}[!h]
    \centering
    \includegraphics[width=0.8\linewidth]{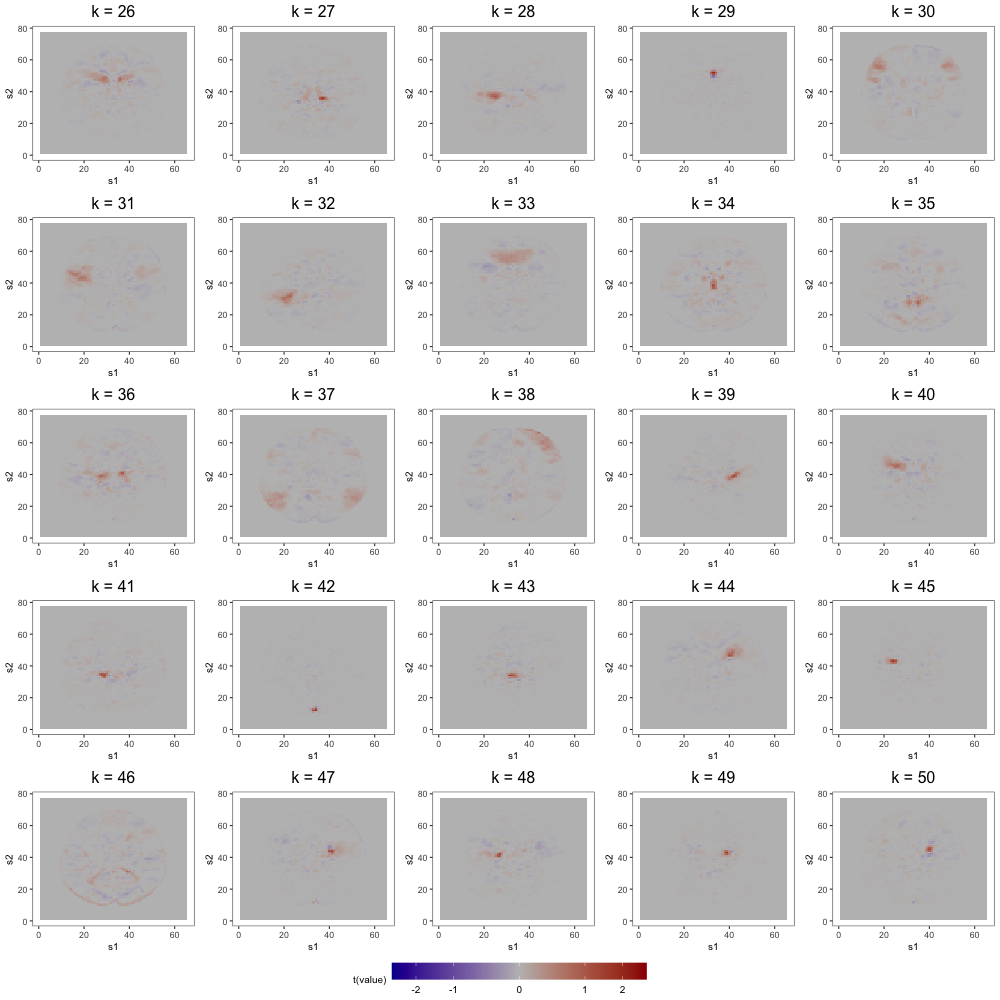}
    \caption{Transformed components ($t(x) = \text{sgn}(x) \log (\abs{x} + 1)$) 26-50 of the 50-component IC model estimated from the preprocessed scans.}
    \label{fig:rs-ica-50-2}
\end{figure}

\subsection{Multi-Scale ICA}
\label{sec:da-supp-iraji}

As described in Section 6 of the manuscript, multi-scale approaches, like that of \cite{iraji-etal-2023}, can remedy ICA's sensitivity to dimension misspecification. Adapting the method of \cite{iraji-etal-2023}
 to the AOMIC data, we first independently generate 25 random subsets of the full dataset, each containing the scans of 150 subjects. Next, we fit IC models of order 10, 20, 30, 40, and 50 to each subset, yielding 25 sets of 150 components. From the collection of all components, we select the 150 with the highest average spatial similarity (calculated by Pearson correlation) across the 25 sets, then identify the components among the 150 that are most distinct from each other (spatial similarity < 0.5). The final 45 components are displayed in Figures \ref{fig:rs-iraji-1} and \ref{fig:rs-iraji-2}. Among this final set is a single component ($k = 10$) containing the ventricle correlations of the first component from the 12-IC model, a structure splintered across several components of higher-order IC models. Comparing Figure 9b of the manuscript to Figures \ref{fig:rs-iraji-1} and \ref{fig:rs-iraji-2} confirms that other low-order structures are similarly preserved. As discussed in Section 6 of the manuscript, this solution to the fragmentation problem comes at the cost of additional computational resources and an interpretation more convoluted than that of approaches based on single IC models.  
 
\begin{figure}[!h]
    \centering
    \includegraphics[width=0.8\linewidth]{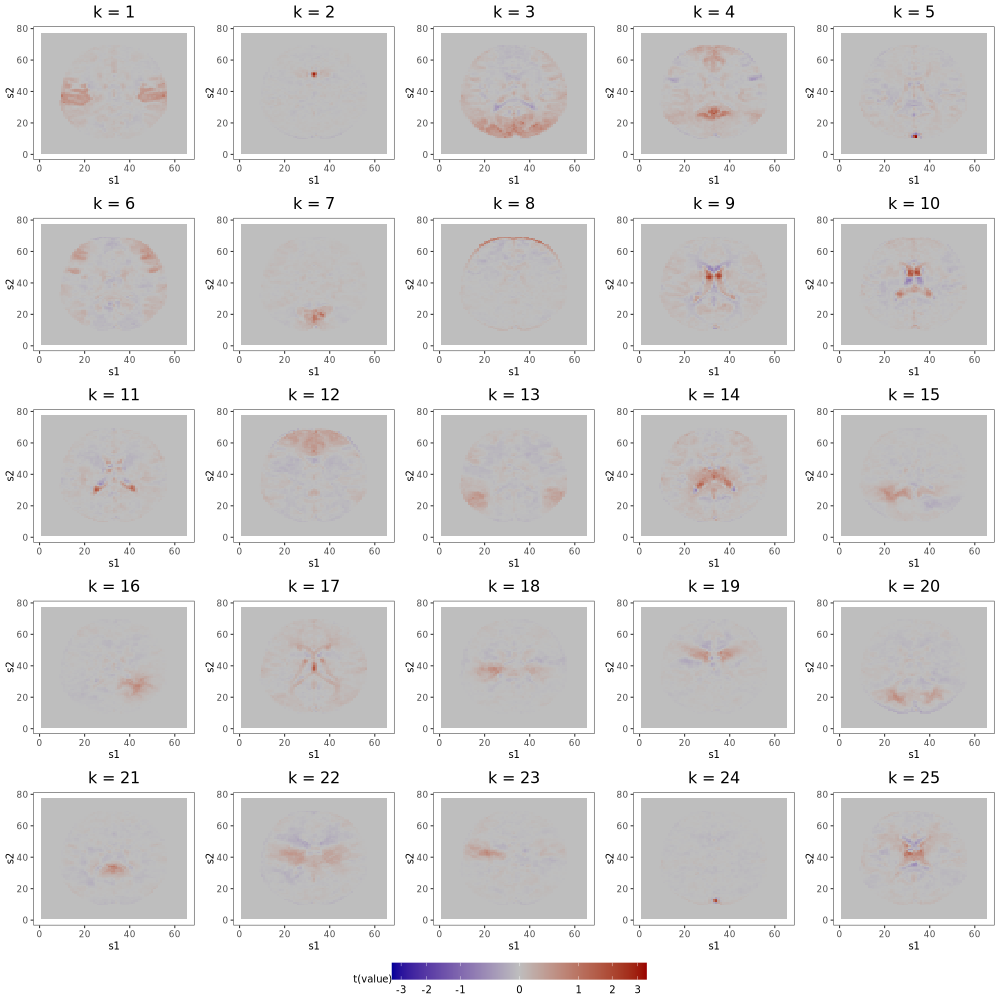}
    \caption{Transformed components ($t(x) = \text{sgn}(x) \log (\abs{x} + 1)$) 1-25 obtained via multi-scale ICA.}
    \label{fig:rs-iraji-1}
\end{figure}

\begin{figure}[!h]
    \centering
    \includegraphics[width=0.8\linewidth]{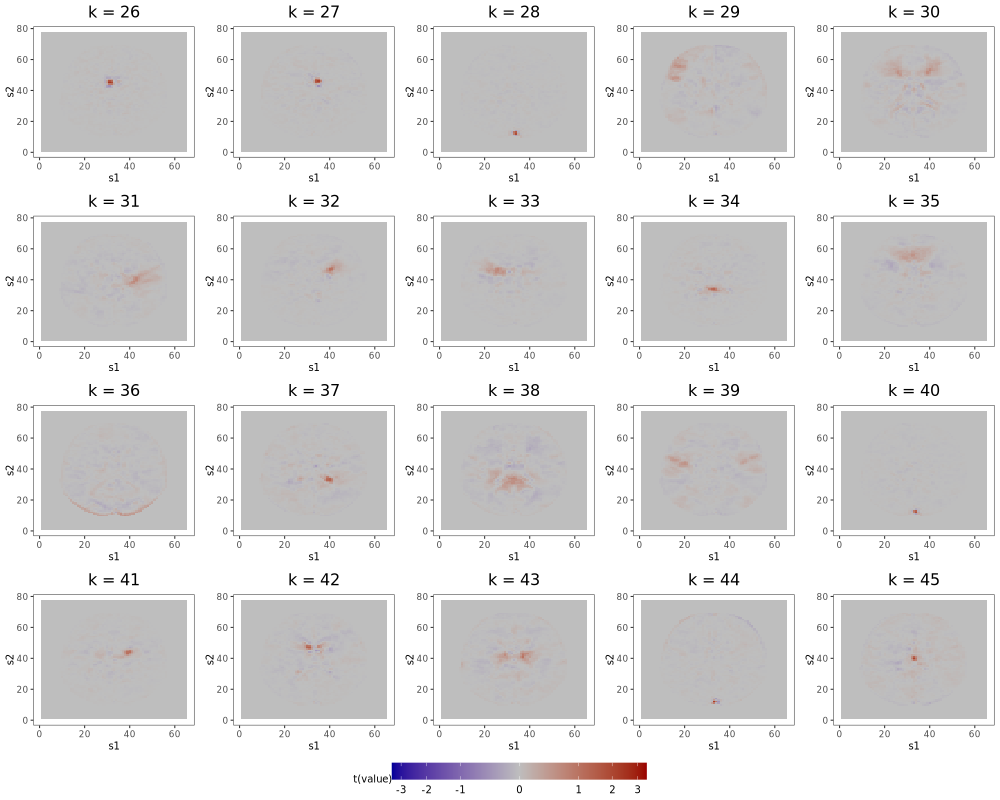}
    \caption{Transformed components ($t(x) = \text{sgn}(x) \log (\abs{x} + 1)$) 26-45 obtained via multi-scale ICA.}
    \label{fig:rs-iraji-2}
\end{figure}

\clearpage

\bibliographystyle{plainnat}
\bibliography{references}

\end{document}